\title{Approximating $(k,\ell)$-center clustering for curves}
\titlerunning{Approximating $(k,\ell)$-center clustering for curves} %optional, in case that the title is too long; the running title should fit into the top page column
\author{Kevin Buchin}{Department of Mathematics and Computing Science, TU Eindhoven, The Netherlands}{k.a.buchin@tue.nl}{}{}
\author{Anne Driemel}{Department of Mathematics and Computing Science, TU Eindhoven, The Netherlands}{a.driemel@tue.nl}{}{}
\author{Joachim Gudmundsson}{School of Information Technologies, The University of Sydney, Australia}{joachim.gudmundsson@sydney.edu.au}{}{}
\author{Michael Horton}{ % Department of Computer Science and Engineering,
    Tandon School of Engineering, New York University, USA\\School of Information Technologies, The University of Sydney, Australia}{michael.horton@nyu.edu}{}{} % {https://orcid.org/0000-0001-6388-9634}{}
\author{Irina Kostitsyna}{Department of Mathematics and Computing Science, TU Eindhoven, The Netherlands}{i.kostitsyna@tue.nl}{}{}
\author{Maarten L\"offler}{Department of Information and Computing Sciences, Utrecht University, The Netherlands}{m.loffler@uu.nl}{}{}
\author{Martijn Struijs}{Department of Mathematics and Computing Science, TU Eindhoven, The Netherlands}{m.a.c.struijs@student.tue.nl}{}{}
\authorrunning{Buchin, Driemel, Gudmundsson, Horton, Kostitsyna, L\"offler, and Struijs} %mandatory. First: Use abbreviated first/middle names. Second (only in severe cases): Use first author plus 'et. al.'
\subjclass{Theory of computation/Randomness, geometry and discrete structures/Computational geometry}% mandatory: Please choose ACM 1998 classifications from http://www.acm.org/about/class/ccs98-html . E.g., cite as "F.1.1 Models of Computation".
\keywords{curve, clustering, algorithms, hardness, approximation}% mandatory: Please provide 1-5 keywords
\newcommand{\R}{\mathbb{R}\xspace}
\providecommand{\eps}{\varepsilon}
\newcommand{\fd}{Fr\'echet distance\xspace}
\newcommand{\ls}{$\ell$-simplification\xspace}
\newcommand{\klc}[1][k]{$(#1,\ell)$-\textsc{Center}\xspace}
\newcommand{\myparNS}[1]{\noindent{\sffamily\bfseries #1.}}
\newcommand{\mypar}[1]{\medskip\myparNS{#1}}
\begin{document}

\maketitle

\begin{abstract}
The Euclidean $k$-center problem is a classical problem that has been extensively studied in computer science. Given a set $\mathcal{G}$ of $n$ points in Euclidean space, the problem is to determine a set $\mathcal{C}$ of $k$ centers (not necessarily part of $\mathcal{G}$) such that the maximum distance between a point in $\mathcal{G}$ and its nearest neighbor in $\mathcal{C}$ is minimized. In this paper we study the corresponding $(k,\ell)$-center problem for polygonal curves under the Fr\'echet distance, that is, given a set $\mathcal{G}$ of $n$ polygonal curves in $\R^d$, each of complexity $m$, determine a set $\mathcal{C}$ of $k$ polygonal curves in $\R^d$, each of complexity $\ell$, such that the maximum Fr\'echet distance of a curve in $\mathcal{G}$ to its closest curve in $\mathcal{C}$ is minimized.
In their 2016 paper, Driemel, Krivo{\v{s}}ija, and Sohler give a near-linear time $(1+\eps)$-approximation algorithm for one-dimensional curves, assuming that $k$ and $\ell$ are constants. In this paper, we substantially extend and improve the known approximation bounds for curves in dimension $2$ and higher. Our analysis thus extends to application-relevant input data such as GPS-trajectories and protein backbones.
We show that, if $\ell$ is part of the input, then there is no polynomial-time approximation scheme unless $\mathsf{P}=\mathsf{NP}$. Our constructions yield different bounds for one and two-dimensional curves and the discrete and continuous Fr\'echet distance. In the case of the discrete Fr\'echet distance on two-dimensional curves, we show hardness of approximation within a factor close to $2.598$.
This result also holds when $k=1$, and the $\mathsf{NP}$-hardness extends to the case that $\ell=\infty$, i.e., for the problem of computing the minimum-enclosing ball under the Fr\'echet distance. 
Finally, we observe that a careful adaptation of Gonzalez' algorithm in combination with a curve simplification yields a $3$-approximation in any dimension, provided that an optimal simplification can be computed exactly. We conclude that our approximation bounds are close to being tight.

\end{abstract}

% \michael[suggests]{Reorder sections?
% Would a more natural order be simplification, clustering, and then hardness?}

\section{Introduction}
Clustering is a fundamental task in data analysis. 
Grouping similar objects together enables the efficient summarization of large amounts of data, and the discovery of hidden patterns. A classical clustering problem is the Euclidean $k$-{\sc Center} problem~\cite{ap-eaac-02,bhi-accs-02,hn-ehblp-79}. Given a set $\mathcal{G}$ of $n$ points in Euclidean space, the problem is to determine a set $\mathcal{C}$ of $k$ centers (not necessarily part of $\mathcal{G}$) such that the maximum distance between a point in $\mathcal{G}$ and its nearest neighbor in $\mathcal{C}$ is minimized.
% As in the case of the $k$-{\sc Center} problem, most research on clustering focuses on clustering points. 
In general, most research on clustering focuses on point sets, and while many data can be interpreted as points, point clustering is less appropriate for sequentially-recorded data like time series and trajectories. Clustering such types of data is an active research topic~\cite{jp-fdc-14}, and in many cases generalizes the approach of assigning the data to the nearest cluster center from points to curves~\cite{acmm-uccub-03,cmp-fcis-07,gg-prcc-05,pg-ssts-12,pkg-gadtw-11}. However, despite considerable research on the problem of curve clustering, little is known about the algorithmic complexity of this problem.

In this paper we study approximation algorithms for the generalization of $k$-{\sc Center} clustering to curves.
As a distance measure between curves, the discrete and continuous Fr\'echet distance is used. The Fr\'echet distance~\cite{ag-cfdbt-95} is an effective distance measure for polygonal curves, as it takes into account the continuity of the curves, and it also works well for irregularly sampled curves. The computational complexity of the Fr\'echet distance has received considerable attention recently~\cite{bk-iafd-17,bm-adfd-16,bbmm-fswd-17}.
We will denote the continuous Fr\'echet distance between two curves $\gamma$ and $\gamma'$ in $\R^d$ by $d_F(\gamma,\gamma')$ and the discrete Fr\'echet distance by $d_{DF}(\gamma,\gamma')$.
In cases where either measure is valid, we use $d(\gamma,\gamma')$.

We consider the following problem:
Given a set of curves $\mathcal{G}$ in $\R^d$, the \klc problem asks to find the minimum (discrete or continuous) Fr\'echet distance $\delta$ for which there exist $k$ center curves, each of complexity at most $\ell$, % of complexity not more than $\ell$ each, 
such that each input curve is within % (discrete or continuous) Fr\'echet 
distance $\delta$ of at least one of the center curves. In other words, there exist $k$ curves $\{c_1,\dots,c_k\}$ in $\R^d$ each of complexity no greater than $\ell$, such that for any curve $\gamma\in\mathcal{G}$ there exists an $1\leq i\leq k$ with $d(\gamma,c_i)\leq\delta$. % $d_F(\gamma,c_i)\leq\delta$ (or $d_{DF}(\gamma,c_i)\leq\delta$).

Restricting the complexity of the center curves % has a favourable impact on 
improves the relevance of the obtained % result of the 
clustering for applications (see Figure~\ref{fig:pigeon-example}), and the performance of algorithms that make use of the resulting centers~\cite{pg-ssts-12,pkg-gadtw-11}. Indeed, by not restricting the complexity, overfitting will occur since the center curve is susceptible to picking up all the details of the input curves~\cite{aabosw-mcdfd-16,dks-cts-16}.

\begin{figure}[b]
  \begin{minipage}[c]{0.51\textwidth}
    \includegraphics[width=\textwidth]{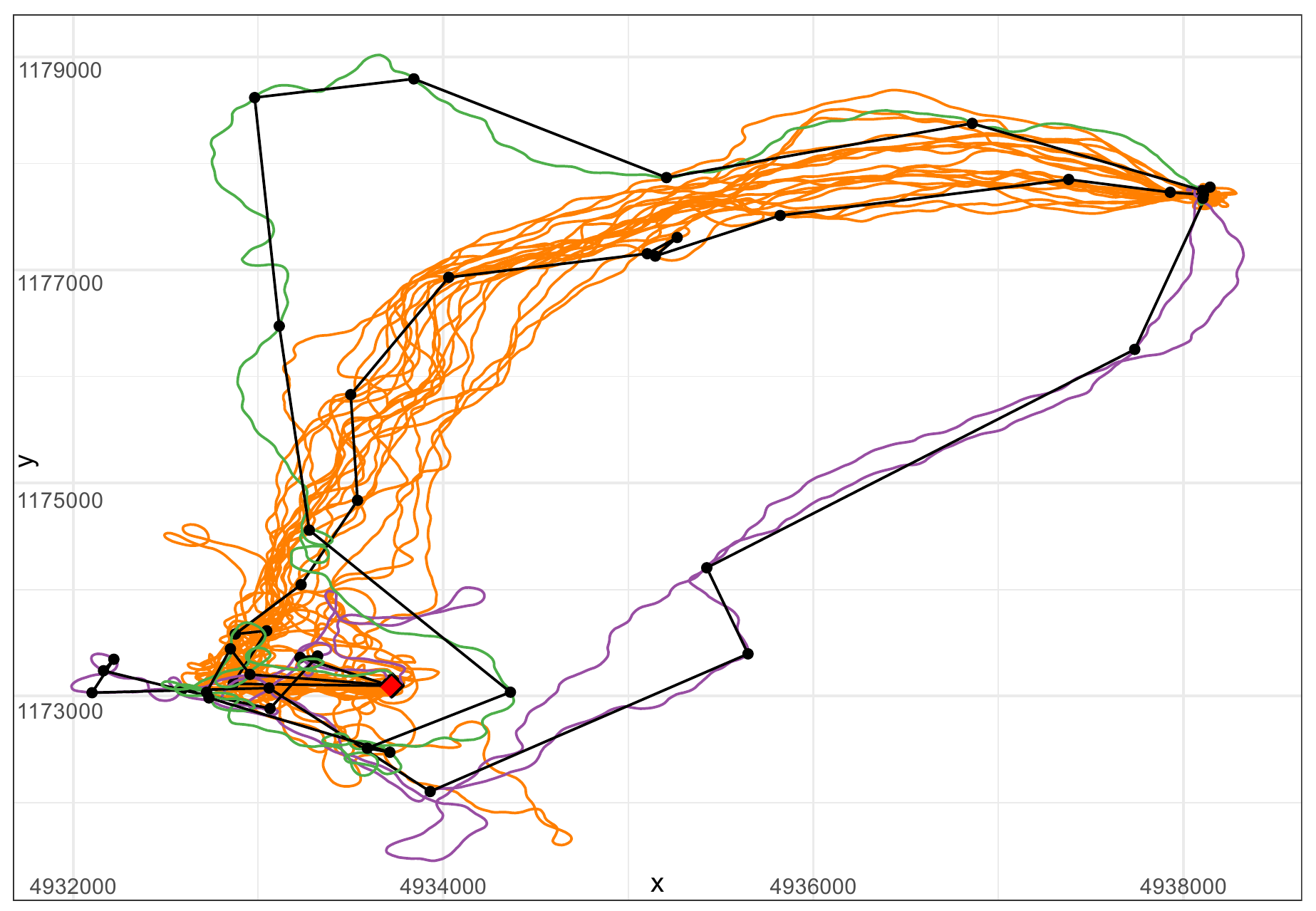}
  \end{minipage}\hfill
  \begin{minipage}[c]{0.47\textwidth}
\caption{$(3,15)$-center clustering of pigeon flight based on the algorithm in this paper. Data from~\cite{Mann20130885}, clustering by Natasja van de l'Isle. It has been posited that pigeons fly home following a series of landmarks. Therefore, low-complexity cluster centers % do 
      not only % allow to 
      summarize the data without overfitting, but may also aid in identifying important landmarks~\cite{Mann210}. The clusters are shown in green, orange and purple, the cluster centers are drawn in black. The pigeons flew from a release site (red diamond) to their home loft.}
\label{fig:pigeon-example}
  \end{minipage}
\end{figure}

The \klc problem was introduced and studied algorithmically by Driemel, Krivo\v{s}ija and Sohler~\cite{dks-cts-16} for one-dimensional curves. 
They show that the problem is $\mathsf{NP}$-hard and provide a $(1+\eps)$-approximation algorithm with a running time near-linear in the input size assuming $k,\ell \in O(1)$. They % Driemel et~al.\ 
also describe a near-linear time constant-factor approximation algorithm that works in any dimension.

%\section{Our Results}
\mypar{Our results}
 In this paper we significantly improve and extend the results of Driemel et~al.\ for the \klc.
%  An overview of the new results is given in the next section, see also Tables~\ref{table:hardness:results} and \ref{table:algorithms:results}.
% 
%%   
%In their 2016 paper, Driemel, Krivo{\v{s}}ija, and Sohler~\cite{dks-cts-16} give a near-linear time $(1+\eps)$-approximation algorithm for one-dimensional curves, assuming that $k$ and $\ell$ are constants. In this paper, we substantially extend and improve the known approximation bounds for curves in dimension $2$ and higher.
%    Driemel~et~al. % introduced the $(k,\ell)$-clustering algorithm. In the same paper they 
%    proved that the \klc clustering problem \emph{where $k$ is part of the input} is $\mathsf{NP}$-hard, and also hard to approximate within a factor of $2$ even if $\ell = 2$ and $d = 1$.
%(and only for the continuous \fd). \anne{The reduction in that paper also works for discrete \fd. It is more significant that k was assumed to be part of the input, while the new results hold even for k=1.}
    We show that the % $(k,\ell)$-clustering 
    \klc{} problem \emph{where $\ell$ is part of the input} is $\mathsf{NP}$-hard, even if $k=1$ and $d = 1$. %\michael{anything else?}.
    In particular, we prove hardness of approximation within different constant factors for various cases, improving the results by Driemel et~al. The bounds are summarized in Table~\ref{table:hardness:results}. Moreover, in Section~\ref{sec:meb} we show that when relaxing the complexity of the center curve, i.e., for $\ell=\infty$, the problem remains $\mathsf{NP}$-hard. Thus, the problem of computing the \emph{minimum enclosing ball} under the Fr\'echet distance is also hard. Note that any of the input curves serves as a $2$-approximation to the unconstrained $1$-{\sc center}.
    Interestingly, for the constrained case, we can show approximation bounds well above the factor $2$ if the curves lie in the plane. This shows that our problem is strictly harder than the $k$-{\sc center} problem in general discrete metric spaces. 

\begin{table}[t]
\centering
    \caption{The table shows new results on approximation factors for which it is hard to approximate the \klc problem unless $\mathsf{P}=\mathsf{NP}$, the bounds hold for any $\eps>0$ and even if $k=1$. }
\label{table:hardness:results}
\renewcommand{\arraystretch}{1.1}
\begin{tabular}{|l|ll|ll|}
\hline
    & \multicolumn{2}{c|}{Discrete \fd} & \multicolumn{2}{c|}{Continuous \fd} \\
 \hline
    $d = 1$ & 
    $2-\eps$ & (Theorem~\ref{thm:1dhardness:discrete}) &  
    $1.5-\eps$ & (Theorem~\ref{thm:1dhardness:cont}) \\
\hline
    $d \geq 2$ & 
    $2.598$ & (Theorem~\ref{thm:2dhardness:discrete}) & 
    $2.25-\eps$ & (Theorem~\ref{thm:2dhardness:continuous}) \\
\hline
\end{tabular}
\end{table}

\begin{table}[t]
\centering
    \caption{Summary of the running times of approximation algorithms for \klc problem. The notation $\widetilde{O}(\cdot)$ hides logarithmic factors.}
    \renewcommand{\arraystretch}{1.1}
\label{table:algorithms:results}
    \begin{tabular}{|l|c|cc|c|cc|}
    % \begin{tabular}{|l|m{3cm}|cc|m{3cm}|cc|}
    \hline
    & \multicolumn{3}{c|}{Discrete \fd} & \multicolumn{3}{c|}{Continuous \fd} \\
        \cline{2-7}
        & 
        \multicolumn{1}{c|}{Time} & \multicolumn{2}{c|}{Approximation} &
        \multicolumn{1}{c|}{Time} & \multicolumn{2}{c|}{Approximation} \\
 \hline
    $d=1$ & 
        \multicolumn{1}{c|}{--} & \multicolumn{2}{c|}{--} & 
        $\widetilde{O}(mn)$\footnotemark[2] & $1+\eps$ & \cite{dks-cts-16} \\
    \hline
    $d\geq 1$ & 
%        \multicolumn{1}{c|}{--} & \multicolumn{2}{c|}{--} & 
        ${O}(poly(m,n,k,\ell))$\footnotemark[3] & $5$ & \cite{dks-cts-16} &
        ${O}(poly(m,n,k,\ell))$ & $8$ & \cite{dks-cts-16} \\
    \hline
    $d=2$ & 
        \multicolumn{1}{c|}{--} & \multicolumn{2}{c|}{--} & 
        $\widetilde{O}(km(n \ell + m^2))$ & $3$ & (Theorem~\ref{cor:alg:impr}) \\
    \hline
    $d\geq 2$ & 
        $\widetilde{O}(kn \cdot \ell m )$ & $3$ & (Corollary~\ref{cor:alg:basic}) & 
        $\widetilde{O}(km(n \ell + m^2))$ & $6$ & (Corollary~\ref{cor:alg:basic}) \\
    % $d=1$ & 
    %     \multicolumn{1}{c|}{--} & \multicolumn{2}{c|}{--} & 
    %     $O(mn\log{m})$ & $1+\eps$ & \cite{dks-cts-16} \\
    % \hline
    % $d=2$ & 
    %     $O(kn \cdot \ell m \log (\ell+m))$ & $3$ & (Cor~\ref{cor:alg:basic}) & 
    %     $O(km(n \ell \log(\ell+m)+ m^2\log{m}))$ & $3$ & (Cor~\ref{cor:alg:impr}) \\
    % \hline
    % $d>2$ & 
    %     \multicolumn{1}{c|}{---\texttt{"}---} & \multicolumn{2}{c|}{---\texttt{"}---} & 
    %     \multicolumn{1}{c|}{---\texttt{"}---} & $6$ & (Cor~\ref{cor:alg:basic}) \\
    \hline
\end{tabular}
\end{table}
\footnotetext[2]{{Assuming $k,\ell \in O(1)$.}}
\footnotetext[3]{{Running time is not explicitly stated in the paper.}}

    At the same time, our algorithmic results show that one can efficiently compute a $3$-approximation in this case.
    The $(1+\eps)$-approximation algorithm described by  Driemel~et~al.~\cite{dks-cts-16} has running time exponential in both parameters~$k$ and~$\ell$. Our hardness results indicate that the exponential dependency on $\ell$ may be unavoidable for small approximation factors. Driemel~et~al.\ also describe a faster constant-factor approximation algorithm that uses Gonzalez' algorithm in combination with a simplification algorithm, achieving an approximation factor of $\alpha+\beta+\alpha\beta$, where $\alpha$ is the approximation factor of the simplification and $\beta$ is the approximation factor of the metric clustering algorithm. Our algorithm uses a slightly different adaptation of Gonzalez' algorithm, and we analyze the scheme in spaces of arbitrary dimension, and for both the discrete and continuous \fd{}s. 
The approximation factors and running times of these algorithms are summarized in Table~\ref{table:algorithms:results}.
%It remains an open problem to compute an optimal exact $(k,\ell)$-center clustering for polygonal curves in the plane. 

%\anne{I included the constant-factor approximation of the other paper in the table. The running times for higher dimensions are not explicitly stated in the paper, but the analysis of approximation factors just goes through. The runnning time is not explicitly analyzed for higher dimensions, because it depends on the simplification algorithm used. I think it is better to include them for the sake of completeness and since the techniques are so similar.}

% \michael[suggests]{we consistently name the problem. 
% At various points we use ``{\sc$(k,\ell)$-center Clustering Problem}'', ``$(k,\ell)$-center clustering problem'' and ``$(k,\ell)$-center problem''.}\anne{feel free to change it, I don't have a preference which one to use.}

% \input{summary}
% \input{hardness1D}
% \input{hardness2D}
% \input{hardness2D-appendix}
% \input{hardness-MEB}
% \input{approx}
% \input{approx-appendix}
% \input{acknowledgements}

%\mypar{Overview of paper}
The paper is organized so as to provide an quick overview of the results and techniques with additional details in later sections. Section~\ref{sec:basic:algorithm:sketch} provides a sketch of our algorithm with additional details in Sections~\ref{sec:clustering} and~\ref{sec:imp-apx-cl}. A sketch of the 1D hardness construction is given in Section~\ref{sec:1d}, the full 2D construction is given in Section~\ref{sec:2dhard} and the extension to the minimum enclosing ball problem in Section~\ref{sec:meb}.

\section{Preliminaries}
% \anne{suggested: definitions of discrete and continuous frechet distance, formal problem statement of (k,l)-clustering, the various definitions of simplification.}

Let $\gamma$ be a polygonal curve given by a sequence of vertices $P := \langle p_1,\dotsc,p_m \rangle$. The curve is defined as a parametric curve through the vertices, connecting each contiguous pair of vertices in $P$ by a straight edge $\overline{p_i p_{i+1}}$.
Let the \emph{complexity} $\vert \gamma \vert$ of $\gamma$ be the number of vertices in the sequence, i.e. $\vert P \vert = m$.

The \emph{\fd} measures the distance between two curves $\gamma$ and $\gamma'$, and we consider two variants.
The \emph{continuous} \fd is defined using a reparameterization $f \colon [0,1] \to [0,1]$ that is a  continuous injective function where $f(0) \equiv 0$ and $f(1) \equiv 1$. Let $\mathcal{F}$ be the family of all such reparameterizations, then the continuous \fd is defined as \[d_F(\gamma,\gamma') = \inf_{f \in \mathcal{F}} \max_{\alpha \in [0,1]} \lVert \gamma(f(\alpha)) - \gamma'(\alpha) \rVert,\]
where $\lVert \cdot \rVert$ is the Euclidean norm.
% The typical analogy is of a dog traversing one curve and its owner the other, where each can proceed at a non-negative speed.
% The dog and owner are connected by a lead that must be sufficiently long to allow them both to traverse their paths, and the \fd is the length of the shortest lead for which they both can traverse their curves.

The \emph{discrete} \fd is defined by an alignment between the vertex sequences $P$ and $P'$ of two curves.
Let $T$ be a sequence of pairs of indices $\langle (i_1,j_1),\cdots,(i_t,j_t)) \rangle$ where $i_1 = 1, j_1 = 1, i_t = \vert P \vert, \text{ and }j_t = \vert P' \vert$.
Let $\mathcal{T}$ be the family of all such alignments $T$.
For each pair $(i_s,j_s)$ where $1 < s \leq t$, one of the following holds:
\begin{inparaenum}[(i)]
    \item{$i_s = i_{s-1}$ and $j_s = j_{s-1} + 1$, or}
    \item{$i_s = i_{s-1} + 1$ and $j_s = j_{s-1}$, or}
    \item{$i_s = i_{s-1} + 1$ and $j_s = j_{s-1} + 1$.}
\end{inparaenum}
The discrete \fd is defined as:
\[ d_{DF}(\gamma, \gamma') = \inf_{T \in \mathcal{T}} \max_{(i,j) \in T} \lVert p_i - p_j'\rVert .\]

% Finally, in situations where either variant of the \fd is valid, denote the distance as $d(\cdot,\cdot)$. % the \fd.

The discrete \fd is metric, and the continuous \fd is \emph{pseudo-metric} as two distinct curves may have zero distance~\cite{ag-cfdbt-95}.
% A metric space may be obtained by defining equivalence classes on curves whose \fd is zero, 
However, for our purposes it suffices that the triangle inequality and symmetry properties are satisfied.
% \michael[asks]{Is it?}\anne{I think so, because we don't claim that the solution is unique.}

% \michael{Related work for simplification...}
% \anne{moved this here to improve the flow..}
The key to our approximation results is a careful adaptation of Gonzalez' algorithm in combination with curve simplification algorithms.
% Curve simplification is a well-studied problem that stands on its own in computational geometry, where, 
%The problem is to find, given a curve with $m$ vertices, a simplified curve of $\ell < m$ vertices, such that the distance between the curves is small.
Given a curve $\gamma$, a \emph{simplification} $\overline{\gamma}$ is a curve defined by a vertex sequence $\langle v_1,\dotsc,v_{\ell} \rangle$ such that the complexity of $\overline{\gamma}$ is less than that of $\gamma$ and the distance $d$---or \emph{error}---between the $\gamma$ and $\overline{\gamma}$ is small.
Simplification is a bicriteria operation, and thus there are two optimisation problems: % of the optimal simplification of a curve:
\begin{compactitem}[(i)]
\item{\emph{minimum-complexity $\eps$-simplification:} given $\gamma$ and an error $\eps > 0$, find the simplification $\overline{\gamma}$ such that $d(\gamma,\overline{\gamma}) \leq \eps$ and $\vert \overline{\gamma} \vert$ is minimised, and}
\item{\emph{minimum-error $\ell$-simplification:} given $\gamma$ and an integer $0 < \ell < m$, find the simplification $\overline{\gamma}$ such that $\vert \overline{\gamma} \vert \leq \ell$ and the error $d(\gamma,\overline{\gamma})$ is minimised.}
\end{compactitem}
The problems are also referred to as min-$\#$ and min-$\eps$ respectively in some works~\cite{bjwyz-spcdfd-08,g-anmfc-91,ii-pac-88}.

There are many variants of the simplification problem.
In particular, the simplification may be 
\emph{vertex-constrained}, i.e., the vertex sequence of $\overline{\gamma}$ is required to be a subsequence of the vertex sequence of $\gamma$ with $v_1 = p_1$, $v_\ell = p_m$ and $d(\gamma, \overline{\gamma}) = \max_{1\leq i < \ell} 
\overline{d} (\overline{v_i,v_{i+1}},\gamma[v_i,v_{i+1}])$, where $\gamma[v_i,v_{i+1}]$ is the subcurve of $\gamma$ between $v_i$ and $v_{i+1}$, and $\overline{d}$ is $d_F$, $d_{DF}$ (or in previous work~\cite{dp-arnp-73,ii-pac-88} also
the Hausdorff distance).
In this paper, if not stated otherwise, we
consider \emph{weak}
simplifications~\cite{ahmw-nltaa-05}, i.e., simplifications with no restrictions (and with $d$ as $d_F$ or $d_{DF}$).%, following the terminology of Agarwal~et~al.

Curve simplification is a well-studied problem that stands on its own in computational geometry.
Two early contributions were the simplification algorithms by Douglas and Peuker~\cite{dp-arnp-73} and Imai and Iri~\cite{ii-pac-88}, which both induce a vertex-constrained simplification. % simplified curve whose vertices are subsequence of the input curve's vertices---a \emph{vertex-constrained} simplification.
Godau~\cite{g-anmfc-91} studied the problem of computing a simplification under the \fd, and observed that a simplification is a bicriteria approximation. %, and defined two problems: 
% For a given input, find a simplification with:
% \begin{inparaenum}[(i)]
% \item{\emph{min-$\#$:} find a simplification of minimal complexity for a given error bound, and}
% \item{\emph{min-$\eps$:} find a simplification of minimal error for a given complexity bound.}
% \end{inparaenum}
Godau presented modifications to the simplification algorithm in~\cite{ii-pac-88} to solve the vertex-constrained min-$\#$ and min-$\eps$ problems in $O(m^3)$ and $O(m^4\log{m})$ time, respectively. % , where $m$ is the complexity of the input curve.
Furthermore, the result from the min-$\eps$ algorithm was shown to be a $7$-approximation of the optimal \emph{weak} simplification. % where no restrictions are placed on the location of the vertices of the simplification, defined as a \emph{weak} simplification by 
Agarwal~et~al.~\cite{ahmw-nltaa-05} subsequently improved this approximation bound to $4$.

Guibas~et~al.~\cite{ghms-91} presented an $O(n^2 \log^2{n})$ time algorithm for computing the min-$\#$ weak simplification under the continuous \fd in $\R^2$, and Bereg~et~al.~\cite{bjwyz-spcdfd-08} gave an $O(n\log{n})$ algorithm for the min-$\#$ problem, and a $O(\ell m \log{m} \log{(m/\ell)})$ algorithm for the min-$\eps$ problem, in the discrete \fd and in spaces of arbitrary dimension.

\section{Sketch of the Basic Algorithm}\label{sec:basic:algorithm:sketch}

Our basic algorithm for the \klc problem is a careful adaptation of  the clustering algorithm for points in a metric space by Gonzalez~\cite{g-cmmid-85}. 
The general idea of Gonzalez' algorithm is to iteratively compute a set $\mathcal{C}$ of $k$ centers, where $\mathcal{C}$ 
is a subset of the input points.  
In the first iteration, the algorithm selects an arbitrary input point as a center by adding it to the (initially empty) set of centers $\mathcal{C}$.
In each of the subsequent $k-1$ iterations, the input point that is farthest from all the centers selected so far is identified and added to $\mathcal{C}$. 
Gonzalez~\cite{g-cmmid-85} 
showed that this simple greedy approach is a $2$-approximation algorithm for the $k$-center clustering problem on points in a metric space.

We modify Gonzalez' algorithm to compute an approximate solution to the \klc problem for curves under the \fd.
The algorithm operates in essentially the same way as the original Gonzalez' algorithm but on a set of curves as input. 
The distance between the curves is measured using the \fd and, in each iteration, the $\ell$-simplification of the selected curve is added to $\mathcal{C}$.
The details of this algorithm are described in Section~\ref{sec:clustering}.
The approximation factor of the algorithm can be improved using standard techniques, refer to Section~\ref{sec:imp-apx-cl} for the details. We get the following algorithmic result.

\begin{restatable}{theorem}{improvedalgresult}
\label{cor:alg:impr}
    Given $n$ polygonal input curves in the plane, each of complexity $m$, and positive integers $k,\ell$, one can compute a $3$-approximation to the \klc problem in time $O(km(n \ell \log(\ell+m)+ m^2\log{m}))$.
\end{restatable}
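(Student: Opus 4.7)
The plan is to apply the basic Gonzalez-style algorithm of Section~\ref{sec:basic:algorithm:sketch}, but to replace the approximate $\ell$-simplification subroutine with an \emph{exact} weak $\ell$-simplification under the continuous \fd. In the plane this is possible by wrapping the min-$\#$ algorithm of Guibas~et~al.~\cite{ghms-91} in a search over its critical error values, and the cost will be charged against the $m^2\log m$ factor per iteration appearing in the stated running-time bound.

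For the approximation guarantee, I would strengthen the standard Gonzalez argument to account for simplification. Let $\delta^*$ denote the optimal \klc radius, realised by centres $c_1^*,\dots,c_k^*$ of complexity at most~$\ell$. Because each optimal centre is itself an $\ell$-vertex curve within $\delta^*$ of every curve it serves, the exact $\ell$-simplification $\overline{\gamma}$ computed by the algorithm satisfies $d_F(\gamma,\overline{\gamma})\le\delta^*$ for every input curve $\gamma$. Let $\gamma_1,\dots,\gamma_k$ be the curves chosen greedily, let $\gamma_{k+1}$ be the next-farthest input curve, and let $\mathcal{C}=\{\overline{\gamma_1},\dots,\overline{\gamma_k}\}$ be the final centre set. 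By the pigeonhole principle two of the $k+1$ curves, say $\gamma_i$ and $\gamma_j$ with $i<j$, lie in the same optimal cluster, so $d_F(\gamma_i,\gamma_j)\le 2\delta^*$ by the triangle inequality. Combining this with the greedy selection rule (so the distance from $\gamma_{k+1}$ to the centre set at any iteration is at most the distance from the curve picked in that iteration) yields
\[
d_F(\gamma_{k+1},\mathcal{C})\;\le\;d_F(\gamma_j,\overline{\gamma_i})\;\le\;d_F(\gamma_j,\gamma_i)+d_F(\gamma_i,\overline{\gamma_i})\;\le\;2\delta^*+\delta^*=3\delta^*,
\]
which upper-bounds the radius of the solution produced by the algorithm.

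For the running time, I would maintain for each input curve its current smallest distance to a centre and refresh it with a single continuous Fr\'echet evaluation whenever a new centre is added. Using the Alt--Godau algorithm, one such call costs $O(m\ell\log(\ell+m))$; summed over all $n$ curves and $k$ iterations this contributes the $knm\ell\log(\ell+m)$ term. The $k$ exact planar $\ell$-simplifications contribute the remaining $km^3\log m$ term via the wrapped min-$\#$ machinery of~\cite{ghms-91}, and together the two summands match the claimed bound.

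The main obstacle is extracting an \emph{exact} min-$\epsilon$ simplification within the allotted budget: since~\cite{ghms-91} solves only the min-$\#$ variant, one must either parametric-search over $\epsilon$ or explicitly enumerate the critical values at which the minimum-complexity simplification drops to at most~$\ell$, and show that the resulting overhead is absorbed by the per-iteration budget implicit in the stated $m^2\log m$ factor rather than costing additional polynomial factors.
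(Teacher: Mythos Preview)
Your approximation analysis is correct and essentially the $c=1$ case of Theorem~\ref{thm:c2:approx}, but the route you propose is genuinely different from the paper's and runs into the obstacle you yourself flag. You want to feed an \emph{exact} weak minimum-error $\ell$-simplification into the basic Gonzalez scheme, and you hope to obtain it by wrapping the Guibas--Hershberger--Mitchell--Snoeyink min-$\#$ algorithm in a parametric search or an enumeration of critical $\eps$-values. The difficulty is that no such exact min-$\eps$ weak simplification algorithm is established in the paper (or, to my knowledge, in the literature) within the $O(m^3\log m)$ budget you allocate. For weak simplifications the vertices of $\overline\gamma$ are unconstrained points in the plane, so the set of critical $\eps$-values is governed by the combinatorial structure of the stabbing-number machinery in~\cite{ghms-91}; bounding and enumerating those values, or setting up a parametric search whose overhead stays inside the stated bound, is a separate result you would have to prove. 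As written, this step is a genuine gap.

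The paper avoids this problem entirely by a different decomposition. Instead of computing an exact $\ell$-simplification, it builds an \emph{approximate decision} procedure (Lemma~\ref{lem:3:approx:dec}): for a fixed candidate radius $\delta$, run Gonzalez but in each iteration compute the minimum-complexity $\delta$-simplification via~\cite{ghms-91}; abort if the result has more than $\ell$ vertices. If the procedure completes, the analysis you wrote (with $d_F(\overline{\gamma_i},\gamma_i)\le\delta$ by construction) gives cost $\le 3\delta$; if it aborts, $\Delta^*>\delta$. One then calls the $6$-approximation of Corollary~\ref{cor:alg:basic} once to get an interval for $\Delta^*$ and invokes the decider $O(\log 6/\log 3)=O(1)$ times on geometrically spaced $\delta$'s (Theorem~\ref{thm:approx:search}). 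Crucially, each call uses~\cite{ghms-91} only as a black box for a \emph{fixed} $\eps$, so no min-$\eps$ search is ever needed. That is the idea your proposal is missing, and it is what makes the stated running time go through without further work.
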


\section{Sketch of the Basic Hardness Reduction}\label{sec:1d}

The \klc problem is $\mathsf{NP}$-hard for polygonal curves that lie in one dimension. This holds if distances are measured under the discrete or the continuous Fr\'echet distance. Furthermore, the \klc problem is hard to approximate within a factor $2-\eps$ for the discrete Fr\'echet distance, and within a factor $1.5-\eps$ for the continuous Fr\'echet distance, for any value of $\eps>0$. In this section we will describe the main idea of the reduction for the discrete Fr\'echet distance. Refer to Appendices~\ref{sec:1d-app} and~\ref{sec:1Dappx-app} for the full details and for the reduction in the case of continuous Fr\'echet distance.

We reduce from the {\sc Shortest Common Supersequence} ({\sc SCS}) problem, which asks to compute a shortest sequence $s^*$ such that each of $n$ finite input strings $s_i$ over a finite alphabet $\Sigma$ is a subsequence of $s^*$. This problem is known to be NP-hard for binary alphabets~\cite{RU81}. Given an instance of the decision version of the {\sc SCS} problem, i.e., a set of strings $S=\{s_1,s_2,\dots,s_n\}$ over an alphabet $\{A,B\}$, and a maximum allowed length $t$ of the sought supersequence, we will construct a corresponding instance of the decision version of \klc problem for Fr\'echet distance $\delta=1$ and for $k=1$.

For each input string $s_i\in S$ we construct a one-dimensional curve $\gamma(s_i)$ in the following way. The curve $\gamma(s_i)$ will have a vertex at $(-3)$ for each letter $A$ in $s_i$, and a vertex at $(3)$ for each letter $B$ in $s_i$. We call these vertices \emph{letter gadgets}. The letter gadgets in $\gamma(s_i)$ are separated by \emph{buffer gadgets}: sequences of vertices alternating between $(-1)$ and $(1)$; and $\gamma(s_i)$ also starts and ends with a buffer gadget. We choose a length of the buffer gadget at least twice the size of a sought superstring. The buffer gadget enables the Fr\'echet-matching to `skip' over any encoded letter in the center curve that is not present in $s_i$. Figure~\ref{fig:1dgadgetsD-example} shows an example of the reduction for input $ABB$, $BBA$, and $ABA$.

\begin{figure}[t]
\centering
    \includegraphics[page=2]{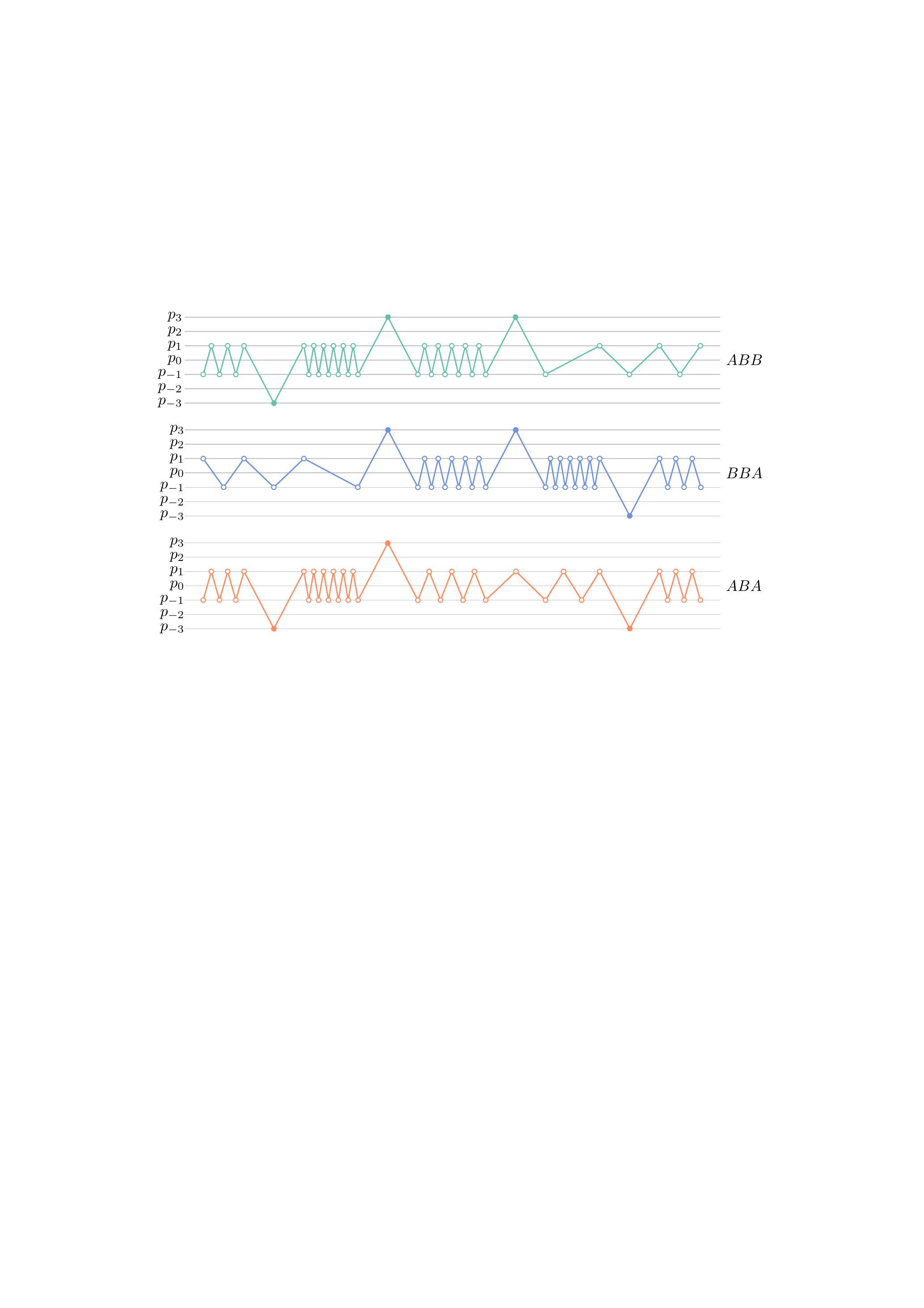}
\caption{Three curves corresponding to strings $ABB$ (green), $BBA$ (blue), and $ABA$ (orange), and a center curve (purple) within discrete Fr\'echet distance $1$ from them.}
\label{fig:1dgadgetsD-example}
\end{figure}

Having constructed a curve $\gamma(s_i)$ for each string $s_i\in S$, we show that there exists a string $s^*$ of length $t$ that is a supersequence of all $s_i$ if and only if there exists a \emph{center curve} $c^*$ with $\ell=2t+1$ vertices that lies within discrete Fr\'echet distance $\delta=1$ from all $\gamma(s_i)$.

First, let $s^*$ be such a supersequence. %  of length $t$ of all strings in $S$. Then 
Observe that a curve $c^*$ % of size $2t+1$ that starts and ends at $(0)$, 
that has a vertex at $(-2)$ for every letter $A$ in $s^*$, a vertex at $(2)$ for every letter $B$ in $s^*$, and with vertices at $(0)$ at the start and end of the curve, and also between each pair of letters, is within discrete Fr\'echet distance $\delta=1$ from all the curves $\gamma(s_i)$ (refer to Figure~\ref{fig:1dgadgetsD-example}).

Now let there exist a center curve $c$ of size $2t+1$ that is within discrete Fr\'echet distance $\delta=1$ from all the curves $\gamma(s_i)$. For every curve $\gamma(s_i)$, the curve $c$ must have a vertex (with absolute value $\ge 2$) per each letter gadget. To cover the buffer gadgets, $c$ must have at least one extra vertex per gadget, as the vertices covering letter gadget cannot completely cover the buffer gadgets as well. Thus, if $c$ has $2t+1$ vertices, then at most $t$ of these vertices can match to letter gadgets of curves $\gamma(s_i)$. A string $s^*$ consisting of letters corresponding to the vertices of $c$ that match to letter gadgets is a supersequence of all strings in $S$. This implies that the \klc problem in 1D is $\mathsf{NP}$-hard for the discrete Fr\'echet distance.
We show hardness of approximation using the same construction. We prove that, for any instance of the {\sc SCS} problem, if there exists a center curve of size $\ell$ within discrete Fr\'echet distance $\delta$ from all the curves $\gamma(s_{i})$, where $\delta < 2$, then there exists a center curve of size at most $\ell$ within discrete Fr\'echet distance $1$ from all the curves $\gamma(s_{i})$.
We state the resulting theorems here. Refer to the Appendix~\ref{sec:1Dappx-app} for the full details and proofs. 
\begin{restatable}{theorem}{hardoneddiscreteappx}
\label{thm:1dhardness:discrete}
    The \klc problem for polygonal curves in 1D for the discrete Fr\'echet distance is $\mathsf{NP}$-hard to approximate within approximation factor $2-\varepsilon$, even if $k=1$.
\end{restatable}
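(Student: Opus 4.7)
The reduction is from binary Shortest Common Supersequence (SCS), which is known to be $\mathsf{NP}$-hard. Given an SCS instance $(S,t)$ with $S=\{s_1,\ldots,s_n\}$ over $\{A,B\}$, I produce in polynomial time the curves $\{\gamma(s_1),\ldots,\gamma(s_n)\}$ in $\R$, parameters $k=1$, $\ell=2t+1$, and target distance $\delta=1$, exactly as described in the sketch above. I fix the buffer length to be $b \ge 2t$, which will be enough to let the Fr\'echet matching absorb any letter of the prospective supersequence that is not used by a given input string, and I normalise so that every $\gamma(s_i)$ starts and ends with the vertex at $-1$.

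For completeness, given a supersequence $s^\ast=\sigma_1\cdots\sigma_t$ I build the explicit $(2t+1)$-vertex center curve $c^\ast$ alternating $t+1$ zero-vertices with the $t$ vertices $\pm 2$ encoding the $\sigma_j$. Verifying $d_{DF}(c^\ast,\gamma(s_i))\le 1$ for each $i$ is then a direct construction of a monotone alignment: each letter of $s_i$ pairs with its copy in $s^\ast$ at Euclidean distance exactly $1$; skipped letters of $s^\ast$ are absorbed by same-sign buffer vertices (this is where the requirement $b\ge 2t$ is used); and the buffer vertices of $\gamma(s_i)$ pair with zero-vertices of $c^\ast$.

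The soundness direction is the main technical step, and I would strengthen it to handle approximation factors strictly less than $2$ directly. Let $c$ be a center curve of complexity at most $\ell$ with $d_{DF}(c,\gamma(s_i))<2$ for every $i$; I will produce a supersequence of length at most $t$. Call a vertex $v$ of $c$ \emph{letter-type} if $|v|>1$ and \emph{buffer-type} otherwise. Since every letter-gadget vertex $\pm 3$ must be matched to a $c$-vertex at distance strictly less than $2$, any such $c$-vertex automatically satisfies $|v|>1$ with the matching sign; hence letter gadgets can only be matched to same-sign letter-type vertices of $c$. The key counting lemma is that between any two consecutive letter-type vertices of $c$ there must lie at least one buffer-type vertex, and similarly at least one before the first letter-type vertex and after the last, so that if $L$ is the number of letter-type vertices and $B$ is the number of buffer-type vertices then $B\ge L+1$. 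Together with $L+B\le 2t+1$ this yields $L\le t$. Finally, monotonicity of the discrete Fr\'echet matching forces distinct letter gadgets of each $\gamma(s_i)$ to be matched to distinct letter-type vertices of $c$ in order, so reading off the signs of the letter-type vertices of $c$ produces a common supersequence of $S$ of length $L\le t$.

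Combining the two directions, YES-instances of SCS give $\delta^\ast \le 1$ while NO-instances give $\delta^\ast \ge 2$, so any polynomial-time $(2-\varepsilon)$-approximation for \klc[1] with $\ell$ part of the input would distinguish the two cases and hence solve SCS. The most delicate point in the plan is the "at least one buffer-type per gap" counting lemma, because it requires ruling out the pathological possibility that two adjacent letter-type vertices of $c$ cooperatively split an intervening buffer gadget between them; this fails precisely because the buffer gadget strictly alternates $-1,+1,-1,+1,\ldots$, no letter-type vertex lies within distance strictly less than $2$ of both $-1$ and $+1$, and the Fr\'echet matching is monotone, so one would be forced to pair a $\pm 1$ buffer vertex with a letter-type vertex of the opposite sign. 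This is where the alternating structure of the buffer, rather than any constant run of one sign, is essential to the reduction.
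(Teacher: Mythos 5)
Your completeness direction matches the paper, and the overall reduction is the same, but the soundness step contains a genuine gap: the counting lemma $B \ge L+1$ is false. Take $S=\{A,B\}$ and $t=2$, so $\ell=2t+1=5$. With the paper's gadgets, $\gamma(A)=\langle -1,1,-1,1,-3,1,-1,1,-1\rangle$ and $\gamma(B)=\langle 1,-1,1,-1,3,-1,1,-1,1\rangle$, and the curve $c=\langle 0,-2,2,-2,0\rangle$ satisfies $d_{DF}(c,\gamma(A))\le 1$ and $d_{DF}(c,\gamma(B))\le 1$ (verify by direct alignment). So $c$ is a valid center of complexity $\ell$ at radius $1<2$, yet it has $L=3$ letter-type vertices and $B=2$ buffer-type vertices, so $B<L+1$ and $L=3>t=2$; reading off all letter-type signs gives $ABA$, length $3>t$, which is not a witness for the decision threshold. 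The root of the error is your claim that cooperative splitting of a buffer by two adjacent letter-type vertices ``forces a pairing of a $\pm 1$ buffer vertex with a letter-type vertex of the opposite sign'': it does not. If $v_j=-2$ and $v_{j+1}=2$ are adjacent, $v_j$ pairs with $p_{-1}$ and $v_{j+1}$ pairs with $p_1$, both same-sign pairings within distance $1$. The alternating buffer only rules out two \emph{same}-sign letter-type vertices cooperating on it, not opposite signs.

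The paper's proof (Lemma~\ref{lem:1D-discrete-approx-app}) takes a route that sidesteps this. It first snaps every vertex of $c$ to $\{p_{-2},p_0,p_2\}$ (coordinates $<-1$ go to $p_{-2}$, coordinates $>1$ go to $p_2$, the rest to $p_0$) and observes that the same monotone alignment then has all matched pairs at distance $\le 1$, reducing to the exact $\delta=1$ case; there, the supersequence is built only from the vertices of the snapped curve that are \emph{actually matched to a letter-gadget vertex $p_{\pm3}$} of some $\gamma(s_i)$, which is in general a strict subset of $\{j:|v_j|>1\}$. In the counterexample, only two of the three $\pm 2$ vertices ever match a letter gadget, and the extracted supersequence is $AB$ of length $t=2$. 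To repair your argument you would need to extract the supersequence from the letter-gadget-covering vertices in this sense rather than from all vertices with $|v_j|>1$, and then bound the size of that smaller set.
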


\begin{restatable}{theorem}{hardonedcontappx}
\label{thm:1dhardness:cont}
    The \klc problem for polygonal curves in 1D for the continuous Fr\'echet distance is $\mathsf{NP}$-hard to approximate within approximation factor $1.5-\varepsilon$, even if $k=1$.
\end{restatable}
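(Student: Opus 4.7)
The plan is to adapt the discrete 1D reduction from \textsc{SCS} used for Theorem~\ref{thm:1dhardness:discrete}, but with a modified gadget construction that separates the YES and NO cases by a factor of $1.5$ under the continuous Fr\'echet distance rather than the factor $2$ obtained for the discrete distance. The high-level structure of the reduction is unchanged: for each string $s_i$ over $\{A,B\}$ I build a one-dimensional polygonal curve $\gamma(s_i)$ that alternates between \emph{letter gadgets} (encoding the letters of $s_i$) and \emph{buffer gadgets}, and ask for a one-center curve of complexity $\ell = \Theta(t)$ within Fr\'echet distance $\delta=1$. The key re-calibration is that, because the continuous Fr\'echet distance allows a reparameterization to sweep smoothly through intermediate values, the coordinates of the letter vertices (relative to those of the buffer vertices) must be chosen so that any valid center curve with too few letter-encoding vertices is forced to Fr\'echet distance at least~$1.5$.

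For the YES direction, given a common supersequence $s^*$ of length at most $t$, I would build a center curve $c^*$ whose vertex sequence encodes $s^*$ (one negative-valued vertex per $A$, one positive-valued vertex per $B$, with neutral vertices separating them). I would exhibit, for each input $s_i$, an explicit monotone reparameterization that aligns the letter gadgets of $\gamma(s_i)$ with the corresponding letter-vertices of $c^*$, and sweeps the buffer gadgets against the interleaving neutral segments; a direct computation then yields $d_F(\gamma(s_i), c^*) \le 1$.

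For the NO direction (which also delivers the inapproximability factor), the key lemma I would prove is: if $c$ has at most $\ell$ vertices and $d_F(c,\gamma(s_i)) < 1.5$ for every $i$, then the sequence of letter-encoding vertices of $c$ is a common supersequence of length at most $t$ of the $s_i$. I would prove this by analyzing the structure of optimal reparameterizations for 1D curves: each letter gadget of $\gamma(s_i)$ must be mapped to a sub-interval of $c$ that contains a vertex whose value lies on the correct side (negative for $A$, positive for $B$); and the buffer gadget strictly between two consecutive letters forces the reparameterization to cross through the neutral range of $c$, so two distinct letters of $s_i$ cannot be served by the same vertex of $c$. Concatenating the induced letter assignments across all inputs gives the required supersequence.

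The main obstacle is calibrating the coordinates of the letter and buffer gadgets so that the $1.5$ threshold is tight and the NO-case argument survives the flexibility of continuous reparameterizations. The difficulty is that, at any $\delta \ge 1.5$, a single center vertex placed between two consecutive letter positions can be close enough to both the adjacent letter and the adjacent buffer to let the center curve "merge" letters, invalidating the supersequence decoding; the gadget distances must be chosen just at the boundary where this merging fails for $\delta < 1.5$ but would succeed for $\delta \ge 1.5$. Getting this calibration right, and rigorously ruling out every way that a monotone 1D reparameterization could shortcut a letter gadget, is the technical heart of the proof.
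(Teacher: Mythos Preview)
Your high-level plan follows the paper's route---reduce from \textsc{SCS} via letter and buffer gadgets---but the specific outline you give has a genuine gap that the ``calibrate the coordinates'' remark does not close.

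The difficulty you flag (one center vertex merging two adjacent letters) is not the real obstruction. The actual obstruction is that, under the continuous Fr\'echet distance, a buffer gadget sitting between two \emph{different} letters does not force an extra center vertex: an edge of $c$ running from a value $\le -1.5$ (serving an $A$-gadget at $p_{-3}$) to a value $\ge 1.5$ (serving the next $B$-gadget at $p_3$) sweeps continuously through $0$ and can absorb the entire $\{\pm 1\}$-buffer for free. With single-vertex letter gadgets the number of center vertices needed for a length-$t$ supersequence therefore ranges from $t+2$ (alternating letters) to $2t+1$ (all letters equal), and there is no single $\ell=\Theta(t)$ for which both directions work. If you take $\ell=2t+1$ so the YES direction always succeeds, then in the NO direction a center with $\ell$ vertices may have up to $\ell-2=2t-1$ letter-encoding vertices, and you cannot conclude a supersequence of length at most $t$. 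Your NO-direction lemma as stated (``buffer forces crossing the neutral range, so distinct letters of $s_i$ get distinct vertices of $c$'') is true but does not give the required \emph{count}.

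The paper repairs this not by shifting coordinates but by changing the \emph{structure} of the gadgets: each letter gadget becomes a length-$(2t-1)$ oscillation $p_{-3},p_0,p_{-3},p_0,\dots,p_{-3}$ (respectively with $p_3$), and $\ell$ is taken on the order of $t^2$. Any center within distance strictly less than $1.5$ of such a gadget must itself oscillate between the disjoint intervals $(-4.5,-1.5)$ and $(-1.5,1.5)$ and hence spends $\Theta(t)$ vertices per letter; the $O(t)$ total slack coming from the optional buffer vertices is then lower-order, and the supersequence length is recovered as $\lfloor \ell/\Theta(t)\rfloor \le t$. The threshold $1.5$ is precisely $|p_{-3}-p_0|/2$, the value at which those two intervals stop being disjoint and the oscillation argument collapses---so the inapproximability factor is governed by the internal geometry of the repeated letter gadget, not by a tunable letter/buffer coordinate gap.
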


This construction can be extended to the case of $\ell=\infty$, i.e. for the problem of computing the minimum enclosing ball under the Fr\'echet distance. The main idea is to add additional curves that will bound the length of the supersequence constructed from any valid center curve. We refer to Section~\ref{sec:meb} for the full construction and further details.

%This construction can be extended to the case of $\ell=\infty$, i.e. for the problem of computing the minimum enclosing ball under the Fr\'echet distance. The main idea is to add the curves $A^j$ and $B^{j'}$, which force any center curve $c$ within distance one to have at most $j$ vertices that can be matched to an A-letter gadget and at most $j'$ vertices that can be matched to a B-letter gadget. 

%This means that a supersequence $s^*$ constructed from $c^*$ has length at most $j'+j$, so $(S,t)$ can be shown to be a true instance of SCS if and only if there exists a pair $j,j'$ of positive integers that sums up to $t$. Since there are only polynomially many such pairs, this gives a polynomial time truth table reduction from SCS to MEB, which implies MEB is $\mathsf{NP}$-hard. We refer to Section~

\section{Hardness of Approximation in 2D}
\label{sec:2dhard}

We again describe a reduction from the {\sc SCS} decision problem to the
\klc problem. The construction described in this section shows 
improved results for hardness of approximation of the \klc 
problem for curves in the plane and in higher dimensions.
We give full details of the analysis for the discrete
Fr\'echet distance. The reduction in the continuous case uses the same gadgets. Refer to Appendix~\ref{sec:2Dhard-app} for the full details of the proof. 
%Later in this section we also improve the results for the case that distances are measured under the continuous Fr\'echet distance.

\subsection{The Reduction}

\begin{figure}\centering
\includegraphics[page=1,width=0.3\textwidth]{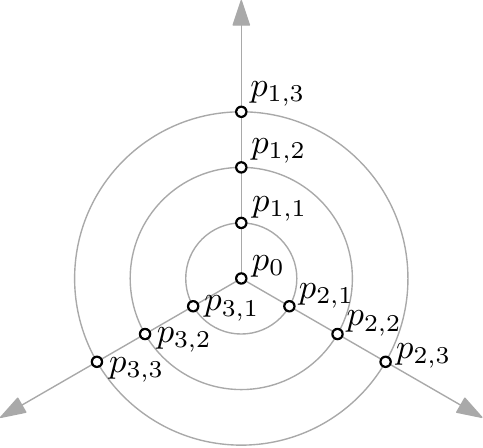}
\includegraphics[page=2,width=0.3\textwidth]{figures/2Dgadgets}
\includegraphics[page=3,width=0.3\textwidth]{figures/2Dgadgets}
\caption{The gadgets for the construction in the plane. Left: Notation
of vertices used in the construction. Center: Gadgets $g_A$ (black), buffer
gadget $g_a$ (blue). Right:  Gadgets $g_B$ (black), buffer gadget $g_b$ (blue).
The trace of the center curve used in the proof is shown in fat red in both
figures.} 
\label{fig:2dgadgets}
\end{figure}

An instance of the decision version of the {\sc SCS} problem is given as a set of
strings $S=\{s_1,s_2,\dots,s_n\}$, each of length at most $n$, over an alphabet
$\{A,B\}$, and a value of $t$---the maximum allowed length of the sought
superstring. For any such instance we construct a corresponding instance of the
\klc problem.  For each input string $s_i \in S$
we construct a curve $\gamma(s_i)$ in the plane and this construction is
described in this section. 
Our construction uses a global parameter $s \geq 1$, the value of which will be
fixed depending on the value of $t$, the parameter for the length of the superstring.  

First, we define a set of ten points in the plane that will be used in the
construction of the gadgets.  Let $p_0=(0,0)$ be the origin and
consider three circles of radius $1,2$ and $3$, which are each centered at
$p_0$.  Consider three rays from $p_{0}$ at $90,150$ and $210$ degrees. We define 
the remaining points $p_{i,j}$ as the intersections of these rays with the
three circles, where $i \in \{1,2,3\}$ indicates the ray and $j \in \{1,2,3\}$ 
indicates the circle. Refer to Figure~\ref{fig:2dgadgets} for the exact placement.

We define the following gadgets using point sequences on this set of ten points
(see also Figure~\ref{fig:2dgadgets}).  Note that the $A$-gadgets traverse
the points clockwise, while the $B$-gadgets traverse the points
counter-clockwise.
\begin{eqnarray*}
g_A&:= p_{2,3}~p_{3,3}~p_{1,3} &\quad \widehat{g}_A := p_{1,3}~(g_A)^s \\
g_B&:= p_{1,3}~p_{3,3}~p_{2,3} &\quad \widehat{g}_B := p_{1,3}~(g_B)^s\\
g_a&:= p_{2,1}~p_{3,1}~p_{1,1} &\quad \widehat{g}_a := p_{1,1}~(g_a)^s\\
g_b&:= p_{3,1}~p_{2,1}~p_{1,1} &\quad \widehat{g}_b := p_{1,1}~(g_b)^s
\end{eqnarray*}

We now describe the mapping of an instance of the {\sc SCS} decision problem to our
instance space.  Given a string $s_i \in S$, we replace every
letter $A$ and every letter $B$  as follows:
\begin{eqnarray*}
A ~\rightarrow~ (\widehat{g}_a~ \widehat{g}_b)^t~ \widehat{g}_A~ (\widehat{g}_a~ \widehat{g}_b)^t\\
B ~\rightarrow~ (\widehat{g}_a~ \widehat{g}_b)^t~ \widehat{g}_B~ (\widehat{g}_a~ \widehat{g}_b)^t
\end{eqnarray*}
We obtain each $\gamma(s_i)$ by concatenating these point sequences along the sequence $s_i$.
To complete the reduction to the \klc[1] decision problem, where 
distances are measured under the discrete Fr\'echet distance, we set $\ell= 6t^2 + 9t$.

\begin{theorem}\label{thm:2dhardness:discrete}
The \klc problem under the discrete Fr\'echet distance 
    is $\mathsf{NP}$-hard to approximate within any factor smaller than $3\sin\frac{\pi}{3}$
for curves in the plane or higher dimensions, even if $k=1$.
\end{theorem}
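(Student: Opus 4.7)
The plan is to follow the standard gap-reduction template from \textsc{SCS}. Concretely, I will show that (i) if $(S,t)$ is a \textsc{YES}-instance then the constructed \klc[1] instance admits a center curve $c$ of complexity $\ell = 6t^2+9t$ at discrete Fr\'echet distance at most $1$ from every $\gamma(s_i)$, and (ii) if some center curve of complexity at most $\ell$ is at discrete Fr\'echet distance strictly less than $3\sin\frac{\pi}{3}$ from every $\gamma(s_i)$, then $S$ admits a supersequence of length at most $t$. These two directions together yield an inapproximability gap of $3\sin\frac{\pi}{3}-\varepsilon$, since any polynomial-time algorithm achieving that approximation factor would decide \textsc{SCS}.

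For direction (i), I will build $c$ by mimicking the structure of $\gamma(s^*)$ for the given length-$t$ supersequence $s^*$, with each outer-circle point $p_{i,3}$ used in a letter gadget and each inner-circle point $p_{i,1}$ used in a buffer gadget replaced by the corresponding middle-circle point $p_{i,2}$. Since $\|p_{i,3}-p_{i,2}\| = \|p_{i,1}-p_{i,2}\| = 1$, an alignment that pairs each letter-block of $c$ with the letter-block of $\gamma(s_i)$ at the same position when $s_i$ has that letter, and that otherwise absorbs the extra letter-block of $c$ into one of the $\widehat g_a$ or $\widehat g_b$ blocks inside the $(\widehat g_a\widehat g_b)^t$ buffer sandwiches surrounding each letter of $s_i$, certifies distance exactly~$1$. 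A careful choice of the parameter $s$ in terms of $t$ then fixes the total vertex count at exactly $6t^2 + 9t$.

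For direction (ii), the key geometric observation is that, among the three outer-circle points $\{p_{1,3},p_{2,3},p_{3,3}\}$, the only pair whose minimum enclosing ball has radius $3\sin\frac{\pi}{3} = 3\sqrt{3}/2$ is $\{p_{1,3},p_{3,3}\}$; the other two pairs have enclosing-ball radius only $3/2$. Consequently, at Fr\'echet distance strictly less than $3\sin\frac{\pi}{3}$, no single vertex of $c$ can be simultaneously assigned to $p_{1,3}$ and to $p_{3,3}$ within the same occurrence of a $g_A$ or $g_B$ gadget. Iterating across the $s$-fold repetition inside each $\widehat g_A$ or $\widehat g_B$, a valid alignment must spend $\Theta(s)$ distinct vertices of $c$ per such block, and these vertices must appear in the clockwise cyclic order prescribed by $g_A$ or the counter-clockwise order prescribed by $g_B$. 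This pins down an $A$- or $B$-signature in $c$ for each letter-block, which, crucially, cannot be reused between an $A$-occurrence in some $\gamma(s_i)$ and a $B$-occurrence in some $\gamma(s_j)$: the reversed endpoint pattern would force a center vertex to align with both $p_{1,3}$ and $p_{3,3}$, violating the distance bound.

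Reading the signatures in the order they appear along $c$, separated by buffer-absorbing portions, produces a common supersequence of $\{s_1,\dots,s_n\}$, and a matching vertex count shows that the budget $\ell = 6t^2 + 9t$ permits at most $t$ signatures. The main obstacle I anticipate is making the non-reuse property fully rigorous in the presence of the interleaving buffer sandwiches: one must carefully track the monotone Fr\'echet alignment through both letter- and buffer-blocks across multiple input curves simultaneously, and choose $s$ large enough in terms of $t$ to amplify any local orientation mismatch into a global violation of the distance bound. This delicate case analysis will dominate the proof but parallels, at a higher level, the combinatorial core of the one-dimensional argument already sketched in Section~\ref{sec:1d}.
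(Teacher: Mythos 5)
Your direction (ii) is built on an incorrect geometric reading of the construction, and that breaks the argument. You computed the three outer-circle points from the literal ray angles $90^\circ, 150^\circ, 210^\circ$, which gives an isoceles triangle with pairwise distances $3,3,3\sqrt{3}$, and you then build the whole soundness argument around the single fact that $D_1$ and $D_3$ are disjoint at radius $r = 3\sin\frac{\pi}{3}$. But at that radius $D_1\cap D_2$ and $D_2\cap D_3$ are nonempty (their centers are only distance $3 < 2r$ apart), and this is fatal: a center vertex placed in the lens $D_2\cap D_3$ accounts for two consecutive gadget vertices $p_{2,3},p_{3,3}$, so each cycle of $g_A$ can be covered with roughly two center vertices rather than three, and the count per letter block drops to about $\tfrac{3}{2}(s-1)$, which does not yield the bound $\le t$ on the supersequence length. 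Worse, the ``clockwise vs.\ counter-clockwise signature'' you rely on is \emph{not} pinned down: a center vertex sequence that alternates between $D_2\cap D_3$ and $D_1$ is a feasible alignment partner for $(g_A)^{s-1}$ and, after shifting the alignment by one vertex, also for $(g_B)^{s-1}$. Hence your claimed non-reuse property fails — no vertex is ever ``forced to align with both $p_{1,3}$ and $p_{3,3}$'' because the reversed sequence is realizable without ever stabbing $D_1$ and $D_3$ simultaneously.

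What the threshold $3\sin\frac{\pi}{3} = \frac{3\sqrt{3}}{2}$ is actually encoding is that the three outer-circle points form an \emph{equilateral} triangle inscribed in the radius-$3$ circle (so the rays should be $120^\circ$ apart; the angle list in the paper appears to be a typo, and you should have flagged the inconsistency). With that geometry all three pairwise distances equal $3\sqrt{3} = 2r$, so for any radius strictly below $r$ the three disks $D_1,D_2,D_3$ are \emph{pairwise} disjoint — this is exactly what the paper's Lemma~\ref{lem:no:overlap:discrete} leverages. Pairwise disjointness means a center vertex lies in at most one of $D_1,D_2,D_3$, so each copy of $g_A$ forces a strictly monotone visit through $D_2,D_3,D_1$, the clockwise/counter-clockwise orders of $g_A$ and $g_B$ become genuinely incompatible (no shifting trick), and each letter block consumes at least $3(s-1)$ center vertices. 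That last bound, combined with $s = 2t+2$, closes the length argument as in Lemma~\ref{2d:hardness:2}. You need to repair direction (ii) by working in the equilateral geometry and arguing pairwise disjointness of all three disks, not disjointness of a single distinguished pair. Your direction (i) is in the spirit of Lemma~\ref{2d:hardness:1}, though note that the paper matches the center's buffer-skipping portions using the origin $p_0$ rather than middle-circle points, which avoids the cross-ray distance checks your variant would require.
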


\subsection{The Proof (Discrete Fr\'echet distance)}

In the following, we prove Theorem~\ref{thm:2dhardness:discrete}.

\begin{lemma}
\label{2d:hardness:1}
    For any true instance of the {\sc SCS} decision problem, there exists a center curve
of length at most $(3s+3)t$ and radius at most $r=1$ in our construction.
\end{lemma}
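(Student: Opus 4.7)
The plan is to exhibit an explicit center curve $c^*$ of the required complexity and then verify that its discrete Fr\'echet distance to every input curve $\gamma(s_i)$ is at most $1$. Given a common supersequence $s^* = X_1 \cdots X_m$ of $s_1, \ldots, s_n$ with $m \le t$, I would construct $c^*$ by concatenating, for each letter $X_j$ of $s^*$, a ``middle-ring'' version $\widehat{g}_{X_j}^{\mathrm{mid}}$ of the gadget $\widehat{g}_{X_j}$, obtained by replacing each outer-ring vertex $p_{i,3}$ by its middle-ring counterpart $p_{i,2}$, and flanking each such middle gadget with the origin point $p_0$. Since $\widehat{g}_{X_j}^{\mathrm{mid}}$ has $3s+1$ vertices, the block contributed per letter has $3s+3$ vertices, and a direct count gives total complexity at most $m(3s+3) \le (3s+3)t$.

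The verification of the radius bound rests on two geometric observations: (i)~each middle-ring vertex $p_{i,2}$ lies at Euclidean distance exactly $1$ from both $p_{i,1}$ and $p_{i,3}$ on its own ray, and (ii)~the origin $p_0$ lies at distance $1$ from every inner-ring vertex $p_{i,1}$. By (i), a center middle letter gadget can be aligned vertex-for-vertex with the corresponding input letter gadget, all pair-distances being $1$. By (ii), a single $p_0$ can absorb an arbitrarily long run of inner-ring buffer vertices.

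To construct the discrete Fr\'echet alignment explicitly, I would fix an embedding of $s_i$ as a subsequence of $s^*$. For each letter $X_j$ of $s^*$ that matches a letter of $s_i$, I pair $\widehat{g}_{X_j}^{\mathrm{mid}}$ one-for-one with the corresponding $\widehat{g}_{X_j}$ in $\gamma(s_i)$ using (i). For each unmatched letter of $s^*$, together with all buffer blocks $(\widehat{g}_a \widehat{g}_b)^t$ of $\gamma(s_i)$ not yet consumed, I use the flanking $p_0$ vertices to absorb the whole unused buffer region via~(ii). Since $|s^*| \le t$ while each input buffer block has $2t$ subgadgets, the input has more than enough buffer capacity to accommodate every unmatched center letter.

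The main technical obstacle is the rotational asymmetry of the gadgets: $g_A$ and $g_a$ share the cyclic ray order $(2,3,1)$, but the counter-clockwise gadgets $g_B$ and $g_b$ begin on different rays ($1$ versus $3$), so no strict one-to-one vertex alignment exists between a middle-$B$ gadget of $c^*$ and a $b$-buffer of $\gamma(s_i)$. I would handle this by letting the discrete Fr\'echet alignment ``hold'' at one of the flanking $p_0$ vertices long enough to absorb the rotational offset, and then enumerate the few boundary cases (center $\in \{A,B\}$ matched against input buffer $\in \{\widehat{g}_a, \widehat{g}_b\}$) to confirm that every pair produced by the alignment has Euclidean distance at most~$1$. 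This case analysis is the core technical step of the proof.
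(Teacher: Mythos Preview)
Your approach is essentially the paper's: build the center curve letter by letter from the common supersequence, using middle-ring versions of the letter gadgets flanked by $p_0$, then argue that matched letters align one-to-one with the input's outer-ring letter gadgets while unmatched letters are absorbed by the input's inner-ring buffers (and, conversely, unused input buffers are absorbed by $p_0$). The paper's proof is at the same level of detail as your sketch and relies on precisely the two geometric facts you identify.

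The one divergence is your choice of center $B$-gadget. The paper defines $c_B := p_{3,2}\,p_{2,2}\,p_{1,2}$, the middle-ring analogue of $g_b$ rather than of $g_B$; you instead take the middle-ring analogue of $g_B$. Both choices work, but the phase mismatch appears in different places: with your choice the center-letter-vs-input-letter alignment is clean and the offset sits in the center-letter-vs-input-buffer case, whereas the paper's choice does the reverse. One caution for your case analysis: a middle-ring $A$-gadget can only be matched at radius~$1$ to an input $\widehat{g}_a$ (same cyclic direction), and a middle-ring $B$-gadget only to a $\widehat{g}_b$; the cross-direction pairings do \emph{not} work, since opposite-orientation cycles on disjoint rays cannot be aligned within distance~$1$. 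This is harmless because every input buffer block $(\widehat{g}_a\,\widehat{g}_b)^t$ contains at least $t$ copies of each type in alternation, so the required type is always available in sufficient number---but your case enumeration should select the correct buffer type rather than attempt all four pairings.
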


\begin{proof}
If the instance is true, then there exists a common superstring of length at
most $t$. We map this superstring to a point sequence as follows. 
We define two center gadgets as follows (see also
Figure~\ref{fig:2dgadgets}). Note that, again, the $A$-gadget traverses
the points clockwise, while the $B$-gadget traverses the points
counter-clockwise.
\begin{eqnarray*}
c_A := {p_{2,2}~p_{3,2}~p_{1,2}}  &\quad \widehat{g}^{*}_A&:= p_{1,2}~(c_A)^s \\
 c_B := {p_{3,2}~p_{2,2}~p_{1,2}} &\quad \widehat{g}^{*}_B&:= p_{1,2}~(c_B)^s  
\end{eqnarray*}
We replace every letter $A$ and every letter $B$ as follows (see also Figure~\ref{fig:2dgadgets}):
\begin{eqnarray*}
A~ \rightarrow~ p_0~ \widehat{g}^{*}_A~ p_0 \\
B~ \rightarrow~ p_0~ \widehat{g}^{*}_B~ p_0
\end{eqnarray*}

    Clearly, the resulting point sequence has length at most $(3s+3)t$.  We claim that the
resulting curve is a center curve of radius at most $1$.  Since the instance is
true, every input string is a substring of the common superstring. This implies
a valid matching between the point sequences. Consider traversing the sequences
from the beginning. Whenever we need to `skip' a letter of the superstring, we can 
match this letter to a corresponding buffer gadget. Whenever we need to `skip'
a sequence of buffer gadgets, we match them to the point $p_0$ on the point
sequence of the superstring.
\end{proof}

\begin{figure}\centering
\includegraphics[page=1,width=0.3\textwidth]{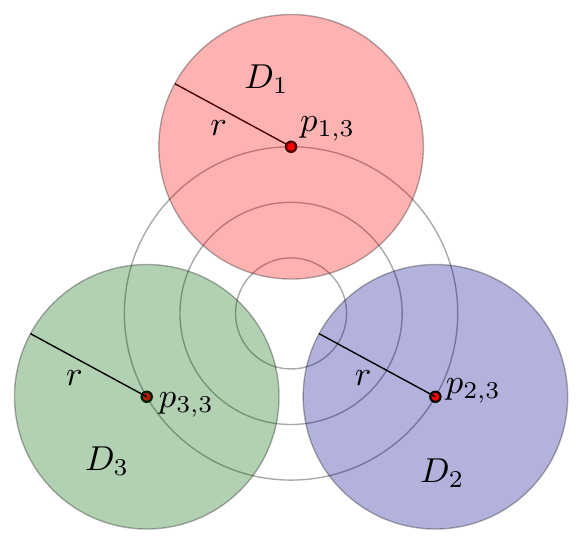}
~
\includegraphics[page=2,width=0.3\textwidth]{figures/2Dhardness_disks}
~
\includegraphics[page=3,width=0.3\textwidth]{figures/2Dhardness_disks}
\caption{Disks $D_1,D_2,D_3$ of radius $r$ centered at the points $p_{1,3}$,
$p_{2,3}$ and $p_{3,3}$. Any curve that lies within discrete Fr\'echet distance $r$ to
gadget $A$ or $B$ needs to have three vertices, one in each disk, where the ordering
distinguishes $A$ and $B$, as long as the disks are pairwise disjoint. 
Also shown: different approximation radii for the discrete and continuous
Fr\'echet distance.}
\label{fig:approx_radii}
\label{figs:threedisks}
\end{figure}

\begin{lemma}\label{lem:no:overlap:discrete}
Let $\psi$ and $\phi$ be two point sequences in the plane, such that 
$d_{DF}((g_A)^{s-1}, \psi) < r$ and $d_{DF}((g_B)^{s-1}, \phi) < r$ for
$r=3\sin\frac{\pi}{3}$ and for some value of $s > 1$. 
Each of the following statements is false:
\begin{compactenum}[(i)]
\item $\psi$ contains a prefix that is a suffix of $\phi$ 
\item $\phi$ contains a prefix that is a suffix of $\psi$ 
\item $\psi$ is a subsequence of $\phi$ 
\item $\phi$ is a subsequence of $\psi$ 
\end{compactenum}
\end{lemma}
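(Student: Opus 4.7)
The argument rests on one geometric observation. Since $p_{1,3}$ and $p_{3,3}$ lie on the radius-$3$ circle separated by an angular arc of $120^\circ$, we have $\|p_{1,3}-p_{3,3}\| = 2\cdot 3\sin 60^\circ = 3\sqrt{3} = 2r$, so the closed disks of radius $r$ around $p_{1,3}$ and $p_{3,3}$ are externally tangent at the midpoint of the segment between them, and their interiors are disjoint. The other two pairs of disks ($D_1,D_2$) and ($D_2,D_3$) do overlap, since their centers are only at distance $3 < 2r$. Combined with the strict inequality $d_{DF}(\cdot,\cdot) < r$, this yields a structural consequence: in the vertex-type sequence $2,3,1,2,3,1,\dots$ of $(g_A)^{s-1}$, each of the $s-1$ adjacent ``type-$3$ then type-$1$'' transitions cannot be absorbed by a single vertex of $\psi$ (as that would require a point in $D_1\cap D_3=\emptyset$), so each is witnessed by two distinct consecutive vertices of $\psi$---the former strictly inside $D_3$ and the latter strictly inside $D_1$. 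I call such a pair a \emph{$3$-to-$1$ jump} and conclude that $\psi$ contains $s-1$ pairwise vertex-disjoint $3$-to-$1$ jumps occurring in order; symmetrically $\phi$ contains $s-1$ disjoint \emph{$1$-to-$3$ jumps}.

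For case (iii), the $3$-to-$1$ jumps of $\psi$ lift to $s-1$ disjoint ordered pairs $(\alpha_k,\beta_k)$ of $\phi$-indices with $\alpha_1<\beta_1<\cdots<\alpha_{s-1}<\beta_{s-1}$, $\phi_{\alpha_k}\in D_3$ and $\phi_{\beta_k}\in D_1$. In $\phi$'s alignment with $(g_B)^{s-1}$, each $\phi_{\alpha_k}$ matches only positions of type in $\{2,3\}$ (not in $\{1\}$, since $\phi_{\alpha_k}\notin D_1$) and each $\phi_{\beta_k}$ only positions of type in $\{1,2\}$. Picking for each $\phi_{\alpha_k}$ a type-$3$ witness position $3a_k-1$ and for each $\phi_{\beta_k}$ a type-$1$ witness position $3b_k-2$ in $(g_B)^{s-1}$, alignment monotonicity forces $a_k+1\le b_k\le a_{k+1}$, hence $a_1<a_2<\cdots<a_{s-1}\le s-2$: these are $s-1$ distinct integers in $\{1,\dots,s-2\}$, impossible. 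Case (iv) follows by the mirror argument. For cases (i) and (ii), let $\tau$ be the common prefix--suffix. From the $\psi$-side, $\tau$ matches a prefix of $(g_A)^{s-1}$ of some length $q$ and therefore contains $\lfloor q/3\rfloor$ $3$-to-$1$ jumps. From the $\phi$-side, $\tau$ matches a suffix of $(g_B)^{s-1}$ of some length $q'$; the same pigeonhole restricted to that suffix caps the admissible jump count at roughly $\lfloor(q'-2)/3\rfloor$. Boundary conditions---the first vertex of $\tau$ is aligned with $p_{2,3}$ at the start of $(g_A)^{s-1}$ and simultaneously with some position of $(g_B)^{s-1}$ determined by $q'$, and dually for the last vertex---pin $q$ and $q'$ together so that the two bounds clash, giving the contradiction.

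The main technical obstacle will be the edge case in the pigeonhole where some $\phi_{\alpha_k}$ or $\phi_{\beta_k}$ is matched only to a type-$2$ position of $(g_B)^{s-1}$, which is possible when the vertex happens to lie in $D_2\cap D_3$ or $D_1\cap D_2$. In that scenario the type-$3$ or type-$1$ witness index is not directly available and the counting must be redone by incorporating the $s-1$ type-$2$ positions and tracking how each one can be shared by at most two of the $\alpha$- and $\beta$-vertices via the alignment's contiguous-range structure. Once this bookkeeping is handled the same pigeonhole still applies, and cases (i) and (ii) then reduce to a mechanical case split on the boundary of $\tau$; the only nontrivial geometric input throughout remains the disjointness of the open disks around $p_{1,3}$ and $p_{3,3}$.
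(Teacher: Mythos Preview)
Your argument rests on a mistaken geometric premise. You take the listed ray angles $90^\circ,150^\circ,210^\circ$ at face value and conclude that only the open disks around $p_{1,3}$ and $p_{3,3}$ are disjoint. In the paper's construction, however, $p_{1,3},p_{2,3},p_{3,3}$ sit at the vertices of an \emph{equilateral} triangle inscribed in the radius-$3$ circle (the listed angles are evidently a typo; the continuous bound $r=2.25$ is half the altitude of that equilateral triangle, and the paper's own proof opens with ``we chose the value of $r$ such that the three disks $D_1,D_2$ and $D_3$ are disjoint''). The value $r=3\sin\frac{\pi}{3}=\tfrac{3\sqrt{3}}{2}$ is precisely half the common side length, so \emph{all three} open disks are pairwise disjoint.

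With pairwise disjointness the lemma is immediate, and the paper's proof of (i) is just this: the first vertex of $\psi$ lies in $D_2$ (since $(g_A)^{s-1}$ starts at $p_{2,3}$), and the first vertex of $\psi$ that leaves $D_2$ must lie in $D_3$; hence $\psi$ contains two \emph{consecutive} vertices $q\in D_2$, $q'\in D_3$. The type sequence of $(g_B)^{s-1}$ is $1,3,2,1,3,2,\dots$, so no single vertex of $(g_B)^{s-1}$ is close to both $q$ and $q'$ (the disks are disjoint) and no consecutive pair has types $(2,3)$. Therefore $\phi$ cannot contain $q,q'$ consecutively, which already kills (i). The remaining cases are symmetric.

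Your $3$-to-$1$-jump pigeonhole is an attempt to push the argument through with only one disjoint pair of disks. Even on its own terms it is not complete: you explicitly leave the type-$2$-only edge case as an unresolved ``technical obstacle'', and the handling of (i)--(ii) (``boundary conditions\dots pin $q$ and $q'$ together so that the two bounds clash'') is an outline, not a proof. With the correct geometry none of this machinery is needed; the whole lemma reduces to finding a single forbidden consecutive pair.
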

\begin{proof}
Note that we chose the value of $r$ such that the three disks $D_1,D_2$ and
$D_3$ are disjoint, see also Figure~\ref{figs:threedisks}.  We prove falseness
of statement $(i)$ in detail. The other proofs are omitted since they are very
similar. The first point in the sequence $\psi$ lies in $D_2$, since
$(g_A)^{s-1}$ starts with $p_{2,3}$. The first point of $\psi$ that lies
outside of the disk $D_2$ must lie in the disk $D_3$, since the second vertex
of $(g_A)^{s-1}$ is $p_{3,3}$. This means that $\psi$ contains two consecutive
points $q$ and $q'$, the first inside $D_2$ and the second inside $D_3$. We
claim that $\phi$ does not contain $q$ and $q'$ consecutively. Indeed, there is
no point in $(g_B)^{s-1}$ that  would be within distance $r$ to both $q$ and
$q'$ and there are no two consecutive points in $(g_B)^{s-1}$ such that the
first is within distance $r$ to $q$ and the second is within distance $r$ to
$q'$. Therefore $(i)$ is false.
\end{proof}

\begin{lemma}
\label{2d:hardness:2}
    For any instance of an {\sc SCS} decision problem, consider the above reduction with $s= 2t+2$.  
If there exists a center curve of size at most $(3s+3)t$ and of radius
strictly smaller than $r=3\sin\frac{\pi}{3}$ then the instance is true.
\end{lemma}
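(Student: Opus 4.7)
The plan is to extract from any valid center curve $c$ (satisfying $|c| \le (3s+3)t$ and $d_{DF}(c, \gamma(s_i)) < r$ for every $i$) a common supersequence $s^*$ of the input strings of length at most $t$. The argument has three ingredients: a lower bound on the number of $c$-vertices required to match each letter gadget, a consistent $A$/$B$ labeling of the resulting \emph{letter blocks} of $c$ via Lemma~\ref{lem:no:overlap:discrete}, and a simple vertex count.

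First, since $r < 3\sin\frac{\pi}{3}$, the three outer disks $D_1, D_2, D_3$ centered at $p_{1,3}, p_{2,3}, p_{3,3}$ are pairwise disjoint. Consecutive vertices of every $\widehat{g}_X$-gadget lie in different outer disks, so no single $c$-vertex can be within distance $r$ of two consecutive gadget vertices. Hence for each letter gadget of each $\gamma(s_i)$, the discrete Fr\'echet matching $\mu_i$ must use exactly $3s+1$ distinct $c$-vertices (one per gadget vertex, each in the appropriate outer disk). Collect these letter-gadget matches, taken over all $\mu_i$, into a family $\mathcal{B}$ of $c$-vertex blocks. Applying Lemma~\ref{lem:no:overlap:discrete} with $s > 1$, no block can be within Fr\'echet distance less than $r$ of both $(g_A)^{s-1}$ and $(g_B)^{s-1}$, since this would force one to be a subsequence of the other; hence every element of $\mathcal{B}$ carries a well-defined $A$ or $B$ label, and overlapping blocks arising from different matchings necessarily agree on this label. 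Ordering $\mathcal{B}$ along $c$ yields a label string $s^*$; because each $\mu_i$ is order-preserving and maps letters of $s_i$ to blocks whose labels must agree, $s_i$ is a subsequence of $s^*$ for every $i$.

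For the final counting, the blocks in $\mathcal{B}$ are (essentially) pairwise-disjoint subranges of $c$ of size $3s+1$, so $|s^*|(3s+1) \le |c| \le (3s+3)t$. Substituting $s = 2t+2$ gives $|s^*| \le (6t+9)t/(6t+7) = t + 2t/(6t+7) < t+1$, and hence $|s^*| \le t$. The main technical obstacle is formalizing the combined family $\mathcal{B}$ and its labeling --- in particular, verifying that letter-gadget matches coming from different Fr\'echet pairings combine into a coherent family of pairwise-disjoint labeled blocks (ruling out configurations where a $c$-vertex range could simultaneously represent both an $A$- and a $B$-gadget). This is where Lemma~\ref{lem:no:overlap:discrete} does the heavy lifting: the choice $r = 3\sin\frac{\pi}{3}$ precisely separates the $A$- and $B$-patterns once $s > 1$, which is guaranteed by the setting $s = 2t+2$.
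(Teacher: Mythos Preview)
Your approach is the paper's in spirit---use Lemma~\ref{lem:no:overlap:discrete} to keep $A$- and $B$-structure in the center curve separate, then count---but the step ``the blocks in $\mathcal{B}$ are (essentially) pairwise-disjoint'' is a genuine gap. The lemma rules out overlap between an $A$-block and a $B$-block; it says nothing about two $A$-blocks (arising, say, from $\mu_i$ and $\mu_j$ with $i\neq j$), and such same-label blocks can overlap arbitrarily. Without disjointness the inequality $|s^*|(3s+1)\le |c|$ is unjustified. Merging overlapping same-label blocks into one letter does not repair this either, since two \emph{consecutive} $A$-gadgets of a single $s_i$ can then end up in the same merged block (connected through a block of some other $\mu_j$), whereupon $s_i$ is no longer a subsequence of $s^*$. (A small separate slip: $\widehat g_B = p_{1,3}(g_B)^s$ begins with two copies of $p_{1,3}$, so it forces only $3s$ distinct $c$-vertices, not $3s+1$.)

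The paper avoids the whole issue by not building $s^*$ from the matchings at all. Instead it canonicalizes the center curve $\beta$ directly: delete vertices outside $D_1\cup D_2\cup D_3$, snap each remaining vertex to its disk center, and collapse consecutive duplicates, obtaining a string $\widehat\beta$ over $\{p_{1,3},p_{2,3},p_{3,3}\}$. The supersequence $s^*$ is then read off $\widehat\beta$ by replacing each occurrence of $(g_A)^{s-1}$ by $A$ and each $(g_B)^{s-1}$ by $B$; the lemma ensures these two patterns never overlap, so the ordering of the output is well-defined. The count uses $3(s-1)$ vertices per letter, yielding $|s^*|\le\lfloor(3s+3)t/(3(s-1))\rfloor = t+\lfloor 2t/(s-1)\rfloor \le t$ for $s=2t+2$, and the supersequence property follows because every letter gadget of every $\gamma(s_i)$ forces the matched portion of $\widehat\beta$ to contain the corresponding $(g_X)^{s-1}$ pattern.
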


\begin{proof}
Let the center curve be denoted by $\beta$. We claim that we can transform
    this curve into a common superstring of $S$---the instance of the  {\sc SCS} decision
problem.  We define three disks of radius $r$ centered at the points $p_{1,3}$,
$p_{2,3}$ and $p_{3,3}$.  We denote the disks with $D_1$, $D_2$ and $D_3$, see
Figure~\ref{figs:threedisks}.  By construction, the three disks are pairwise
disjoint.
In the first phase, we remove all vertices of $\beta$ that are not
contained in or on the boundary of any of the disks $D_1,D_2$ or $D_3$. Next, we snap every vertex in
$D_1$ to its center $p_{1,3}$ and similarly, we snap each vertex in $D_2$ to
its center $p_{2,3}$ and every vertex in $D_3$ to its center in $p_{3,3}$.  In
the third phase we replace consecutive copies of the same vertex in the
sequence with a single copy of this vertex. We denote the resulting point
sequence with $\widehat{\beta}$. From this curve we can now directly generate
a superstring by replacing 
\begin{eqnarray*}
(g_A)^{s-1} ~\rightarrow~ A \\
(g_B)^{s-1} ~\rightarrow~ B 
\end{eqnarray*}
We remove all remaining points from the sequences that are not matched by the
above rule.  By Lemma~\ref{lem:no:overlap:discrete} there cannot be any overlap
between subsequences that produce different letters. Therefore, the resulting
sequence has a unique ordering.
We claim that the resulting string is a common superstring of the set of strings $S$.
Since $\beta$ is a center curve of radius $r$, it is within Fr\'echet
distance $r$ to $\gamma(s_i)$ for all $i=1, \dots, n$.  By construction, every
letter $A$ of $s_i$ generates a subsequence $(g_A)^{s-1}$ in $\gamma(s_i)$ and every
letter $B$ of $s_i$ generates a subsequence $(g_B)^{s-1}$ in $\gamma(s_i)$. In order to
match each subsequence $g_A$, the curve $\beta$ needs to visit the
disks $D_1,D_2$ and $D_3$ in the correct order, and have a vertex in each disk.
By our transformation, a subcurve of $\beta$ that matches to a subsequence
$(g_A)^{s-1}$ is transformed into the letter $A$. Similarly, a subcurve that matches
to a subsequence $(g_B)^{s-1}$ of $\gamma(s_i)$ is transformed into the letter $B$.
Therefore, the resulting sequence is a supersequence of $s_i$ for all $i=1, \dots, n$.

The size of the generated supersequence is at most 
\[\left\lfloor \frac{(3s+3)t}{3(s-1)} \right\rfloor 
= \left\lfloor \frac{(3(s-1)+6)t}{3(s-1)}\right\rfloor 
= t + \left\lfloor \frac{2t}{s-1} \right\rfloor.\] 
    For $s \geq 2t+2$, this is at most $t$. This implies that the {\sc SCS} instance is true.
\end{proof}

From the above lemmas we obtain Theorem~\ref{thm:2dhardness:discrete}. Extending the analysis to the continuous Fr\'echet distance yields  Theorem~\ref{thm:2dhardness:continuous}. Refer to  Section~\ref{sec:2Dhard-app} for details.

\begin{restatable}{theorem}{hardtwodcontappx}
\label{thm:2dhardness:continuous}
The \klc problem under the continuous Fr\'echet distance 
    is $\mathsf{NP}$-hard to approximate within any factor smaller than $2.25$
for curves in the plane or higher dimensions, even if $k=1$.
\end{restatable}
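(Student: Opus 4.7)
The plan is to mirror the proof of Theorem~\ref{thm:2dhardness:discrete}, keeping the same reduction, gadgets, complexity bound $\ell = 6t^2+9t$, and parameter $s = 2t+2$, and only dropping the threshold approximation radius from $3\sin(\pi/3)$ to $r = 9/4$. Two lemmas need to be adapted: the positive direction (Lemma~\ref{2d:hardness:1}) and the hardness direction (Lemma~\ref{2d:hardness:2}), with the continuous analog of Lemma~\ref{lem:no:overlap:discrete} doing the decisive geometric work.

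For the positive direction I would reuse the center curve built on the middle circle of radius $2$ together with the origin $p_0$, and verify that the Fr\'echet distance bound holds along edges as well as at vertices. A center edge $\overline{p_{i,2}p_{j,2}}$ matched linearly to the gadget edge $\overline{p_{i,3}p_{j,3}}$ stays within Euclidean distance $1$ throughout, because the pointwise difference is an interpolation of two parallel unit radial vectors of length $1$. Sub-curves of the input consisting entirely of buffer gadgets are absorbed into stationary sub-edges at $p_0$, and every point of a buffer gadget (vertex or edge interior) lies within distance $1$ of $p_0$. Hence the same construction certifies continuous Fr\'echet radius at most $1$ for every true {\sc SCS} instance.

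The main step is the continuous analog of Lemma~\ref{lem:no:overlap:discrete}: for $r = 9/4$, if $d_F((g_A)^{s-1}, \psi) < r$ and $d_F((g_B)^{s-1}, \phi) < r$, then none of the four overlap or subsequence relations between $\psi$ and $\phi$ can hold. At this radius the three disks $D_1, D_2, D_3$ around $p_{1,3}, p_{2,3}, p_{3,3}$ remain pairwise disjoint, so both $\psi$ and $\phi$ must visit them in the order dictated by $g_A$ and $g_B$ respectively. The delicate new feature compared to the discrete case is that a continuous Fr\'echet matching can pair a single point of $\psi$ with an interior point of an edge of $(g_A)^{s-1}$, which opens additional ways to cheat; the value $9/4$ is the threshold above which such edge-interior cheating becomes feasible, namely the largest $r$ for which the $r$-disk around any vertex of the equilateral triangle $p_{1,3}p_{2,3}p_{3,3}$ is disjoint from the $r$-neighborhood of the opposite edge (the triangle has altitude $9/2$, so the condition is $2r < 9/2$). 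Establishing this geometric certificate precisely, and translating it into impossibility of each of the four overlap relations by reasoning about the entry and exit times of $\psi$ and $\phi$ into the three disks rather than merely about consecutive vertices, is the main obstacle of the proof. Once it is in place, the hardness direction proceeds exactly as in Lemma~\ref{2d:hardness:2}: restrict the vertex sequence of a hypothetical center curve $\beta$ of complexity $(3s+3)t$ to the three disks, snap each vertex to its disk center, collapse consecutive duplicates, decompose the resulting sequence uniquely into $(g_A)^{s-1}$ and $(g_B)^{s-1}$ blocks via the continuous no-overlap lemma, and read off a common supersequence whose length is bounded by $t + \lfloor 2t/(s-1)\rfloor \leq t$ for $s \geq 2t+2$.
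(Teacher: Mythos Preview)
Your positive direction and your continuous no-overlap lemma are both fine and match the paper's Section~\ref{sec:2Dhard-app} (Lemma~\ref{2d:hardness:1} is reused verbatim, and Lemma~\ref{lem:no:overlap:count} is exactly the disjoint-interval statement you describe, proved via the altitude bound $2r<9/2$). The gap is in the final extraction step.

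You propose to reuse the discrete extraction literally: keep only vertices of $\beta$ that lie in $D_1\cup D_2\cup D_3$, snap, collapse, and then read off $(g_A)^{s-1}$/$(g_B)^{s-1}$ blocks. Under the continuous \fd this step breaks for two related reasons. First, $\beta$ can pass through a disk on the interior of an edge without placing a vertex there; after you discard vertices outside the disks, the snapped sequence $\widehat{\beta}$ need not contain the full cyclic pattern $(g_A)^{s-1}$ for a subcurve that nonetheless matches $\widehat{g}_A$ within radius $r$. So you cannot certify that the string you read off is a supersequence of each $s_i$. Second, even if you abandon snapping and count \emph{all} vertices of a matching subcurve, the guaranteed count is only $3(s-2)$, not $3(s-1)$: the argument ``no line stabs all three disks for $r<2.25$'' yields one turning vertex per sliding window of three consecutive disk visits, which is one fewer per cycle than in the discrete case. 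With the denominator $3(s-2)$ your choice $s=2t+2$ gives $\left\lfloor (3s+3)t/(3(s-2))\right\rfloor = t + \lfloor 3t/(2t)\rfloor = t+1$, which does not close the reduction.

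The paper fixes both issues at once (Lemmas~\ref{lem:size:count} and~\ref{2d:hardness:2:cont}): it scans $\beta$ continuously, looking for maximal subcurves $\beta[t_i,t_{i+1}]$ with $d_F(\beta[t_i,t_{i+1}],(g_A)^{s-1})<r$ or the $B$-analogue, uses the no-overlap lemma on parameter intervals (not on vertex subsequences) to make these subcurves disjoint, and uses the $3(s-2)$ vertex lower bound to cap the number of letters produced. To absorb the weaker denominator it takes $s=3t+3$ (hence a larger $\ell=(3s+3)t$) rather than $s=2t+2$. Your plan becomes correct if you replace the vertex-snapping step by this subcurve scan and adjust $s$ accordingly.
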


%\section{Hardness of Approximation in 2D}

\subsection{Continuous Fr\'echet Distance}
\label{sec:2Dhard-app}

We prove that the reduction described in Section~\ref{sec:2dhard} implies hardness of approximation for any factor smaller than $2.25$ in the case where distances are measured under the continuous Fr\'echet distance.
(When concatenating two point sequences, we implicitly insert the edge connecting the endpoints.) 
Lemma~\ref{2d:hardness:1} still holds as is. Lemma~\ref{2d:hardness:2} holds
for the current construction only by using a slightly smaller radius, 
namely $r=2.25$.
The current construction would not work with a larger radius if distances are
measured under the continuous Fr\'echet distance. Intuitively, the center curve
would be able to match to an $A$-gadget or $B$-gadget using fewer vertices,
since the distance of an edge of the triangle $(p_{1,3}, p_{2,3}, p_{3,3})$ to
its opposite vertex is smaller than $2r$. In this case, the length of the
generated supersequence might exceed $t$. However, Lemma~\ref{2d:hardness:2:cont} below 
testifies that if we decrease the radius $r$ such that each entire edge of the
triangle is at distance greater $2r$ to its opposite vertex, then the reduction
still works. Refer to Figure~\ref{fig:approx_radii} for a visualization of the
used radii. 

\begin{figure}\centering
\includegraphics[width=0.9\textwidth]{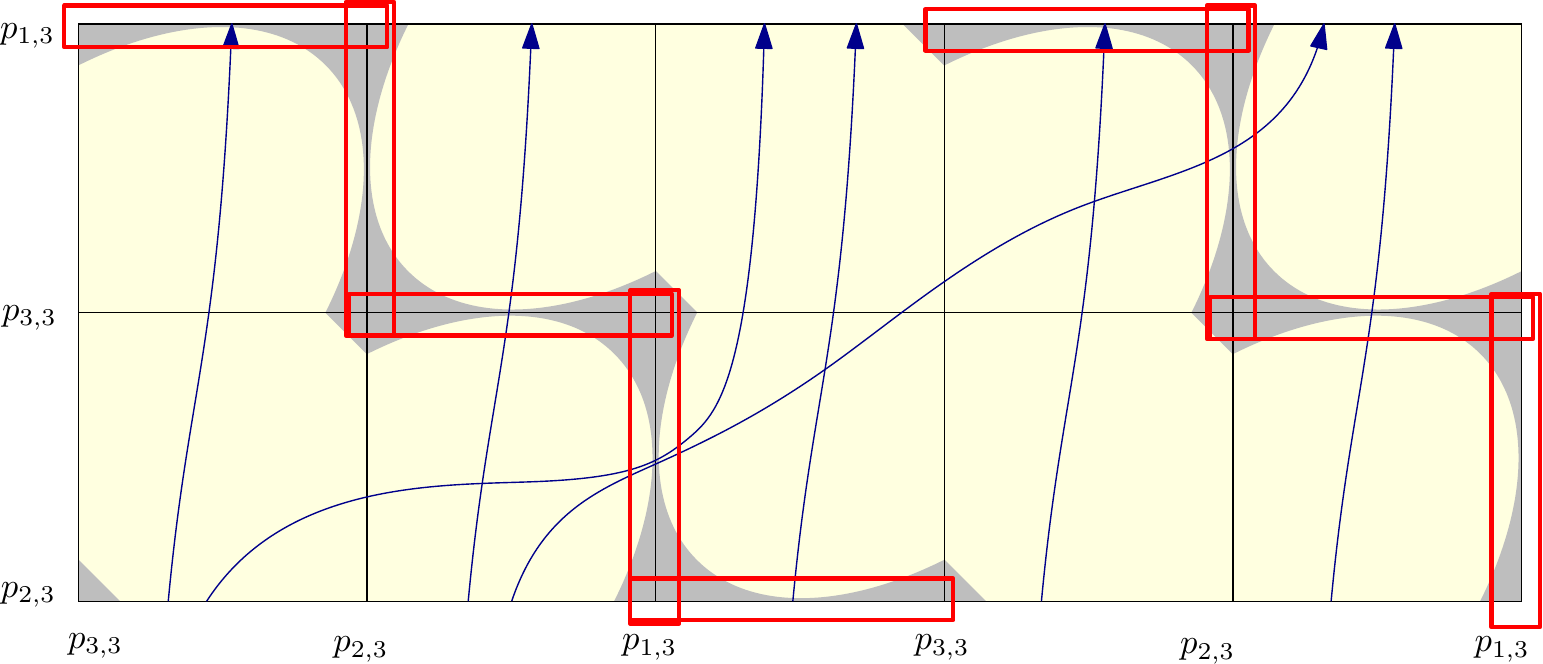}
\caption{$\delta$-Free-space diagram of the A-gadget $g_A$ and two cycles of the
B-gadget $g_B$. In this diagram, $\delta=2r$ for $r=2.25$. Red boxes indicate vertex-edge pairs
of a corner of the triangle with its opposite edge. These parts of the diagram are blocked 
for $\delta<2r$. Examples of paths spanning the full A-gadget and smaller and larger 
portions of the B-gadget are shown. The figure illustrates that such paths are not feasible for
$\delta<2r$. } \label{fig:2dgadgets:freespace}
\end{figure}

We start by proving the equivalence of Lemma~\ref{lem:no:overlap:discrete} in the
case that distances are measured under the continuous Fr\'echet distance.

\begin{lemma}\label{lem:no:overlap:count}
Let $\gamma:[0,1] \rightarrow \mathbb{R}^2$ be a polygonal curve in the plane
and let $t_1,t_2,t_3,t_4 \in [0,1]$.  If $d_F((g_A)^{s-1}, \gamma[t_1,t_2]) <
r$ and $d_F((g_B)^{s-1}, \gamma[t_3,t_4]) < r$, for $r=2.25$, then 
$[t_1,t_2]~\cap~[t_3,t_4]~=~\emptyset$.
\end{lemma}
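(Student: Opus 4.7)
The plan is to reduce the statement to a Fr\'echet-distance bound between subcurves of $(g_A)^{s-1}$ and $(g_B)^{s-1}$ via the triangle inequality, and then to obstruct this bound in the free-space diagram using the geometry of the triangle $p_{1,3}p_{2,3}p_{3,3}$ together with the opposite cyclic orientations of the two gadgets. Suppose for contradiction that the two parameter intervals overlap on a set of positive measure. By the symmetry between the A- and B-roles, it suffices to treat the shifted case $t_1 \le t_3 < t_2 \le t_4$ and the nested case $t_1 \le t_3 < t_4 \le t_2$. In the shifted case the reparameterizations witnessing $d_F((g_A)^{s-1}, \gamma[t_1,t_2]) < r$ and $d_F((g_B)^{s-1}, \gamma[t_3,t_4]) < r$ restrict to reparameterizations showing that the overlapping subcurve $\tau := \gamma|_{[t_3,t_2]}$ lies within Fr\'echet distance strictly less than $r$ of both a suffix $\alpha$ of $(g_A)^{s-1}$ (starting at some parameter $u_A$) and a prefix $\beta$ of $(g_B)^{s-1}$ (ending at some parameter $v_B$). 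The triangle inequality for the continuous Fr\'echet pseudo-metric then yields $d_F(\alpha, \beta) < 2r$; the nested case produces the analogous bound between all of $(g_B)^{s-1}$ and a middle portion of $(g_A)^{s-1}$.

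Next, I would exploit the key geometric fact behind the choice $r = 2.25$: for the triangle $p_{1,3}p_{2,3}p_{3,3}$ the distance from each vertex to any point on the opposite edge is at least $2r$. Consequently, for any $\delta < 2r$ no point on an edge lies within $\delta$ of the opposite vertex, so in the free-space diagram $\mathrm{FSD}((g_A)^{s-1}, (g_B)^{s-1})$ at radius $\delta$ every vertex-edge cell-boundary corresponding to a triangle-opposite pair has empty free space. This is exactly what the red boxes in Figure~\ref{fig:2dgadgets:freespace} depict. The Fr\'echet bound from the first step produces a monotone free-space path between specific corner points of this diagram, namely $(u_A,0)$ to $(L_A, v_B)$ in the shifted case (and the analogue in the nested case). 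Since $(g_A)^{s-1}$ traverses the cyclic order $p_{2,3}\to p_{3,3}\to p_{1,3}$ while $(g_B)^{s-1}$ traverses the reversed cyclic order $p_{1,3}\to p_{3,3}\to p_{2,3}$, any monotone path covering enough of the two gadgets to span a full cycle of one of them must cross a horizontal or vertical cell-boundary where one gadget is at a vertex and the other gadget is tracing the triangle-opposite edge, which by the geometric obstruction is blocked. This yields the required contradiction.

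The principal obstacle is the careful case analysis in the free-space diagram: one has to verify that every monotone path between the induced corner points is genuinely forced through a blocked opposite-pair boundary, in both the shifted and nested configurations, and for all choices of $u_A$ and $v_B$. The argument mirrors, in the continuous setting, the discrete counterpart of Lemma~\ref{lem:no:overlap:discrete}, where the obstruction was phrased combinatorially via consecutive visits of $\psi$ and $\phi$ to the pairwise disjoint disks $D_1,D_2,D_3$. Those disks remain pairwise disjoint here for $r < 3\sqrt{3}/2$, so the vertex-visits of $\gamma$ near the triangle are still well-defined, and the continuous argument tracks how the two opposite cyclic orderings cannot be simultaneously realised by any monotone free-space path, driving home the contradiction.
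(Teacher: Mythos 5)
Your proposal is correct and takes essentially the same route as the paper: contradiction via triangle inequality, a case split on how the two parameter intervals overlap (the paper enumerates four cases; you fold these into two by the $A$/$B$ symmetry, which is fine), and then a free-space-diagram obstruction driven by the fact that every vertex of the triangle $p_{1,3}p_{2,3}p_{3,3}$ is at distance at least $2r$ from the opposite edge. Like the paper's proof, the final step rests on an "illustrated by Figure~\ref{fig:2dgadgets:freespace}" argument rather than a fully worked-out verification that every monotone path in the diagram must cross a blocked vertex--edge cell; you correctly flag this as the remaining work, so the two arguments are at the same level of rigor.
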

\begin{proof}
For the sake of contradiction, assume that
$[t_1,t_2]~\cap~[t_3,t_4]~\neq~\emptyset$ and denote with $[t_5,t_6]$ the
interval of intersection. This implies that there exists a subcurve $\gamma_A$
of $g_A^{s-1}$ and a subcurve $\gamma_B$ of $g_B^{s-1}$ with the property that
$d_F( \gamma(t_5,t_6), \gamma_A) \leq r$ and $d_F( \gamma(t_5,t_6), \gamma_B)
\leq r$. Since the Fr\'echet distance satisfies the triangle inequality, this
implies $d_F(\gamma_A,\gamma_B) < 2r$. 
There are four different cases 
\begin{inparaenum}[(i)]
\item $t_5=t_1$ and $t_6=t_2$
\item $t_5=t_3$ and $t_6=t_4$
\item $t_5=t_3$ and $t_6=t_2$
\item $t_5=t_1$ and $t_6=t_4$
\end{inparaenum}
In the first case $\gamma_A$ is equal to the entire curve $g_A^{s-1}$.
Figure~\ref{fig:2dgadgets:freespace} illustrates that in this case, the
Fr\'echet distance $d_F(\gamma_A,\gamma_B)$ must be at least $2r$, which
contradicts the above.  In the second case, we have the symmetric case that
$\gamma_B$ is equal to the entire curve $g_B^{s-1}$.  In the third case,
$\gamma_A$ is a suffix curve of $g_A^{s-1}$ and $\gamma_B$ is a prefix curve of
$g_B^{s-1}$. In this case, there must be a monotone path in the free space 
diagram shown in Figure~\ref{fig:2dgadgets:freespace} that starts at a vertical 
line of a vertex at $p_{3,3}$ and ends at a horizontal line of a vertex at 
$p_{1,3}$. However, as the figure illustrates, this is only possible in case
$d_F(\gamma_A,\gamma_B) > 2r$. The fourth case is symmetric to the third. 
The claim follows by contradiction in all cases.
\end{proof}

\begin{lemma}\label{lem:size:count}
Let $\gamma:[0,1] \rightarrow \mathbb{R}^2$ be a polygonal curve in the plane and let $t_1,t_2 \in [0,1]$.
If $d_F((g_A)^{s-1}, \gamma[t_1,t_2]) < r$ or $d_F((g_B)^{s-1}, \gamma[t_1,t_2]) < r$, for $r=2.25$, then the number of vertices of
$\gamma[t_1,t_2]$ is at least $3(s-2)$ (excluding endpoints).
\end{lemma}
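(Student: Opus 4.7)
Write $\gamma^* := \gamma[t_1, t_2]$ and, by symmetry, assume $d_F((g_A)^{s-1}, \gamma^*) < r$ with $r = 9/4$. The three points $p_{1,3}, p_{2,3}, p_{3,3}$ form an equilateral triangle of circumradius $3$, side length $3\sqrt{3}$, and altitude $9/2 = 2r$. Consequently the open disks $D_i := D(p_{i,3}, r)$ are pairwise disjoint, and for $\{i,j,k\} = \{1,2,3\}$ the capsule $C_{ik}$ (the Minkowski sum of the edge $\overline{p_{i,3}\,p_{k,3}}$ with a radius-$r$ disk) is tangent to $D_j$ from outside. In particular, any straight segment with endpoints in $\overline{D_i} \cup \overline{D_k}$ is disjoint from $\mathrm{int}(D_j)$.

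\textbf{Matched points and a first constraint.} The reparametrization witnessing $d_F<r$ produces, for each of the $3(s-1)$ vertices of $(g_A)^{s-1}$, a matched point on $\gamma^*$, giving a sequence $a_1,\ldots,a_{3(s-1)}$ in order along $\gamma^*$ with $a_m \in \mathrm{int}(D_{\sigma(m)})$, where $\sigma$ is the $3$-periodic pattern $(2,3,1,2,3,1,\ldots)$. Three consecutive matched points thus lie in three distinct disks; if they all lay on a single $\gamma^*$-edge, the straight sub-segment from $a_m$ to $a_{m+2}$ would lie inside $C_{\sigma(m)\sigma(m+2)}$ and miss $\mathrm{int}(D_{\sigma(m+1)})$, contradicting $a_{m+1} \in \mathrm{int}(D_{\sigma(m+1)})$. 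Hence at least one $\gamma^*$-vertex lies strictly between $a_m$ and $a_{m+2}$.

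\textbf{Sharpening the count.} The argument so far only forces roughly $\lceil 3(s-1)/2\rceil$ edges, a factor of two weaker than needed. To reach $3(s-2)$ interior vertices I plan to exploit \emph{direction rigidity}: once a $\gamma^*$-edge carries two matched points in disks $D_u$ and $D_v$, the line supporting it must intersect both disks, pinning its slope to lie in a narrow range determined by the edge $\overline{p_{u,3}p_{v,3}}$. For the next $\gamma^*$-edge to also carry two matched points (in the next cyclic pair $(D_v, D_w)$), the shared $\gamma^*$-vertex $v$ must simultaneously (i)~lie in the capsule $C_{vw}$ of the intermediate gadget edge, and (ii)~be the endpoint of a line through the required next two disks. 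Working around all three cyclic pairs $(D_2,D_3), (D_3,D_1), (D_1,D_2)$ and bounding the admissible slope ranges---this is precisely the blocked ``corner-versus-opposite-edge'' configuration highlighted in Figure~\ref{fig:2dgadgets:freespace}---shows that these constraints cannot all be met around a full cyclic rotation, so at least one in every three consecutive $\gamma^*$-edges carries at most one matched point. Amortising this bound over the whole curve and absorbing the bounded slack at the two endpoints $\gamma^*(t_1),\gamma^*(t_2)$ gives at least $3(s-1) - 3 = 3(s-2)$ interior vertices of $\gamma^*$.

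\textbf{Main obstacle.} The capsule tangency is easy to obtain and powers the ``no three matched points on one edge'' conclusion, but it does not by itself yield the required bound. The heart of the proof is the direction-rigidity analysis of Step~4, which mirrors the free-space argument used for Lemma~\ref{lem:no:overlap:count}: it is the strict inequality $d_F<r$ that simultaneously renders the ``corner-versus-opposite-edge'' cells of the free-space diagram blocked and makes the amortised per-edge matched-point bound go through. Carrying out this geometric case analysis cleanly (or equivalently, translating the blocked-cell structure of Figure~\ref{fig:2dgadgets:freespace} into a sharp amortised count on $\gamma^*$-edges) is where I expect the bulk of the proof work to lie.
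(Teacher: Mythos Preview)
Your identification of the weak point is sound: the bare fact that no line meets all three open disks of radius $r=2.25$ (your capsule-tangency observation) only forbids three consecutive matched points on a single $\gamma^*$-edge, and that alone forces merely about $\lceil 3(s-1)/2\rceil$ edges, not $3(s-2)$ interior vertices. But your repair does not actually reach the target. The conclusion you extract from the direction-rigidity analysis---``at least one in every three consecutive $\gamma^*$-edges carries at most one matched point''---only rules out the pattern $2,2,2$ of matched-point counts on consecutive edges. The densest pattern still permitted is $2,2,1,2,2,1,\ldots$, which packs five matched points into every three edges and hence forces only about $\tfrac{3}{5}\cdot 3(s-1)=\tfrac{9}{5}(s-1)$ edges, far short of $3(s-2)$. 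The sentence ``amortising this bound \ldots\ gives at least $3(s-1)-3$ interior vertices'' is therefore not supported by the preceding step: to obtain $3(s-2)$ you would need that \emph{essentially every} edge carries at most one matched point (equivalently, that two consecutive edges cannot both carry two), and your sketch does not establish anything that strong.

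For comparison, the paper's argument is much shorter and does not introduce any slope or direction-rigidity analysis at all. It uses exactly the same geometric fact---no line stabs the three open disks---but proceeds by a sliding window along the disk sequence $D_2,D_3,D_1,D_2,\ldots$: since $\gamma$ must start in $D_2$ and pass through $D_3$ before entering $D_1$, it needs a vertex before entering $D_1$; shifting the window by one disk, it needs \emph{another} vertex before next entering $D_2$; and so on. Read literally this yields one new vertex per shift, i.e.\ $3(s-1)-2\ge 3(s-2)$ interior vertices. So the paper's route is far simpler than yours, relying only on the sliding-window count rather than any free-space or slope bookkeeping; if you want to pursue your approach instead, the real task is to tighten the per-edge matched-point bound from $5/3$ down to essentially $1$, which the current amortisation does not do.
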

\begin{proof}
Assume for now that $\gamma$ matches to $g_A^{s-1}$ within Fr\'echet distance $r$. We can prove the claim by induction on the vertices of $\gamma$. $\gamma$ has to start in $D_2$, pass through $D_3$ before entering $D_1$. The three disks of radius strictly smaller than $r=2.25$ cannot be stabbed by a line. Therefore, $\gamma$ needs to have an extra vertex before entering $D_1$. Now we can make the same argument for the sequence of disks $D_3, D_1, D_2$--- there needs to be another vertex on $\gamma$ before the curve enters $D_2$. For each induction step we shift the window of three disks by one disk along the sequence $g_A^{s-1}$. This proves the claim for the $A$-gadget. The proof for the $B$-gadget is analogous.
\end{proof}

\begin{lemma}\label{lem:double:size:count}
Let $\phi:[0,1] \rightarrow \mathbb{R}^2$ be a polygonal curve in the plane,
such that either \begin{inparaenum}[(i)]
\item $d_F(\phi, \gamma(A~A)) < r$,
\item $d_F(\phi, \gamma(B~B)) < r$,
\item $d_F(\phi, \gamma(A~B)) < r$, or
\item $d_F(\phi, \gamma(B~A)) < r$
\end{inparaenum} for
$r=2.25$, then $\phi$ contains at least $6s$ points (excluding endpoints).  
\end{lemma}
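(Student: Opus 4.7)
The plan is to handle all four cases by the same decomposition: in each configuration, the curve $\gamma(\cdot)$ contains exactly two letter-gadget blocks drawn from $\{\widehat{g}_A, \widehat{g}_B\}$, separated by a buffer block $(\widehat{g}_a\,\widehat{g}_b)^{2t}$ formed by the trailing buffer of the first letter concatenated with the leading buffer of the second. The idea is to partition the Fr\'echet matching between $\phi$ and $\gamma(\cdot)$ into two subproblems, one per letter-gadget block, and to lower-bound the number of interior vertices of $\phi$ that each subproblem contributes, then sum.

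First I would argue that the two subproblems occupy disjoint parameter intervals on $\phi$. Any reparametrization witnessing $d_F(\phi,\gamma(\cdot))<r$ is monotone, so the subcurves of $\phi$ matched to the first and to the second letter-gadget block cannot overlap. For the mixed cases~(iii) and~(iv), disjointness also follows directly from Lemma~\ref{lem:no:overlap:count}, applied to the copies of $(g_A)^{s-1}$ and $(g_B)^{s-1}$ embedded inside the two letter-gadget blocks. In the pure cases~(i) and~(ii), monotonicity alone suffices, since the two letter-gadget blocks sit at disjoint parameter intervals of~$\gamma(\cdot)$.

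Next I would lower-bound the number of interior vertices of $\phi$ on each of these two parameter intervals by re-running the sliding-window induction from the proof of Lemma~\ref{lem:size:count}, but on the full disk-visit sequence $D_1,D_2,D_3,D_1,D_2,D_3,\ldots,D_1$ of length $3s+1$ induced by a single $\widehat{g}_A=p_{1,3}(g_A)^s$ (and analogously for $\widehat{g}_B$). The relevant geometric fact is exactly the one already exploited in Lemma~\ref{lem:size:count}: the three disks of radius $r=2.25$ centered at $p_{1,3},p_{2,3},p_{3,3}$ are pairwise disjoint, and no line stabs all three. Each shift of the window of three consecutive disks along the longer sequence therefore forces one new interior vertex in $\phi$, yielding at least $3s$ interior vertices per block. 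Adding the two bounds gives $6s$.

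The main obstacle is sharpening the per-block count from the $3(s-2)$ that Lemma~\ref{lem:size:count} supplies for $(g_A)^{s-1}$ to the $3s$ needed here. This is primarily a matter of endpoint bookkeeping in the induction and of correctly accounting for the extra leading $p_{1,3}$ of $\widehat{g}_A$ (and, if convenient, for the adjacent buffer transitions: since $\|p_{1,1}-p_{1,3}\|=2<r$, the buffer vertex $p_{1,1}$ already lies in $D_1$, so the disk-visit sequence can be extended across the boundary between buffer and letter gadget without any new geometric input). Once this tightened per-block bound is in place, the disjointness established in Step~1 lets us add the two contributions to conclude that $\phi$ has at least $6s$ interior vertices.
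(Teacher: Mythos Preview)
Your proposal is correct and is essentially the paper's approach: the paper's own proof is the single sentence ``The proof follows along the lines of the proof of Lemma~\ref{lem:size:count}'', i.e., rerun the same sliding-window disk argument on the longer curve $\gamma(XY)$. Your two-block decomposition, together with the extension of the disk-visit sequence across the letter--buffer boundary (using $\lVert p_{i,1}-p_{i,3}\rVert=2<r$) to recover the endpoint vertices lost in the per-block count, is a faithful way to flesh this out; the appeal to Lemma~\ref{lem:no:overlap:count} in cases~(iii) and~(iv) is harmless but unnecessary, since monotonicity of the Fr\'echet reparametrization already forces the two relevant subcurves of $\phi$ to occupy disjoint parameter intervals.
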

\begin{proof}The proof follows along the lines of the proof of Lemma~\ref{lem:size:count}.
\end{proof}

\begin{lemma}
\label{2d:hardness:2:cont}
    For any instance of an {\sc SCS} decision problem, consider the above reduction with
$s= 3t+3$.  Assume that distances are measured under the continuous Fr\'echet
distance.  If there exists a center curve of size at most $(3s+3)t$ and of
radius strictly smaller than $r=2.25$ then the instance is true.
\end{lemma}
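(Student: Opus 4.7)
The plan is to mimic the structure of the proof of Lemma~\ref{2d:hardness:2}, replacing its discrete vertex manipulations with the continuous Fr\'echet analogues provided by Lemmas~\ref{lem:no:overlap:count}, \ref{lem:size:count}, and~\ref{lem:double:size:count}. Let $\beta$ be the center curve, with $|\beta| \leq (3s+3)t$ and continuous Fr\'echet radius strictly smaller than $r=2.25$, for $s = 3t+3$. For each input curve $\gamma(s_i)$, the Fr\'echet matching assigns each letter gadget $\widehat{g}_{X_j}$ of $\gamma(s_i)$ to a parameter interval of $\beta$. Inside each such interval I would select a sub-interval $J_j^i$ that matches $(g_{X_j})^{s-1}$ within Fr\'echet distance $r$, and label it with $X_j$. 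By monotonicity of the reparameterization, the intervals $J_1^i,\ldots,J_{m_i}^i$ associated to a single $s_i$ are disjoint and appear along $\beta$ in the order of $s_i$.

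Pooling the labeled intervals across all $i$, let $U_A$ and $U_B$ be the unions of the $A$-labeled and $B$-labeled intervals respectively. Lemma~\ref{lem:no:overlap:count} yields $U_A \cap U_B = \emptyset$, so decomposing $U_A \cup U_B$ into maximal connected components produces an ordered sequence of ``blocks'' $B_1,\ldots,B_K$, each carrying a unique label $L_j \in \{A,B\}$. For each block $B_j$ set $k_j$ to be the maximum over $i$ of the number of intervals $J^i_{\cdot}$ contained in $B_j$; all letters counted this way necessarily equal $L_j$. Define
\[
\sigma := L_1^{k_1}\,L_2^{k_2}\cdots L_K^{k_K}.
\]

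To bound $|\sigma|$, I would apply Lemma~\ref{lem:size:count} to the $k_j$ disjoint intervals inside $B_j$ that realize the maximum: together they contribute at least $3(s-2)\,k_j$ interior vertices of $\beta$. Since the blocks are pairwise disjoint as parameter intervals,
\[
|\sigma| = \sum_{j=1}^{K} k_j \;\leq\; \frac{|\beta|}{3(s-2)} \;\leq\; \frac{(3s+3)t}{3(s-2)} \;=\; t + \frac{3t}{s-2} \;<\; t+1,
\]
so $|\sigma| \leq t$ for $s=3t+3$. To conclude that $\sigma$ is a common supersequence of $S$, I would argue that the letters of any $s_i$ whose intervals lie inside a single block $B_j$ must be consecutive in $s_i$ and all equal to $L_j$: a letter of the opposite type sitting between them would, by monotonicity, place a $U_{\bar{L_j}}$-interval inside the connected $U_{L_j}$-block $B_j$, contradicting $U_A \cap U_B = \emptyset$. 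Hence every $s_i$ embeds block-by-block into $\sigma$, proving the instance is true.

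The main obstacle I anticipate is making the ``one block absorbs several letters of $s_i$'' bookkeeping airtight, in particular handling interaction between different input curves correctly so that the counts $k_j$ faithfully bound the supersequence length. Lemma~\ref{lem:double:size:count} offers a useful backup here by quantifying the vertex cost of any two consecutive letter gadgets, and could be used to re-prove the length bound directly should the monotonicity-plus-disjointness argument turn out to have subtle corner cases at block boundaries.
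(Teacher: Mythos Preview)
Your proposal is correct and reaches the same length bound $(3s+3)t/(3(s-2)) < t+1$ via Lemma~\ref{lem:size:count}, but the extraction of the supersequence differs from the paper's.

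The paper constructs the supersequence \emph{intrinsically} from $\beta$: it scans $\beta$ left to right, and each time the curve enters $D_2$ (resp.\ $D_3$) it greedily looks for a subinterval $[t_i,t_{i+1}]$ with $d_F(\beta[t_i,t_{i+1}],(g_A)^{s-1})\le r$ (resp.\ $(g_B)^{s-1}$) and outputs the corresponding letter; the supersequence property is then asserted from Lemma~\ref{lem:no:overlap:count} without further bookkeeping. Your construction is \emph{extrinsic}: you pull back the letter gadgets of every $\gamma(s_i)$ through the given Fr\'echet matchings, pool the resulting labeled intervals into $U_A,U_B$, and read off $\sigma$ from the connected components of $U_A\cup U_B$ with multiplicities $k_j$. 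The two strings produced need not coincide.

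What your route buys is a cleaner verification that $\sigma$ is a common supersequence: the monotonicity-plus-connectedness argument (if $J^i_a,J^i_c\subset B_j$ then any $J^i_b$ in between also lies in the interval $B_j$, hence carries label $L_j$) makes the embedding of each $s_i$ explicit, whereas the paper's scan leaves this step largely implicit. What the paper's route buys is that the length bound is immediate---each emitted letter consumes a fresh $[t_i,t_{i+1}]$ with $\ge 3(s-2)$ interior vertices---without having to argue, as you do, that the witnesses for different blocks live in disjoint parameter ranges. Your concern about ``block boundary'' corner cases is unfounded: since $U_A$ and $U_B$ are finite unions of closed intervals and disjoint by Lemma~\ref{lem:no:overlap:count}, every connected component of $U_A\cup U_B$ lies entirely in one of them, so the label $L_j$ is well-defined and the $k_j$ witnessing intervals (from a single $s_i$) have pairwise disjoint interiors; summing the $3(s-2)$ interior vertices from Lemma~\ref{lem:size:count} over all blocks is therefore legitimate. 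Lemma~\ref{lem:double:size:count} is not needed in either argument.
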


\begin{proof}
Let the center curve be denoted by $\beta$. 
By Lemma~\ref{lem:no:overlap:count}, subcurves of the center curve matching to
two different gadgets are disjoint.  Furthermore, by Lemma~\ref{lem:size:count}
the number of vertices of $\beta[t_1,t_2]$ is at least $3(s-2)$.

We can now scan $\beta$ from the beginning and generate a superstring of
    the {\sc SCS} instance as follows. As before, we define three disks of radius $r$
centered at the points $p_{1,3}$, $p_{2,3}$ and $p_{3,3}$.  We denote the disks
with $D_1$, $D_2$ and $D_3$, see Figure~\ref{figs:threedisks}.  By
construction, the three disks are pairwise disjoint. We find a sequence of
values $t_0,t_1,\dots$ as follows. We initialize $t_0=0$.  For $i>0$, let $t_i
\in [t_{i-1},1]$ be the minimal value of $t$, such that $\beta(t)$ is an
intersection with one of the disks $D_2$ or $D_3$.
If it is $D_2$, then we find the value of $t_{i+1}$, such that 
$d_F(\beta[t_{i},t_{i+1}], {(g_A)^{s-1}}) \leq r$ and we output ``A''. 
If it is $D_3$, then we find the value of $t_{i+1}$, such that 
$d_F(\beta[t_{i},t_{i+1}], {(g_B)^{s-1}}) \leq r$ and we output ``B''.
If there is no such $t_{i+1}$, then we set $t_{i+1}$ to be the first point
where the curve leaves the current disk and do not output anything.  We
continue this process until we reach the end of $\beta$. The sequence of
letters output this way is the generated supersequence.
By the above claims, the length of
the generated sequence is at most 
\[\left\lfloor \frac{(3s+3)t}{3(s-2)} \right\rfloor 
= \left\lfloor \frac{(3(s-2)+9)t}{3(s-2)}\right\rfloor 
= t + \left\lfloor \frac{3t}{s-2} \right\rfloor.\] 
For $s \geq 3t+3$, this is at most $t$.
\end{proof}

Finally, we obtain the following theorem.

\begin{theorem}\label{thm:2dhardness:continuous-app}
The \klc problem under the continuous Fr\'echet
    distance is $\mathsf{NP}$-hard to approximate within any factor smaller than $2.25$ for
curves in the plane or higher dimensions, even if $k=1$.
\end{theorem}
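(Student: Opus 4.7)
The plan is to reuse the construction of Section~\ref{sec:2dhard} verbatim---the same gadgets $g_A, g_B, g_a, g_b$, the same encoding of an {\sc SCS} instance as curves $\gamma(s_i)$, and the same target complexity $\ell = (3s+3)t$ for the center curve---and rework only the analysis for the continuous Fr\'echet distance, with a tighter gap radius $r = 2.25$ and a larger repetition parameter $s = 3t+3$. The ``yes'' direction is free: Lemma~\ref{2d:hardness:1} already produces a center curve of size at most $(3s+3)t$ whose discrete Fr\'echet distance to every $\gamma(s_i)$ is at most $1$, and $d_F \leq d_{DF}$ for any two curves. All the real work is therefore in the ``no'' direction: I need to show that any center curve of complexity at most $(3s+3)t$ and continuous Fr\'echet radius strictly less than $2.25$ forces the {\sc SCS} instance to admit a supersequence of length at most $t$.

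The central geometric fact behind the threshold $r = 2.25$ is that the equilateral triangle with vertices $p_{1,3}, p_{2,3}, p_{3,3}$ has altitude $9/2$, so the convex hull of any two of the three disks $D_i$ of radius $r$ stays at distance $9/2 - r$ from the opposite vertex. For $r < 2.25$ this exceeds $r$, hence no single line segment with endpoints in two of the disks can intersect the third. Consequently, a single edge of the center curve cannot match three consecutive vertices of $g_A$ (or $g_B$) in order, and in the $2r$-free-space diagram of $(g_A)^{s-1}$ against $(g_B)^{s-1}$ every cell corresponding to a corner-vs.-opposite-edge pairing is blocked.

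Using these two observations I would prove the continuous analogues of Lemma~\ref{lem:no:overlap:discrete}. The first is a no-overlap lemma: subcurves $\beta[t_1, t_2]$ and $\beta[t_3, t_4]$ of the center curve matching $(g_A)^{s-1}$ and $(g_B)^{s-1}$ respectively within Fr\'echet distance $r$ have disjoint parameter intervals. Its proof is by contradiction: an overlap on some subinterval $[t_5, t_6]$ combined with the triangle inequality would give a subcurve of $(g_A)^{s-1}$ and a subcurve of $(g_B)^{s-1}$ at mutual Fr\'echet distance less than $2r$; I would then case-split on the four overlap configurations (mutual containment in either direction, and the two prefix-meets-suffix configurations) and verify in each case that the required monotone path in the $2r$-free-space diagram of $g_A^{s-1}$ vs $g_B^{s-1}$ would have to pass through a blocked corner-opposite-edge cell. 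The second is a vertex-count lemma: any subcurve of $\beta$ matching $(g_A)^{s-1}$ or $(g_B)^{s-1}$ contains at least $3(s-2)$ interior vertices. I would prove this by a sliding-window induction along the gadget, noting that each consecutive triple of vertex disks forces a fresh bend of the center curve, because the single edge that would otherwise span them is ruled out by the $r < 2.25$ geometry.

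With the two lemmas in hand, the combination step mirrors Lemma~\ref{2d:hardness:2}: scan $\beta$ from left to right and output $A$ (resp.\ $B$) for each maximal disjoint subcurve matching $(g_A)^{s-1}$ (resp.\ $(g_B)^{s-1}$). Disjointness and the vertex count jointly cap the output length at $\lfloor (3s+3)t / 3(s-2) \rfloor = t + \lfloor 3t/(s-2) \rfloor$, which equals $t$ for $s = 3t+3$, and the resulting string is a common supersequence of $S$ by the same letter-by-letter embedding argument as in the discrete proof. I expect the main obstacle to be the no-overlap lemma: each of the four overlap configurations requires a careful reading of the free-space diagram to confirm that no monotone path survives once all corner-vs.-opposite-edge cells are blocked, and the threshold $r = 2.25$ is tight in the sense that small perturbations upward would open up monotone paths and break the argument.
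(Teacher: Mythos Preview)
Your proposal is correct and follows essentially the same approach as the paper: the same construction, the same radius $r=2.25$ and repetition parameter $s=3t+3$, the same no-overlap lemma proved via the triangle inequality and a four-case free-space analysis (Lemma~\ref{lem:no:overlap:count}), the same vertex-count lemma with the sliding-window argument (Lemma~\ref{lem:size:count}), and the same scan-and-output combination with the identical arithmetic bound $t + \lfloor 3t/(s-2)\rfloor$ (Lemma~\ref{2d:hardness:2:cont}). Your identification of the no-overlap lemma as the delicate step and of the tightness of $r=2.25$ is also exactly what the paper emphasizes.
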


\section{Extension to the Minimum Enclosing Ball Problem}\label{sec:meb}
In the previous sections we described variants of a reduction from the shortest common supersequence problem to the problem of computing a low-complexity curve that lies within a bounded radius from a set of input curves. In this section we show that when relaxing the bound on the complexity of the center curve the problem remains just as hard. The unconstrained center curve can be interpreted as the solution to the decision version of the problem of finding the minimum enclosing ball in the corresponding metric space of curves. 

%To achieve these results, we use a technique similar to the one used by Wang and Jiang~\cite{wang1994complexity} in their hardness proof of multiple sequence alignment.

In the presentation, we focus on the case of the discrete Fr\'echet distance on one-dimensional curves. 
We again reduce from the {\sc Shortest Common Supersequence} ({\sc SCS}) problem. Given an instance of the decision version of the {\sc SCS} problem, i.e. a set strings $S=\{s_1,s_2,\ldots,s_n\}$ over an alphabet $\{A,B\}$ and the maximum allowed length $t$ for the supersequence sought for the strings in $S$, we construct a corresponding instance of the decision version of the {\sc Minimum Enclosing Ball} ({\sc MEB}) problem for discrete Fr\'echet distance with $\delta=1$ and set of curves $G\cup R_{j,j'}$. Here, $G$ is the set of curves $\gamma(s_i)$ constructed from the input sequences $s_i\in S$, as used in the construction from Section~\ref{sec:1d}. $R_{j,j'} = \{A^j,B^{j'}\}$, with 
\[
\begin{aligned}
 A^j &= p_1(p_{-3}p_1)^j\\
 B^{j'} &= p_{-1}(p_3p_{-1})^{j'}.
\end{aligned}
\]
We will show that the instance $(S,t)$ is a true instance of {\sc SCS} if and only if there exists a pair $(j,j')\in I_t :=\{(j,j')\in\mathbb{N}^2\mid  j+j'=t\}$ such that $(G\cup R_{j,j'},1)$ is a true instance of MEB for discrete Fr\'echet distance. Since the number of choices for pairs $(j,j')$ is linear in $t$, the number of different instances of MEB is polynomial in the input size of the SCS instance. (Although the input size of $t$ is $\log{t}$, SCS on binary alphabets is non-trivial only when $t\leq 2\max_{s_i\in S} |s_i|$ so we can assume that $t$ is at most linear in the input size of the SCS instance). This means that there is a polynomial time truth-table reduction from SCS to MEB for discrete Fr\'echet distance.

\begin{lemma}\label{lem:SCS-implies-j-k-center-curve}
If $(S,t)$ is a true instance of SCS, then there exists $(j,j')\in I_t$ such that $(G\cup R_{j,j'}, 1)$ is a true instance of MEB under the discrete Fr\'echet distance.
\end{lemma}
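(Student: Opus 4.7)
The plan is to take the supersequence promised by the truth of $(S,t)$, pad it to length exactly $t$, and reuse the center curve construction from Section~\ref{sec:1d} verbatim; the only new content is to verify that this center curve also lies within discrete Fr\'echet distance $1$ of the two auxiliary curves $A^j$ and $B^{j'}$.

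First I would let $s^*$ be a supersequence of every $s_i\in S$ with $|s^*|\leq t$, and append an arbitrary word of length $t-|s^*|$ over $\{A,B\}$ to obtain a word $\tilde{s}$ of length exactly $t$. Every $s_i$ is still a subsequence of $\tilde{s}$, so $\tilde{s}$ is a common supersequence. Let $j$ be the number of $A$'s and $j'=t-j$ the number of $B$'s in $\tilde{s}$, giving the pair $(j,j')\in I_t$. Build $c^*$ as in Section~\ref{sec:1d}: a one-dimensional polygonal curve whose vertex sequence is $0,\alpha_1,0,\alpha_2,0,\dots,0,\alpha_t,0$, where $\alpha_i=-2$ if the $i$-th letter of $\tilde{s}$ is $A$ and $\alpha_i=2$ otherwise. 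The analysis already given in Section~\ref{sec:1d} yields $d_{DF}(c^*,\gamma(s_i))\leq 1$ for every $s_i\in S$.

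The remaining step is to check that $d_{DF}(c^*,A^j)\leq 1$ and $d_{DF}(c^*,B^{j'})\leq 1$. The key observation is that the only value appearing in $c^*$ within Euclidean distance $1$ of $-3$ is $-2$, and $c^*$ contains exactly $j$ such vertices, one per $A$ in $\tilde{s}$, which matches the number of $-3$ vertices in $A^j$. All other vertices of $c^*$ (the $j'$ vertices at $2$ and the $t+1$ vertices at $0$) lie within distance $1$ of $1$, the other value taken by $A^j$. I would exhibit an explicit discrete Fr\'echet alignment by walking both sequences in parallel: for each letter of $\tilde{s}$, if the letter is $A$ advance $A^j$ by two positions while $c^*$ traverses the triple $0,-2,0$, pairing $(-3,-2)$ and then $(1,0)$; if the letter is $B$ keep $A^j$ pinned at its current $1$-vertex while $c^*$ traverses $0,2,0$, pairing $(1,2)$ and then $(1,0)$. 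Every pair has distance exactly $1$. Since we advance $A^j$ twice for each of the $j$ letters $A$ in $\tilde{s}$ and start from the first vertex, the walk ends exactly at the final vertex $v_{2j+1}$ of $A^j$ together with the final $0$ of $c^*$, certifying a valid discrete Fr\'echet traversal. The argument for $d_{DF}(c^*,B^{j'})\leq 1$ is symmetric, with the roles of $-3/-2$ and $3/2$ interchanged.

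The main obstacle is not conceptual but one of bookkeeping: the counts must agree exactly, i.e., $c^*$ must contain precisely $j$ vertices of value $-2$ and $j'$ of value $2$, matching the numbers of $-3$'s in $A^j$ and $3$'s in $B^{j'}$. This is the reason for padding $s^*$ to length exactly $t$ before reading off $(j,j')$, rather than letting the shortest supersequence dictate the split. Once this is done, the parallel traversal above is a routine verification, and combining it with the bound $d_{DF}(c^*,\gamma(s_i))\leq 1$ from Section~\ref{sec:1d} shows that $c^*$ is a valid center of radius $1$ for $G\cup R_{j,j'}$, so $(G\cup R_{j,j'},1)$ is a true MEB instance.
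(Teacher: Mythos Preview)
Your proof is correct and follows essentially the same approach as the paper: pad the supersequence to length exactly $t$, read off $(j,j')$ from the letter counts, build the same center curve $c^*$, invoke the Section~\ref{sec:1d} argument for the curves in $G$, and then check the alignments with $A^j$ and $B^{j'}$. Your explicit per-letter traversal of $A^j$ versus $c^*$ is slightly more detailed than the paper's one-line description of the matching, but the underlying argument is identical.
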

\begin{proof}
If $(S,t)$ is a true instance of SCS, then there is a string $s^*$ of length at most $t$ that is a supersequence of all strings in $S$. Without loss of generality, we may assume that $s^*$ has length exactly $t$, as we can always pad a shorter supersequence with arbitrary characters. 

Construct the curve $c^*$ of length $2t+1$, with for each $i\in \{1,\ldots, 2t+1\}$, 
\[c^*_i := \begin{cases} p_0 & \text{$i$ is odd}\\ p_{-2} & \text{$i$ is even and $s^*_{i/2}=A$ } \\ p_{2} & \text{$i$ is even and $s^*_{i/2}=B$.} \end{cases}\]

That is, $c^*$ has a vertex at $p_{-2}$ for every $A$ in $s^*$, a vertex at $p_{2}$ for every $B$ in $s^*$ and with $p_0$ in between and $p_0$ as the first and last vertex of $c^*$. Note that this is exactly the same curve as used in Section~\ref {sec:1d}.

Let $\gamma(s_i)\in G$. We will create an alignment with $c^*$. Since $s_i$ is a subsequence of $s^*$, every letter gadget in $\gamma(s_i)$ can be matched with the corresponding letter gadget in $c^*$ within distance $1$. The remaining $p_2$ and $p_{-2}$ vertices in $c^*$ can be matched with a vertex $p_1$ and $p_{-1}$ in a buffer of $\gamma(s_i)$, respectively. All other vertices in the buffer gadgets of $\gamma(s_i)$ can be matched with some $p_0$ in $c^*$. We now have an alignment between $\gamma(s_i)$ and $c^*$ where all matched vertices have distance at most $1$, so $d_{DF}(\gamma(s_i),c^*)\leq 1$ for all $i$. 

Set $j$ to the number of occurrences of $A$ in $s^*$ and $j'$ to the number of occurrences of $B$ in $s^*$. Then, $j+j'=t$, so $(j,j')\in I_t$. Since $c^*$ contains a vertex at $p_{-2}$ exactly $j$ times, we can match those characters to $p_{-3}$ in $A^j$ and all remaining vertices in $c^*$, at $p_0$ or $p_{2}$, with $p_1$. 
Since all matched pairs have a distance of $1$, the discrete Fr\'echet distance between $A^j$ and $c^*$ is at most $1$. 
Analogously, we get that the discrete Fr\'echet distance between $B^{j'}$ and $c^*$ is at most $1$. 
So, $d_{DF}(g,c^*)\leq 1$ for all $g\in G\cup R_{j,j'}$ for some $(j,j')\in I_t$.
\end{proof}

\begin{lemma}\label{lem:j-k-center-curve-implies-SCS}
If there exists $(j,j')\in I_t$ such that $(G\cup R_{j,j'},1)$ is a true instance of MEB under the discrete Fréchet distance, then $(S,t)$ is a true instance of SCS.
\end{lemma}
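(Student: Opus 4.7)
The plan is to extract from any valid center curve $c$ a common supersequence of $S$ of length at most $j+j'=t$. The auxiliary curves $A^j$ and $B^{j'}$ play the role that the a priori complexity bound $\ell=2t+1$ plays in the proof of Lemma~\ref{lem:SCS-implies-j-k-center-curve} and Section~\ref{sec:1d}: they tightly constrain both the possible vertex coordinates of $c$ and the number of letter-bearing vertices it may contain.

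First I would show that every vertex of $c$ lies in $\{-2,0,2\}$. Since $c$ is within discrete Fr\'echet distance $1$ of $A^j$, every vertex of $c$ must be matched in the alignment to some vertex of $A^j$ at Euclidean distance at most $1$, and therefore lies in $[-4,-2]\cup[0,2]$. The alignment with $B^{j'}$ symmetrically forces every vertex of $c$ to lie in $[-2,0]\cup[2,4]$. Intersecting these constraints yields $\{-2,0,2\}$.

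Next, I would bound the number of ``letter blocks.'' Call a maximal run of consecutive $(-2)$-vertices of $c$ a \emph{$(-2)$-block}, and analogously define \emph{$2$-blocks}. In the alignment of $c$ with $A^j$, any $c$-vertex at $-2$ can only be matched to a $p_{-3}$ of $A^j$, since $p_1$ is at distance $3$. Moreover, within a single $(-2)$-block all vertices must be matched to the \emph{same} $p_{-3}$: otherwise monotonicity would force the alignment to also cover an intervening $p_1$ of $A^j$, which is unreachable from $-2$. Distinct $(-2)$-blocks are therefore matched to distinct $p_{-3}$'s by monotonicity, so the number of $(-2)$-blocks is at most $j$. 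The symmetric argument with $B^{j'}$ gives at most $j'$ many $2$-blocks. I then define $s^*$ by scanning $c$ from start to end and outputting $A$ at the start of each $(-2)$-block and $B$ at the start of each $2$-block; then $|s^*|\leq j+j'=t$.

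The remaining and main task is to prove that $s^*$ is a supersequence of every $s_i\in S$. Fix $s_i$ and an alignment of $c$ with $\gamma(s_i)$ of discrete Fr\'echet distance at most $1$. Each letter gadget at coordinate $-3$ (resp. $+3$) of $\gamma(s_i)$ must be matched to a $c$-vertex at $-2$ (resp. $+2$), since $0$ is too far. The crux is to argue that two consecutive letters of $s_i$ are always matched to \emph{distinct} blocks of $c$. If the two letters are of different types, this is immediate since their matched $c$-vertices have different coordinates. If both are $A$, then by construction the buffer gadget of $\gamma(s_i)$ between them alternates $\pm 1$ and contains at least one $+1$ vertex, which must be matched to some intermediate $c$-vertex in $[0,2]$, hence at $0$ or $+2$; any such vertex breaks a would-be $(-2)$-block containing both letter matches. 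The case of two consecutive $B$'s is symmetric. By monotonicity of the alignment, the blocks encountered along $c$ appear in the same order as the letters of $s_i$, so $s_i$ is a subsequence of $s^*$, completing the proof.
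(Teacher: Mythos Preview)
Your argument is correct, but it proceeds along a genuinely different line from the paper's.

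The paper never pins down the vertex coordinates of the center curve. Instead it argues, using only the alignment with $A^j$, that every vertex of $c^*$ is matched to \emph{exactly one} vertex of $A^j$ (because no point is within distance~$1$ of both $p_1$ and $p_{-3}$, and consecutive $p_{-3}$'s are separated by a $p_1$). This yields a partition of $c^*$ into $2j+1$ contiguous parts indexed by the vertices of $A^j$; the analogous partition from $B^{j'}$ has $2j'+1$ parts. The two partitions are then \emph{overlaid}: a part indexed by some $p_{-3}$ is an $A$-part, one indexed by some $p_3$ is a $B$-part, and everything else is a buffer-part. This gives exactly $j+j'=t$ letter-parts, and reading them off in order produces $s^*$. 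The supersequence property is shown by observing that a single $A$-part cannot be matched to two letter gadgets of $\gamma(s)$ because the intervening buffer contains a $p_1$.

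Your route is more geometric: you first intersect the constraints from $A^j$ and $B^{j'}$ to force every center vertex into $\{-2,0,2\}$, after which the ``letter blocks'' are intrinsic to $c$ (maximal runs of $\pm2$) rather than alignment-dependent. The bound on the number of blocks then follows by a clean injectivity argument into the $p_{\mp3}$'s. This is arguably more elementary, and it makes the separation-of-consecutive-letters step transparent (a $p_1$ in the buffer must hit a $0$- or $2$-vertex, breaking any $(-2)$-run). The paper's partition-overlay approach, by contrast, never needs the coordinate-pinning step and yields \emph{exactly} $t$ letter-parts rather than at most $t$; this is not needed for the lemma, but the alignment-based viewpoint is what the paper reuses when adapting the argument to the continuous Fr\'echet distance in Appendix~\ref{sec:meb-app}.
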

\begin{proof}
If there exists $j,j' \in I_t$ such that $(G\cup R_{j,j'},1)$ is a YES-instance, then there exist a center curve $c^*$ such that $d_{DF}(g,c^*)\leq 1$ for all $g\in G\cup R_{j,j'}$. 
So, there exists a alignment between $c^*$ and $A^j$ in which all matched pairs have distance at most $1$. 
Note that there exists no vertex $p$ within distance $1$ to both $p_1$ and $p_{-3}$. 
This means that for every vertex in $c^*$, it is either matched to some vertices at $p_{-3}$ or some vertices at $p_1$. 
Since every pair of vertices at $p_{-3}$ in $A^j$ has at least one vertex at $p_1$ in between and an alignment is monotone, a point in $c^*$ matched to some points $p_{-3}$ can be matched to at most one such point in $A^j$. 
The same holds for the points matched to $p_1$. So, every point in $c^*$ is matched to exactly one point in $A^j$. 

\begin{figure}[ht]
\centering
\includegraphics{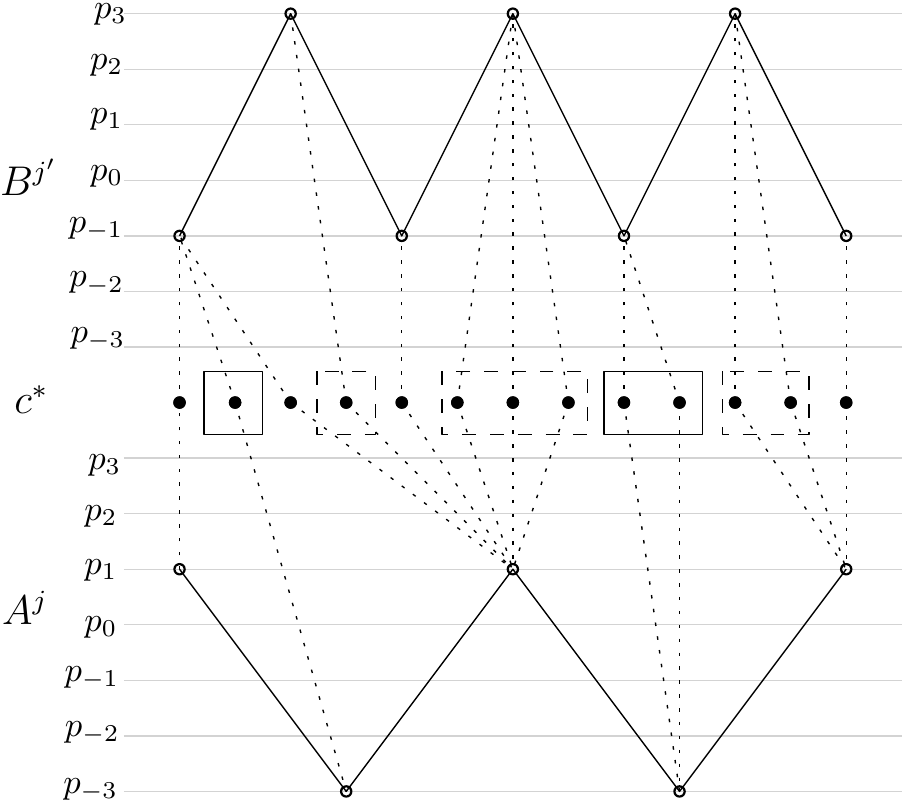}
\caption{ The partition of a center curve $c^*$ from Lemma~\ref{lem:j-k-center-curve-implies-SCS}, based on matchings between the center curve $c^*$ and curves $A^j$ and $B^{j'}$. Dashed boxes indicate an A-part, normal boxes indicate a B-part, and the points without boxes are in a buffer part.}
\label{fig:MEB-center-partition}
\end{figure}

This means we can partition the points of $c^*$ into $2j+1$ parts, where each part is represented by a point on $A^j$ such that all points of $c^*$ in that part are matched to that point in $A^j$.
We can do the same with the matching between $c^*$ and $B^{j'}$ to get a partition of $2j'+1$ parts. 
A point in $c^*$ cannot be both in a part corresponding to an $p_{-3}$ in $A^j$ and in another corresponding to an $p_{3}$ in $B^{j'}$.
So, we can `combine' these two partitions into one partition of at most $2(j+j')+1$ parts (see Figure~\ref{fig:MEB-center-partition}), as follows: 
if a point in $c^*$ is matched to $p_{-3}$, put it in the part corresponding to that point, we call this an A-part. 
If a point in $c^*$ is matched to $p_{3}$, put it in the part corresponding to that point, we call such a part an B-part. 
If a point is matched to neither $p_{-3}$ nor $p_3$, put it in a part corresponding to the `buffer' between the part of the nearest point of smaller index in $c^*$ in an A-or B-part and the nearest point of greater index in an A-or B-part. 
(If such an index does not exist, identify it by only one A-or B-part. There are at most two such parts, at the start and end of $c^*$.)
We call such a part a buffer-part. 
Note that by construction, there are exactly $j+j'=t$ A-and B-parts and at most $j+j'+1$ buffer-parts. 

Construct the string $s^*$ by removing the buffer parts from the partition of $c^*$,  replacing the A-parts with the $A$ character and the B-parts with $B$ character. As there are exactly $j+j'=t$ A-and B-parts, $|s^*|=t$. We complete the proof by showing that $s^*$ is a supersequence of all sequences in $S$.

Let $\gamma(s)\in G$. As $d_{DF}(\gamma(s),c^*)\leq 1$, there exists a matching between $\gamma(s)$ and $c^*$ such that all matched pairs have distance at most $1$. This means an A-part cannot be matched to multiple letter gadgets, since the matching is monotone and the buffer separating the letter gadgets contains the point $p_1$. Analogously, the B-parts cannot be matched to multiple letter gadgets. Furthermore, each A letter gadget must be matched to an A-part and each B letter gadget to a  B-part. This means that the sequence of letter gadgets in $\gamma(s)$ is matched one-to-one with a subsequence of the sequence of A-and B-parts in $c^*$ with matching letters and therefore $s$ is a subsequence of $s^*$.
\end{proof}

So, we have a polynomial time reduction from SCS to MSB under discrete Fr\'echet distance and we can conclude with the following theorem.

\begin{theorem}
The minimum enclosing ball problem for polygonal curves in 1D for the discrete Fréchet distance is $\mathsf{NP}$-hard.
\end{theorem}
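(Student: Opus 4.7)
The plan is to assemble the two preceding lemmas into a polynomial-time truth-table reduction from \textsc{SCS} (on binary alphabets), which is known to be $\mathsf{NP}$-hard, to the decision version of \textsc{MEB} under the discrete Fr\'echet distance on $1$-dimensional curves. The point of a truth-table (as opposed to many-one) reduction is that the \textsc{SCS} instance does not dictate a single target \textsc{MEB} instance; instead, we need to enumerate the family of candidate instances $(G \cup R_{j,j'}, 1)$ indexed by $(j,j') \in I_t$, solve each, and return YES iff at least one of them is YES.

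First I would spell out the reduction. Given an \textsc{SCS} instance $(S,t)$, I would build the set of curves $G = \{\gamma(s_i) : s_i \in S\}$ using the gadgets of Section~\ref{sec:1d}, and then, for every pair $(j,j') \in I_t = \{(j,j') \in \mathbb{N}^2 : j+j' = t\}$, build the two auxiliary curves $A^j$ and $B^{j'}$ and form the candidate \textsc{MEB} instance $(G \cup R_{j,j'}, 1)$. Each curve has size polynomial in the input, each instance is constructed in polynomial time, and $|I_t| = t + 1$. The only subtlety is bounding $t$: \textsc{SCS} on a binary alphabet is trivially YES for $t \geq 2\max_i |s_i|$, so we may assume $t$ is bounded by the bit-length of the input, making the family of instances polynomially large.

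Next I would invoke the two lemmas to establish correctness. By Lemma~\ref{lem:SCS-implies-j-k-center-curve}, if $(S,t)$ is a YES instance of \textsc{SCS}, then there exists $(j,j') \in I_t$ (namely the pair counting the $A$'s and $B$'s in the supersequence) for which $(G \cup R_{j,j'}, 1)$ is a YES instance of \textsc{MEB}. Conversely, by Lemma~\ref{lem:j-k-center-curve-implies-SCS}, if any $(G \cup R_{j,j'}, 1)$ is a YES instance, then $(S,t)$ is a YES instance of \textsc{SCS}. Therefore $(S,t) \in \textsc{SCS}$ iff at least one instance in the enumerated family lies in \textsc{MEB}, which is exactly the disjunctive truth-table reduction needed.

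I do not expect a genuine obstacle here, since all of the hard work has been absorbed into the two lemmas and the gadget construction already inherited from Section~\ref{sec:1d}. The only point worth treating carefully is the argument that the family size is polynomial (so that an oracle for \textsc{MEB} yields a polynomial-time algorithm for \textsc{SCS}); this is precisely where the trivial upper bound $t \leq 2 \max_i |s_i|$ for binary alphabets is used. Concluding, since \textsc{SCS} is $\mathsf{NP}$-hard and reduces in polynomial time (via truth-table reduction) to the $1$-dimensional discrete Fr\'echet \textsc{MEB} decision problem, the latter is $\mathsf{NP}$-hard as claimed.
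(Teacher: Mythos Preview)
Your proposal is correct and matches the paper's own proof essentially verbatim: the paper likewise combines Lemmas~\ref{lem:SCS-implies-j-k-center-curve} and~\ref{lem:j-k-center-curve-implies-SCS} into a disjunctive truth-table reduction over the $t+1$ instances $(G\cup R_{j,j'},1)$ for $(j,j')\in I_t$, and uses the same bound $t\le 2\max_i |s_i|$ to argue polynomial size.
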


The same construction can be generalised to the continuous case. We state the theorem here. Refer to the Appendix~\ref{sec:meb-app} for the full details and proofs.

\begin{restatable}{theorem}{hardonedMEBcont}
\label{thm:MEB:cont}
The minimum enclosing ball problem for polygonal curves in 1D for the continuous Fréchet distance is $\mathsf{NP}$-hard.
\end{restatable}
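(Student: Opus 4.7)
The plan is to mirror the structure of the discrete reduction from this section, replacing the discrete gadgets by the continuous 1D gadgets used in the proof of Theorem~\ref{thm:1dhardness:cont} (Appendix~\ref{sec:1Dappx-app}), and adapting the bounding curves $A^{j}$ and $B^{j'}$ to the continuous \fd. Given an {\sc SCS} instance $(S,t)$, I will build the set $G$ of continuous 1D curves $\gamma(s_i)$ exactly as in the continuous 1D hardness proof, and then augment it with two bounding curves $A^{j}$ and $B^{j'}$ (with positions chosen according to the continuous letter/buffer heights used in Theorem~\ref{thm:1dhardness:cont}) so that the pair $(j,j')$ ranges over $I_t=\{(j,j')\in\mathbb{N}^{2}\mid j+j'=t\}$. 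Since $|I_t|$ is linear in $t$, this yields a polynomial-time truth-table reduction from {\sc SCS} to the continuous {\sc MEB} problem, exactly as in the discrete case.

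The forward direction (analogue of Lemma~\ref{lem:SCS-implies-j-k-center-curve}) is straightforward. Given a supersequence $s^{*}$ of length exactly $t$, I construct the center curve $c^{*}$ as in the continuous 1D hardness construction, using one letter vertex per character of $s^{*}$ interleaved with the buffer vertices at height $0$. The continuous-Fr\'echet argument from the 1D hardness proof gives $d_F(\gamma(s_i),c^{*})\leq 1$ for every $s_i\in S$. Setting $j$ and $j'$ to the number of $A$'s and $B$'s in $s^{*}$, I then check that $c^{*}$ reparameterizes onto $A^{j}$ and $B^{j'}$ within distance $1$: every $A$-peak of $c^{*}$ is matched to a $p_{-3}$-vertex of $A^{j}$, and the remaining portions of $c^{*}$ stay within distance $1$ of the $p_{1}$-vertices and edges of $A^{j}$; the argument for $B^{j'}$ is symmetric.

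The backward direction (analogue of Lemma~\ref{lem:j-k-center-curve-implies-SCS}) is the main obstacle. In the discrete case the matching is vertex-to-vertex and one can partition the vertices of $c^{*}$ directly. In the continuous case a single vertex of $c^{*}$ can be matched to an entire edge of $A^{j}$ or $B^{j'}$ (and vice versa), so one has to argue about reparameterizations rather than vertex assignments. The plan is to exploit the fact that no point in $\mathbb{R}$ is within distance $1$ of both the letter height ($\pm 3$) of $A^{j}/B^{j'}$ and the opposite buffer height, so the image of any reparameterization of $c^{*}$ onto $A^{j}$ partitions the parameter space of $c^{*}$ into \emph{$A$-arcs} (mapped to neighborhoods of the $p_{-3}$-vertices of $A^{j}$) and \emph{buffer arcs}, with exactly $j$ maximal $A$-arcs. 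An analogous partition with exactly $j'$ maximal $B$-arcs comes from the reparameterization onto $B^{j'}$. Since $A$-arcs and $B$-arcs must be disjoint (a single parameter value of $c^{*}$ cannot simultaneously be within distance $1$ of $p_{-3}$ and $p_{3}$), combining the two partitions as in Figure~\ref{fig:MEB-center-partition} yields at most $j+j'=t$ letter arcs and thus a string $s^{*}$ of length at most $t$.

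Finally, to finish the backward direction I need to show that $s^{*}$ is a supersequence of every $s_i\in S$. Here I reuse the continuous matching analysis from Theorem~\ref{thm:1dhardness:cont}: in any reparameterization of $\gamma(s_i)$ onto $c^{*}$ witnessing $d_F\leq 1$, each letter gadget of $\gamma(s_i)$ (whose extreme height is $\pm 3$) must map into a parameter interval of $c^{*}$ that reaches the corresponding letter height, and such an interval cannot straddle two different letter arcs of $c^{*}$ because the intervening buffer arc of $c^{*}$ returns to height $0$, forcing the reparameterization to cross a region that cannot cover two consecutive letter gadgets of $\gamma(s_i)$ at distance $\leq 1$. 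Hence the letters of $s_i$ embed order-preservingly into the letter arcs of $c^{*}$, so $s_i$ is a subsequence of $s^{*}$. Combining both directions gives the desired polynomial-time reduction and proves the theorem.
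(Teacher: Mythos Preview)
Your proposal takes a different and unnecessarily complicated route from the paper. You switch to the continuous 1D gadgets of Appendix~\ref{sec:1Dappx-app} and redesign the bounding curves $A^{j},B^{j'}$ accordingly; the paper instead keeps the \emph{discrete} construction of Section~\ref{sec:meb} unchanged and argues directly that it works under the continuous \fd as well.

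The paper's forward direction is a one-liner: the center curve $c^{*}$ from Lemma~\ref{lem:SCS-implies-j-k-center-curve} already satisfies $d_{DF}(g,c^{*})\le 1$ for all $g\in G\cup R_{j,j'}$, and $d_F\le d_{DF}$. For the backward direction, the paper exploits the \emph{interaction} of the two bounding curves. Any point $p$ on $c^{*}$ matched to a $p_{-3}$-vertex of $A^{j}$ must lie in $[-4,-2]$; but $p$ must simultaneously lie within distance $1$ of some point on $B^{j'}$, whose image is $[-1,3]$, forcing $p=-2$ exactly. Moreover, the segments of $B^{j'}$ incident to $p_{-1}$ move strictly upward, so a neighbourhood of $p$ on $c^{*}$ cannot dip below $-2$; hence $p$ is a local minimum and therefore a vertex of $c^{*}$ (or lies on a flat segment between two such vertices). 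Symmetrically, each $p_{3}$-vertex of $B^{j'}$ forces a vertex of $c^{*}$ at $p_{2}$. This pins down actual vertices of $c^{*}$ and lets the discrete partition argument of Lemma~\ref{lem:j-k-center-curve-implies-SCS} go through verbatim.

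Your arc-partition idea is in the right spirit, but it leaves real work unspecified: you do not say how $A^{j},B^{j'}$ are adapted to the repeated-oscillation gadgets $\widehat{g}_A,\widehat{g}_B$, and the claim that there are ``exactly $j$ maximal $A$-arcs'' as well as the claim that a single letter arc cannot absorb two letter gadgets of some $\gamma(s_i)$ both need nontrivial justification in the continuous setting with those more complex gadgets. The paper's observation that the two bounding curves together force genuine vertices on $c^{*}$ sidesteps all of this and makes the switch to the continuous gadgets unnecessary.
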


\section{Clustering Algorithms}\label{sec:clustering}
% \section{Algorithms for the \texorpdfstring{$(k,\ell)$}{(k,l)}-Center Clustering Problem}

We sketched the basic algorithm in Section~\ref{sec:basic:algorithm:sketch}. In this section we complete the picture by providing the remaining details, including the specifics of the simplification, and a full analysis of the running time. In Section~\ref{sec:imp-apx-cl} we will show how to improve the approximation factor and, as such, obtain our main algorithmic result.

Recall that, given a polygonal curve $\gamma$, a minimum-error $\ell$-simplification $\widehat{\gamma}$ is a curve of complexity at most $\ell$ whose error $\eps$ is minimal.
We define a \emph{$c$-approximate $\ell$-simplification} $\overline{\gamma}$ as a curve of complexity at most $\ell$ whose error is at most a factor of $c$ from the optimal, i.e. $d(\gamma,\overline{\gamma}) \leq c\eps$. Under the discrete \fd Bereg~et~al.~\cite{bjwyz-spcdfd-08} presented an $O(\ell m \log{m} \log{(m/\ell))}$-time algorithm that computes a minimum-error $\ell$-simplification. 
For the continuous \fd, we briefly discuss an algorithm to compute a $4$-approximate $\ell$-simplification, before stating our algorithm for the \klc problem.

\subsection{Computing an \texorpdfstring{$\ell$}{l}-simplification for the Continuous Fr\'{e}chet Distance}
Agarwal~et~al.~\cite{ahmw-nltaa-05} give the following theorem, where $\kappa_{F}(\eps,\gamma)$ and $\kappa_{\hat{F}}(\eps,\gamma)$ denote the complexity of the vertex-constrained and weak minimum-complexity $\eps$-simplification respectively, under the continuous \fd.

\begin{theorem}[Agarwal~et~al.~\cite{ahmw-nltaa-05}, Theorem 4.1] \label{thm:4-apx}
    Given a polygonal curve $\gamma$, \[\kappa_{F}(\eps,\gamma) \leq \kappa_{\hat{F}}(\eps/4,\gamma).\]
\end{theorem}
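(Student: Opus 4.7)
The plan is to convert any weak $\ell$-simplification of $\gamma$ with error $\eps/4$ into a vertex-constrained simplification with at most $\ell$ vertices and error at most $\eps$. Applying this transformation to an optimizer immediately yields $\kappa_F(\eps,\gamma) \le \kappa_{\hat F}(\eps/4,\gamma)$, since the resulting vertex-constrained simplification witnesses that the minimum on the left-hand side is at most $\ell = \kappa_{\hat F}(\eps/4,\gamma)$.

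First I would fix a weak simplification $\sigma = \langle v_1,\dotsc,v_\ell\rangle$ together with a reparametrization $f$ witnessing $d_F(\gamma,\sigma) \le \eps/4$. For each $v_i$, $f$ identifies a parameter $t_i$ on $\gamma$ with $\lVert v_i - \gamma(t_i)\rVert \le \eps/4$. I would round each $v_i$ to a vertex $p_{j_i}$ of $\gamma$ lying on the edge of $\gamma$ containing $\gamma(t_i)$, forcing $j_1 = 1$ and $j_\ell = m$ at the endpoints and using monotonicity of $f$ to guarantee $j_1 \le j_2 \le \dotsb \le j_\ell$. After collapsing any coincidences, the sequence $\tau = \langle p_{j_1},\dotsc,p_{j_\ell}\rangle$ is a valid vertex-constrained simplification of complexity at most $\ell$.

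The main work is to bound, for each consecutive pair, the \fd between the edge $\overline{p_{j_i}\,p_{j_{i+1}}}$ and the subcurve $\gamma[p_{j_i},p_{j_{i+1}}]$ by $\eps$. My plan is to stitch together three contributions using the triangle inequality: a core matching of $\overline{v_i v_{i+1}}$ to the subcurve $\gamma[t_i,t_{i+1}]$ inherited from $f$ and costing $\eps/4$; an endpoint-perturbation step from $\overline{v_i v_{i+1}}$ to $\overline{p_{j_i}\,p_{j_{i+1}}}$, whose cost is bounded by the maximum endpoint displacement (since the \fd between two segments is at most the maximum of the distances between corresponding endpoints); and a tail matching that accounts for the short portions of $\gamma$ between $p_{j_i}$ and $\gamma(t_i)$, and between $\gamma(t_{i+1})$ and $p_{j_{i+1}}$, which lie inside single edges of $\gamma$.

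The hard part will be controlling the tail contribution: \emph{a priori}, $p_{j_i}$ could be far from $\gamma(t_i)$ along the incident edge, so a naive bound on $\lVert v_i - p_{j_i}\rVert$ does not suffice. Overcoming this requires exploiting that every vertex of $\gamma$ is matched by $f$ to some point of $\sigma$ within distance $\eps/4$, which anchors the candidate snap targets inside the $\eps/4$-tube around $\sigma$ near $v_i$ and $v_{i+1}$; a careful choice of snap target then keeps each rounding displacement proportional to $\eps/4$, and summing the three contributions reproduces exactly the factor $4$ in the statement, establishing $\kappa_F(\eps,\gamma) \le \kappa_{\hat F}(\eps/4,\gamma)$.
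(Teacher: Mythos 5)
The paper does not prove this statement; it is quoted verbatim from Agarwal~et~al.~\cite{ahmw-nltaa-05}, Theorem~4.1, so the comparison has to be against that source rather than anything in this paper. Your high-level plan---snap each weak-simplification vertex $v_i$ to a nearby vertex of $\gamma$ respecting the order given by the matching, then bound the edge-by-edge error via the triangle inequality---is indeed the shape of Agarwal~et~al.'s argument, and the ingredient you cite, that the \fd between two segments is at most the larger of the two endpoint displacements, is the correct workhorse. But the gap you flag is not actually closed by the repair you propose, and the repair as stated is false. Monotonicity of $f$ only places the snap target $p_{j_i}$ within $\eps/4$ of a point $\sigma(u)$ whose \emph{parameter} lies between those of $v_{i-1}$ and $v_i$; that point may be anywhere along the edge $\overline{v_{i-1}v_i}$ and can be arbitrarily far from $v_i$ in space. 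Concretely, if $\gamma$ has one very long straight edge and $\sigma$ places a vertex $v_i$ next to its midpoint, then $\gamma(t_i)$ sits mid-edge and both candidate snap targets (the two edge endpoints) are at distance comparable to the edge length from $v_i$, far larger than any constant multiple of $\eps/4$. So ``each rounding displacement proportional to $\eps/4$'' does not hold, and the three-term decomposition you build from it does not sum to $\eps$.

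This is not merely a missing technical detail: the quantity $\|v_i - p_{j_i}\|$ is the wrong thing to bound, since (as the example shows) it can be huge while the final simplification error is still small because the long intervening subcurve is itself nearly straight. The published proof instead works directly with subcurves rather than pointwise displacements. The needed lemma is of the form: if a polyline $\pi$ from $a$ to $b$ satisfies $d_F(\pi,S)\le\delta$ for \emph{some} segment $S$, then $d_F(\pi,\overline{ab})\le 2\delta$ (snap the segment's endpoints to $a,b$). One then has to show how the pieces $\gamma[p_{j_i},p_{j_{i+1}}]$ decompose into parts matched to single edges of $\sigma$ so that this lemma applies; assembling those pieces is where the precise factor $4$ in $4\cdot(\eps/4)=\eps$ comes from. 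That step is the core of the proof, and it is the step that is absent here---acknowledged but replaced by a claim (``rounding displacement is $O(\eps/4)$'') that does not hold.
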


This result gives the following algorithm for approximating the weak minimum-error \ls.

\begin{lemma} \label{lem:4apxLS}
    Given a curve $\gamma$ of $m$ vertices, a $4$-approximate $\ell$-simplification under the continuous \fd can be computed in $O(m^3\log{m})$ time.
\end{lemma}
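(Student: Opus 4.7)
The approach is to reduce the weak min-error $\ell$-simplification problem to the \emph{vertex-constrained} min-error $\ell$-simplification problem via Theorem~\ref{thm:4-apx}, and then to solve the vertex-constrained problem by binary search over candidate error values using a min-$\#$ subroutine.

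First, let $\eps^{*}$ denote the optimal error of a weak $\ell$-simplification of $\gamma$, so that $\kappa_{\hat{F}}(\eps^{*},\gamma)\leq \ell$. Applying Theorem~\ref{thm:4-apx} with parameter $4\eps^{*}$ yields $\kappa_{F}(4\eps^{*},\gamma)\leq \kappa_{\hat{F}}(\eps^{*},\gamma)\leq \ell$. Hence the minimum error achievable by a vertex-constrained $\ell$-simplification is at most $4\eps^{*}$, and any such simplification is automatically a $4$-approximate weak $\ell$-simplification. So it suffices to compute an optimal vertex-constrained min-error $\ell$-simplification within the claimed time bound.

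To do this efficiently, I would exploit the fact that, by the definition of the vertex-constrained error in the Preliminaries, the optimum value lies in the finite set $E := \{\,d_F(\gamma[p_i,p_j],\overline{p_i p_j}) \mid 1\leq i<j\leq m\,\}$ of size $O(m^{2})$. The plan is: (i) precompute every element of $E$ using a segment-versus-subcurve Fr\'echet algorithm in $O(j-i)$ time per pair, totalling $O(m^{3})$ time; (ii) sort $E$ in $O(m^{2}\log m)$ time; and (iii) binary search over the sorted $E$ for the smallest value $\eps$ for which there exists a vertex-constrained $\eps$-simplification of complexity at most $\ell$. The decision step for a fixed $\eps$ is exactly the vertex-constrained min-$\#$ problem, which can be solved in $O(m^{3})$ time by Godau's algorithm~\cite{g-anmfc-91} cited above; $O(\log m)$ such decisions then give a total running time of $O(m^{3}\log m)$, dominating all other costs. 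The simplification itself is recovered from the witnessing shortest path in the last successful decision call.

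The only step requiring care is justifying that the optimum lies in $E$, which is immediate from the vertex-constrained error formula $d(\gamma,\overline{\gamma})=\max_{1\leq i<\ell}\overline{d}(\overline{v_i v_{i+1}},\gamma[v_i,v_{i+1}])$; the rest is bookkeeping. The main obstacle---avoiding Godau's direct $O(m^{4}\log m)$ min-$\eps$ bound---is handled precisely by decoupling the enumeration of candidate errors from the decision procedure and using binary search.
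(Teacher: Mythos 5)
Your proposal is correct and takes essentially the same route as the paper: reduce via Theorem~\ref{thm:4-apx} from the weak to the vertex-constrained min-error problem, then solve the latter exactly in $O(m^3\log m)$ time by an Imai--Iri-style shortcut graph with Alt--Godau segment-to-subcurve Fr\'echet computations. One small quibble: you claim $O(j-i)$ per pairwise computation, but Alt--Godau gives $O((j-i)\log(j-i))$ for the optimization version; this does not affect the final bound, which is dominated by the $O(\log m)$ calls to the min-$\#$ decider (and in fact, having precomputed the candidate errors you could replace Godau's $O(m^3)$ decider with an $O(m^2)$ threshold-and-BFS, making the precomputation the bottleneck).
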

\begin{proof}
    Theorem~\ref{thm:4-apx} implies that, given a \emph{weak} minimum-error \ls $\overline{\gamma}$ of $\gamma$ with error $\eps$, there exists a \emph{vertex-constrained} \ls $\overline{\gamma_v}$ % of $\gamma$ 
    with complexity at most $\ell$ and error at most $4\eps$.
    The algorithm by Imai and Iri can be run to compute the simplification, % described in~\cite{g-anmfc-91}, but 
    by computing the \fd between a segment and a curve in $O(m\log{m})$ using the algorithm by Alt and Godau~\cite{ag-cfdbt-95}.
    This gives an $O(m^3\log{m})$ algorithm to compute the minimum-error vertex-constrained \ls, which is the required $4$-approximate  \ls.
\end{proof}

%We discuss below how to compute such an approximation under the continuous \fd, and for the discrete metric, Bereg~et~al.~\cite{bjwyz-spcdfd-08} present an exact algorithm that runs in $O(\ell m \log{m} \log{(m/\ell)}$ time.

\subsection{The Basic Algorithm}\label{sec:basic:algorithm}
Now we are ready to state our adaption of Gonzalez' algorithm.  
A set $\mathcal{G}$ of $n$ polygonal curves is given as input.  
In the first iteration we choose an arbitrary curve $\gamma_1$ from $\mathcal{G}$ and we add a $c$-approximate $\ell$-simplification $\overline{\gamma_1}$ of $\gamma_1$ to the initially empty set of centers $\mathcal{C}$. 
The algorithm now iteratively computes a
sequence of centers.  In the $i$-th iteration, the algorithm finds the curve $\gamma_i \in \mathcal{G}$ that maximizes the quantity
\begin{equation*}
d_{i-1}(\gamma_i) =  \min_{1\leq j \leq i-1} d(\overline{\gamma_{j}}, \gamma_i).
\end{equation*}
This curve is ``farthest'' from the current set of centers. The algorithm then computes a $c$-approximate $\ell$-simplification $\overline{\gamma_i}$ of
$\gamma_i$ and adds it to $\mathcal{C}$. After $k$ iterations the algorithm returns the set $\mathcal{C}=\{\overline{\gamma_1},\dots,
\overline{\gamma_k}\}$. 

%Note that the above algorithm assumes the existence of an algorithm that computes a $c$-approximate $\ell$-simplification of a polygonal curve. 
\begin{theorem} \label{thm:c2:approx}
    Given $n$ polygonal input curves, each of complexity $m$, and positive integers $k, \ell$, the above algorithm computes a $(c+2)$-approximation to the \klc problem  in time $O(kn \cdot \ell m \log(\ell+m)+ k\cdot T_{\ell}(m))$, where $T_{\ell}(m)$ denotes the time to compute a $c$-approximate $\ell$-simplification of a polygonal curve of complexity $m$.
\end{theorem}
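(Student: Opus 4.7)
The plan is to adapt Gonzalez' classical analysis, inserting one extra triangle-inequality step to absorb the error incurred by storing $c$-approximate $\ell$-simplifications in $\mathcal{C}$ instead of the selected input curves themselves. Let $\delta^*$ denote the optimal $(k,\ell)$-center radius and fix optimal centers $c_1^*,\dots,c_k^*$, each of complexity at most $\ell$. The key preliminary observation is that every curve $\gamma_i$ picked by the algorithm lies within Fr\'echet distance at most $\delta^*$ of some $c_j^*$, so the minimum-error $\ell$-simplification of $\gamma_i$ has error at most $\delta^*$, and therefore its $c$-approximate counterpart satisfies $d(\gamma_i,\overline{\gamma_i}) \leq c\delta^*$.

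The correctness argument is by contradiction. Assume some input curve $\gamma^* \in \mathcal{G}$ has $d(\gamma^*,\overline{\gamma_i}) > (c+2)\delta^*$ for every $i=1,\dots,k$. I would apply pigeonhole to the $k+1$ curves $\gamma_1,\dots,\gamma_k,\gamma^*$: two of them, say $\gamma_a$ and $\gamma'$, must be assigned to the same optimal center, so by triangle inequality and symmetry $d(\gamma_a,\gamma') \leq 2\delta^*$. Combining this with $d(\gamma_a,\overline{\gamma_a}) \leq c\delta^*$ yields $d(\overline{\gamma_a},\gamma') \leq (c+2)\delta^*$. If $\gamma'=\gamma^*$, this directly contradicts the standing assumption. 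Otherwise $\gamma'=\gamma_b$ for some $b>a$, and the greedy selection rule forces
\[
d(\overline{\gamma_a},\gamma_b) \;\geq\; \min_{j<b} d(\overline{\gamma_j},\gamma_b) \;\geq\; \min_{j<b} d(\overline{\gamma_j},\gamma^*) \;>\; (c+2)\delta^*,
\]
again a contradiction.

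For the running time, each of the $k$ iterations does two things: (i) compute the distance from the newly added center (of complexity at most $\ell$) to every input curve (of complexity $m$), updating the per-curve running minimum $d_{i-1}(\cdot)$; using Alt and Godau's algorithm in the continuous case, or its counterpart for the discrete Fr\'echet distance, this costs $O(\ell m\log(\ell+m))$ per pair and $O(n\ell m\log(\ell+m))$ per iteration, after which a linear scan identifies the next $\gamma_i$; and (ii) compute one $c$-approximate $\ell$-simplification in time $T_\ell(m)$. Summing over $k$ iterations gives the bound claimed.

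There is no real obstacle here, only a mild subtlety: the continuous Fr\'echet distance is merely a pseudo-metric. However, as observed in Section~2, only symmetry and the triangle inequality are needed for the chain of inequalities above, so the analysis goes through uniformly for both variants of the Fr\'echet distance, and the choice of the simplification subroutine (Bereg et al.\ with $c=1$ for the discrete case, Lemma~\ref{lem:4apxLS} with $c=4$ for the continuous case) enters only through the parameters $c$ and $T_\ell(m)$.
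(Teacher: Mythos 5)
Your proof is correct and takes essentially the same approach as the paper's: a Gonzalez-style pigeonhole argument on $k{+}1$ curves, augmented with one extra triangle-inequality term to absorb the $c\delta^*$ simplification error, resting on the key observation that the optimal center nearest $\gamma_i$ is itself a curve of complexity at most $\ell$ and hence certifies $d(\gamma_i,\overline{\gamma_i}) \leq c\delta^*$. The paper argues directly by introducing a curve $\gamma_{k+1}$ realizing the clustering radius $r_k$ and bounding $r_k \leq (c+2)r^*$, whereas you argue by contradiction with an arbitrary far curve, but the two formulations are logically equivalent.
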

%\begin{theorem} \label{thm:c2:approx}
%Given an algorithm that computes a $c$-approximate 
%$\ell$-simplification,  the above algorithm computes 
%a $(c+2)$-approximation to the \klc problem in .
%\end{theorem}
\begin{proof}
We first consider the running time. In each of the $k$ iterations the algorithm performs less than $n$ Fr\'echet distance computations between curves of complexity $m$ and $\ell$, followed by a single call to the $c$-approximate $\ell$-simplification algorithm with an input curve of complexity $m$ as parameter. Using the algorithm by Alt and Godau~\cite{ag-cfdbt-95} for each of these distance computations immediately gives the stated running time.

Next we prove the approximation bound. We claim that the set of centers $\mathcal{C}=\{\overline{\gamma_1},\dots,
\overline{\gamma_k}\}$ computed by the algorithm is a $(c+2)$-approximation to the \klc of $\mathcal{G}$. To see this, we adapt the proof by Har-Peled~\cite{h-gaa-11} (see also Gonzalez~\cite{g-cmmid-85}). Recall that in the $i$-th iteration of the algorithm we maintain for each $\gamma \in \mathcal{G}$ its distance to the current clustering $d_{i-1}(\gamma)$. 
We also maintain the radius of the clustering
\begin{equation*}
r_{i-1} = \max_{\gamma \in \mathcal{G}} d_{i-1}(\gamma) =  \max_{\gamma \in \mathcal{G}} \min_{1\leq j\leq i-1} d(\overline{\gamma_{j}}, \gamma).
\end{equation*}
By construction, we have that $r_1 \geq r_2 \geq \dots \geq r_k$.
We simulate performing another iteration by
% Consider running the algorithm one more step, 
    selecting the curve $\gamma_{k+1} \in \mathcal{G}$ that realizes $r_{k}$.
From the above discussion, we have 
\begin{equation}\label{pair:lb}
\forall~ 1\leq i < j  \leq k+1~:\quad 
d(\overline{\gamma_i}, \gamma_j) 
~\geq~ 
r_k.
\end{equation} 

Now, consider the decomposition of $\mathcal{G}$ into clusters induced by an  optimal solution $\mathcal{C^*}$ having cost $r^*$. That is, every $\beta_i \in \mathcal{C^*}$ defines a subset of $\mathcal{G}$ containing the curves $\alpha \in \mathcal{G}$ such that $\beta_i$ is the curve that minimizes $d(\alpha,\beta_i)$ over all curves in $\mathcal{C^*}$.
By the pigeon hole principle, there must be two curves of $\gamma_1,\dots,\gamma_{k+1}$
that lie in the same cluster. Let $\gamma_i$ and $\gamma_j$ denote those curves
and assume without loss of generality that $i<j$. Finally, let $\beta \in \mathcal{C^*}$ be the corresponding cluster center.  
Using \eqref{pair:lb} and the fact that the \fd satisfies the triangle inequality, 
% \maarten {Do we say somewhere (e.g. in the preliminaries) that the Fr\'echet distance satisfies the triangle inequality?}
we have 
\begin{eqnarray*}
r_k \leq d(\overline{\gamma_{i}}, \gamma_j) 
\leq d(\overline{\gamma_{i}}, \gamma_i) + d(\gamma_i, \beta) + d(\beta, \gamma_j) 
\leq (c+2) r^*
\end{eqnarray*}

The last inequality follows from two facts. First we have that $\gamma_i$ and
$\gamma_j$ are contained in the optimal cluster centered at $\beta$ and thus
their distance to $\beta$ is upper bounded by $r^*$. Secondly, we have that
\[ d(\overline{\gamma_i}, \gamma_{i}) \leq c \cdot d(\gamma_{i},\beta),\]
    since $\overline{\gamma_i}$ is a $c$-approximation of the minimum-error $\ell$-simplification of $\gamma_i$, and $\beta$ is a curve of complexity $\ell$ so must be at least $\eps$ from $\gamma_i$. 
\end{proof}

% Combining Theorem~\ref{thm:c2:approx} with the minimum-error $\ell$-simplification algorithm by Bereg~et~al.~\cite{bjwyz-spcdfd-08} gives us a $3$-approximation algorithm for the \klc problem under the discrete \fd with running time $O(kn \cdot \ell m \log (\ell+m))$. For the continuous \fd we can combine Theorem~\ref{thm:c2:approx} with Lemma~\ref{lem:4apxLS} to get a $6$-approximation algorithm that runs in time $O(km(n \ell \log(\ell+m)+ m^2\log{m}))$.
% In both instances, the algorithms are applicable in spaces of arbitrary dimension.

The following result is implied by using the minimum-error $\ell$-simplification algorithm under the discrete \fd by Bereg~et~al.~\cite{bjwyz-spcdfd-08} or the algorithm described in Lemma~\ref{lem:4apxLS} for the continuous \fd.

\begin{corollary} \label{cor:alg:basic}
    Given $n$ polygonal input curves in $\R^d$, each of complexity $m$, and positive integers $k,\ell$, the above algorithm can compute a $3$-approximation to the \klc problem using the discrete \fd in time $O(kn \cdot \ell m \log (\ell+m))$, or a $6$-approximation under the continuous \fd in time $O(km(n \ell \log(\ell+m)+ m^2\log{m}))$.
\end{corollary}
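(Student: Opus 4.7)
The plan is to simply instantiate Theorem~\ref{thm:c2:approx} with appropriate choices of the subroutine for computing a $c$-approximate $\ell$-simplification, separately for the discrete and continuous Fr\'echet distance. In both cases the approximation factor of the basic algorithm is $(c+2)$ and the running time is $O(kn \cdot \ell m \log(\ell+m) + k\cdot T_\ell(m))$, so there is nothing to prove beyond choosing the right simplification algorithm, bounding its approximation factor $c$ and its running time $T_\ell(m)$, and checking that the additive simplification term is absorbed by (or adds to) the clustering term in the stated form.

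For the discrete Fr\'echet case, I would invoke the algorithm of Bereg~et~al.~\cite{bjwyz-spcdfd-08}, which computes an exact (i.e.\ $c=1$) minimum-error $\ell$-simplification in time $T_\ell(m) = O(\ell m \log m \log(m/\ell))$. Substituting $c=1$ into Theorem~\ref{thm:c2:approx} yields a $(1+2)=3$-approximation, and substituting this value of $T_\ell(m)$ into the running-time bound gives
\[
O\bigl(kn\cdot \ell m \log(\ell+m) + k\cdot \ell m \log m \log(m/\ell)\bigr),
\]
in which the first (clustering) term dominates because $n \geq 1$ and $\log(\ell+m) \geq \log m \geq \log(m/\ell)$, yielding $O(kn\cdot \ell m \log(\ell+m))$ as claimed.

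For the continuous Fr\'echet case, I would invoke Lemma~\ref{lem:4apxLS}, which gives a $4$-approximate $\ell$-simplification in time $T_\ell(m) = O(m^3 \log m)$. With $c=4$, Theorem~\ref{thm:c2:approx} produces a $(4+2)=6$-approximation, and the total running time becomes
\[
O\bigl(kn\cdot \ell m \log(\ell+m) + k\cdot m^3 \log m\bigr) \;=\; O\bigl(km\bigl(n\ell \log(\ell+m) + m^2 \log m\bigr)\bigr),
\]
matching the bound in the statement.

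Since Theorem~\ref{thm:c2:approx} has already been established, and Lemma~\ref{lem:4apxLS} and the Bereg~et~al.\ simplification are external/earlier results, there is no real technical obstacle here; the only things to double-check are (i) that the two simplification algorithms do indeed apply to weak simplifications under the relevant Fr\'echet variant and in arbitrary dimension, and (ii) that the running-time arithmetic collapses into the form stated. Both are straightforward, so I would present the corollary as a short computation with explicit substitution of $c$ and $T_\ell(m)$ into Theorem~\ref{thm:c2:approx}.
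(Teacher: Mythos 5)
Your approach is exactly the paper's: the paper proves Corollary~\ref{cor:alg:basic} in one sentence, by invoking Theorem~\ref{thm:c2:approx} with the Bereg~et~al.\ exact ($c=1$) simplification for the discrete case and the $4$-approximate simplification of Lemma~\ref{lem:4apxLS} for the continuous case. Your instantiation of $c$ and $T_\ell(m)$ and the resulting approximation factors $3$ and $6$ are correct and match.

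One small arithmetic slip in your justification for the discrete running time: the fact that $\log(\ell+m)\ge\log m\ge\log(m/\ell)$ does not imply $\log m\cdot\log(m/\ell)\le n\log(\ell+m)$. For instance with $n=1$, $\ell=1$ and $m$ large, $\log m\cdot\log(m/\ell)=\Theta((\log m)^2)$ while $n\log(\ell+m)=\Theta(\log m)$, so the simplification term $k\,\ell m\log m\log(m/\ell)$ actually dominates the clustering term. The corollary's stated bound $O(kn\,\ell m\log(\ell+m))$ therefore silently absorbs the simplification cost only when $n\ge\log m$ (or when one simply keeps the additive $k\,T_\ell(m)$ term, as the theorem does). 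This is an imprecision already present in the paper's own statement, so it does not reflect a wrong approach on your part; just be aware that your stated domination argument is not valid as written, and the cleanest presentation keeps the two terms separate.
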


%The exact algorithm by Bereg~et~al.~\cite{bjwyz-spcdfd-08} finds the minimum-error $\ell$-simplification under the discrete \fd in $O(m\log{m})$, and thus can be used to compute a $3$-approximate solution to the \klc problem.
%In the case of the continuous \fd, the algorithm in Subsection~\ref{sub:simplfication}, below, gives an algorithm that obtains a $6$-approximation of the \klc problem % under the continuous \fd  that runs in time $O(k \ell m n \log(\ell+m)+ km^3\log{m})$. 
%In the next section we show how we can improve the approximation factor for the continuous \fd.

%\begin{theorem}
%  Given $n$ polygonal input curves, each of complexity $m$, the algorithm 
%  of Theorem~{\ref{thm:c2:approx}} has running time in 
%  $O(k^2nm\log(k+m)+ kT(m))$, where $T(m)$ denotes the time to compute 
%  an $c$-approximate $\ell$-simplification.
%\end{theorem}
%\begin{proof}
%  The algorithm uses at most $kn$ distance computations between curves of
%  complexity $m$, respective $k$. The algorithm invokes the
%  simplification algorithm exactly $k$ times.  Using the algorithm by Alt and
%  Godau for each of these distance computations, the theorem follows.
%\end{proof}

\section{Improving the Approximation Factor} \label{sec:imp-apx-cl}
In this section we improve the approximation factor with respect to the basic algorithm described in Section~\ref{sec:basic:algorithm:sketch} by using an algorithm to compute the minimum-complexity $\eps$-simplification % instead of a minimum-error $\ell$-simplification.
% This algorithm is called 
as an \emph{approximate decider} in a search for the approximately optimal $\ell$-simplification. 
%
%Given a polygonal curve $\gamma$ and a positive real $\eps$, a minimum-complexity $\eps$-simplification $\widehat{\gamma}$ is a curve of minimum complexity such that $d(\gamma,\tilde{\gamma})\leq \eps$. Guibas et al.~\cite{ghms-91} gives a quadratic-time simplification algorithm for this problem under the continuous \fd in $\R^2$.
%, and Bereg~et~al.~\cite{bjwyz-spcdfd-08} gives an $O(m\log{m})$ time algorithm for the discrete \fd.
%
% An algorithm that computes a minimum-complexity $\eps$-simpli\-fication 
Such an algorithm can then be used to compute a bicriteria approximation algorithm for the \klc problem. 
In the case of the continuous \fd in 2D, this technique % the algorithm 
improves the approximation factor to $3$ by utilizing an algorithm by Guibas et al.~\cite{ghms-91} to compute a minimum-complexity $\eps$-simplification in $O(m^2 \log^2 m)$ time.

%Assume that we have an algorithm that, given a curve $\gamma$ and a parameter $\eps$, computes a minimum-complexity $\eps$-simplification $\tilde{\gamma}$. % of minimal complexity, such that $d(\gamma,\tilde{\gamma}) \leq \eps$.
% We call this version of the simplification problem \emph{minimum-complexity $\eps$-simplification}.
%For example, Guibas et al.~\cite{ghms-91} gives a quadratic-time   simplification algorithm for this problem under the continuous \fd, and Bereg~et~al.~\cite{bjwyz-spcdfd-08} gives an $O(m\log{m})$ time algorithm for the discrete \fd.
% Furthermore, if the algorithm finds an approximate simplification of minimal complexity, but where $d(\gamma,\tilde{\gamma}) \leq c\eps$ for $c>0$, then this is defined as a \emph{$c$-approximate minimum-complexity $\eps$-simplification}.

% Guibas et al.~\cite{ghms-91} give a quadratic-time exact simplification algorithm for the case that the error $\eps$ of the simplification is fixed and the complexity of the simplification is to be minimized.  
% We call this version of the simplification problem \emph{minimum-complexity $\eps$-simplification} \michael{Already defined in first paragraph of this subsection}.

First, observe that a similar statement to Theorem~\ref{thm:c2:approx} can be proven to obtain an approximate decision algorithm for the clustering problem.
That is, given a set $\mathcal{G}$ of $n$ polygonal curves and a parameter $\delta$ as input, run the curve clustering algorithm % adaptation of Gonzalez' algorithm as 
described in the previous subsection, with the modification that---when computing the simplification in each iteration---a minimum-complexity $\delta$-simplification is computed instead of an approximate minimum-error $\ell$-simplification. 
If the complexity of the resulting simplified curve is larger than $\ell$, then no $\ell$-simplification exists with error at most $\delta$, and the algorithm aborts and returns that there is no solution to the \klc instance with cost at most $\delta$. 
If the algorithm does not abort in any of the $k$ iterations, then it returns the computed set of centers. 

\begin{lemma} \label{lem:3:approx:dec}
    Given $n$ input curves, each of complexity $m$, and a decision parameter $\delta$, there exists an $O(kn \cdot \ell m \log(\ell+m)+ k \cdot T_{\delta}(m))$-time approximate decision algorithm for the \klc problem that answers one of the following:
        \begin{compactenum}[(i)]
        \item $\Delta^* \leq 3\delta$, or
        \item $\Delta^* > \delta$,
        \end{compactenum}
    where $T_{\delta}(m)$ denotes the time to compute a minimum-complexity $\delta$-simplification of a curve, and $\Delta^*$ is the cost of an optimal solution.  
        % In either case, the answer is correct. 
        The answer is correct in both cases, and in the first case, a clustering is obtained with cost at most $3\delta$.
\end{lemma}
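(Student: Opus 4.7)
My plan is to build an approximate-decision routine on top of the Gonzalez-style clustering of Theorem~\ref{thm:c2:approx} and analyze it in two cases, according to whether the algorithm aborts.

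\textbf{Running time.} Each of the $k$ iterations performs at most $n$ Fr\'echet-distance comparisons between curves of complexities $m$ and $\ell$, each costing $O(\ell m\log(\ell+m))$ via the algorithm of Alt and Godau, followed by one minimum-complexity $\delta$-simplification costing $T_\delta(m)$. Summing over iterations yields the stated bound.

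\textbf{Aborting case.} If the algorithm aborts at iteration $i$, then the minimum-complexity $\delta$-simplification of $\gamma_i$ has complexity exceeding $\ell$. Hence no curve of complexity at most $\ell$ lies within Fr\'echet distance $\delta$ of $\gamma_i$. Any feasible \klc solution of cost at most $\delta$ would, however, provide such a curve (the center assigned to $\gamma_i$), yielding $\Delta^*>\delta$, so output (ii) is correct.

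\textbf{Non-aborting case.} The algorithm returns valid centers $\overline{\gamma_1},\ldots,\overline{\gamma_k}$, each of complexity at most $\ell$ and with $d(\gamma_i,\overline{\gamma_i})\leq\delta$. Let $r_k$ denote the induced clustering cost. Following Gonzalez's analysis, I would simulate an extra iteration by picking $\gamma_{k+1}$ realizing $r_k$ and then apply the pigeonhole principle to the $k+1$ curves $\gamma_1,\ldots,\gamma_{k+1}$ distributed over the $k$ optimal clusters; two of them, say $\gamma_i$ and $\gamma_j$ with $i<j$, share an optimal cluster with center $\beta$. Since $\gamma_j$ is chosen farthest from the current centers and the clustering radius is non-increasing, $d(\overline{\gamma_i},\gamma_j)\geq r_k$, and the triangle inequality yields
\[
r_k \;\leq\; d(\overline{\gamma_i},\gamma_j) \;\leq\; d(\overline{\gamma_i},\gamma_i)+d(\gamma_i,\beta)+d(\beta,\gamma_j) \;\leq\; \delta+2\Delta^*.
\]
A final comparison of $r_k$ with $3\delta$ settles the dichotomy: if $r_k\leq 3\delta$ the returned clustering directly witnesses case (i); otherwise the displayed inequality forces $\Delta^*>\delta$ and we report case (ii) instead.

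\textbf{Main obstacle.} The delicate point is that non-aborting alone does not force $r_k\leq 3\delta$: the Gonzalez-style bound $\delta+2\Delta^*$ collapses to $3\delta$ only once we know $\Delta^*\leq\delta$. The approximation constant $3$ arises from pairing the $\delta$-simplification inequality $d(\gamma_i,\overline{\gamma_i})\leq\delta$ with the doubled optimal-radius contribution of standard farthest-point clustering, and the inexpensive post-processing comparison of $r_k$ against $3\delta$ is what turns the procedure into a clean two-valued approximate decider.
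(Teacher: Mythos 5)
Your proof is correct and follows essentially the same route as the paper: run the Gonzalez-style routine of Theorem~\ref{thm:c2:approx} with a minimum-complexity $\delta$-simplification instead of an $\ell$-simplification, abort when the simplification complexity exceeds $\ell$, and use the same triangle-inequality bound. The one place where you add something beyond what the paper writes explicitly is the final dichotomy on $r_k$ versus $3\delta$. The paper proves only the implication ``$\Delta^* \leq \delta \Rightarrow$ the algorithm does not abort and the returned cost is at most $3\delta$'' and implicitly relies on its contrapositive to justify outputting (ii) whenever the algorithm aborts or the cost exceeds $3\delta$. You instead derive the unconditional bound $r_k \leq \delta + 2\Delta^*$ in the non-aborting case and then observe that comparing $r_k$ against $3\delta$ yields a clean two-way decision: $r_k \leq 3\delta$ certifies (i) (since the returned clustering is feasible, $\Delta^* \leq r_k \leq 3\delta$), while $r_k > 3\delta$ forces $\Delta^* > \delta$, certifying (ii). This is logically equivalent to the paper's argument but makes explicit a case the paper glosses over, namely that non-aborting alone does not guarantee $r_k \leq 3\delta$; your ``Main obstacle'' paragraph correctly identifies and resolves that point. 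The aborting case and the running-time analysis match the paper.
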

\begin{proof}
The running time is the same as in Theorem~\ref{thm:c2:approx}, and thus it remains to prove the approximation bound. 
We adapt the proof of Theorem~\ref{thm:c2:approx}, such that, in the case
where a solution $\mathcal{C'}$ with cost at most $\delta$ exists, 
    each minimum-complexity $\delta$-simplification of an input curve 
    must have complexity at most
$\ell$. Indeed, the center curves of $\mathcal{C'}$ have complexity at most $\ell$ and each input
curve has Fr\'echet distance at most $\delta$ to at least one of them.
Therefore, if $\Delta^* \leq \delta$, then the algorithm must complete and return a
solution without aborting. We claim that in this case the returned solution has cost at most
$3\delta$.  
    Recall from % the proof of 
    Theorem~\ref{thm:c2:approx} that 
\begin{eqnarray*}
r_k \leq d(\overline{\gamma_{i}}, \gamma_j) 
\leq d(\overline{\gamma_{i}}, \gamma_i) + d(\gamma_i, \beta) + d(\beta, \gamma_j),
\end{eqnarray*}
holds for the cost of the clustering $r_k$ computed by the algorithm.
Observe that $d(\gamma_i, \beta) \leq \Delta^* \leq \delta$ and 
$d(\beta, \gamma_j) \leq \Delta^* \leq \delta$, given that $\beta$ is a center in the 
solution. By construction $d(\overline{\gamma_{i}}, \gamma_i)\leq \delta$ since 
$\overline{\gamma_{i}}$ is a minimum-complexity $\delta$-simplification of $\gamma$. 
The claim follows.
\end{proof}

%\begin{corollary}
%        Given $n$ input curves, each of complexity $m$, the algorithm of Lemma~\ref{lem:3:approx:dec} has running time in $O(k \ell m n \log(\ell+m)+ kT(m))$, where $T(m)$ denotes the time to compute an exact minimum-complexity $\eps$-simplification.
%\end{corollary}
%\maarten {Changed $k$ to $\ell$.}

This approximate decision algorithm can then be used in an algorithm to search for an improved approximation of the solution to the \klc problem.

\begin{theorem}\label{thm:approx:search}
    Given a $c_1$-approximation algorithm $\mathcal{A}_1$ for the \klc problem and a $c_2$-approximate decision algorithm $\mathcal{A}_2$ for the \klc problem, a $c_2$-approximate solution to the \klc problem can be computed with one call to $\mathcal{A}_1$ and $O(\log(c_1)/\log(c_2))$ calls to $\mathcal{A}_2$. 
\end{theorem}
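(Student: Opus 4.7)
The plan is to invoke $\mathcal{A}_1$ exactly once to localize $\Delta^*$ within a multiplicative window of size $c_1$, and then to refine the estimate by a geometric search using $\mathcal{A}_2$. First I would run $\mathcal{A}_1$ on the input, obtaining a clustering of cost $R$ that satisfies $R/c_1 \leq \Delta^* \leq R$. This brackets the optimum in an interval of multiplicative length $c_1$ and provides both the certified lower bound $R/c_1$ and the fallback clustering of cost $R \leq c_1 \Delta^*$.

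Next I would probe $\mathcal{A}_2$ at the geometrically spaced radii $\delta_i := (R/c_1)\cdot c_2^{\,i}$ for $i=0,1,2,\dots$, starting from the certified lower bound $\delta_0 = R/c_1$ and multiplying by $c_2$ at each step. I stop at the first index $i^*$ for which $\mathcal{A}_2(\delta_{i^*})$ returns case~(i), and output the clustering it produces. Termination within $\lceil \log c_1/\log c_2\rceil + 1 = O(\log c_1/\log c_2)$ calls is guaranteed because after that many steps $\delta_i \geq R \geq \Delta^*$, which precludes case~(ii) by $\mathcal{A}_2$'s definition and forces case~(i) to be returned.

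For correctness, $\mathcal{A}_2$'s case-(i) contract yields a clustering of cost at most $c_2\,\delta_{i^*}$, so it suffices to argue $\delta_{i^*} \leq \Delta^*$. If $i^*=0$, then $\delta_{i^*} = R/c_1 \leq \Delta^*$ directly from the guarantee of $\mathcal{A}_1$, so the output is a $c_2$-approximation. Otherwise, the preceding call $\mathcal{A}_2(\delta_{i^*-1})$ returned case~(ii), certifying $\Delta^* > \delta_{i^*-1} = \delta_{i^*}/c_2$; combined with the $c_2\,\delta_{i^*}$ upper bound on the output's cost, this two-sided bracket is what pins the approximation factor to $c_2$.

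The main technical obstacle lies in this last step. A purely black-box treatment of $\mathcal{A}_2$ would give only $\delta_{i^*} < c_2\Delta^*$ and hence a $c_2^2$-approximation, losing a factor of $c_2$. Tightening the bound to the claimed $c_2$ requires exploiting structural properties of the clustering returned in case~(i)---in particular, the fact that it is the output of the modified Gonzalez procedure of Lemma~\ref{lem:3:approx:dec}, whose actual cost can be charged against $\Delta^*$ itself (through the triangle-inequality computation in the proof of that lemma) rather than against the probe radius $\delta_{i^*}$. Anchoring the search grid at the certified lower bound $R/c_1$, rather than at $R$, is what makes the telescoping with the preceding case-(ii) certificate yield the tight $c_2$ factor.
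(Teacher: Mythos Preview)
Your overall scheme---run $\mathcal{A}_1$ once to bracket $\Delta^*$ in $[R/c_1,R]$, then probe $\mathcal{A}_2$ at $\delta_i=(R/c_1)\,c_2^{\,i}$ until case~(i) is returned---is exactly the paper's approach, and your termination count is correct.

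The gap is in your final paragraph. For $i^*>0$ the preceding case-(ii) certificate only gives $\delta_{i^*-1}<\Delta^*$, i.e.\ $\delta_{i^*}<c_2\Delta^*$. Feeding this into the structural bound $r_k\le \delta_{i^*}+2\Delta^*$ from the proof of Lemma~\ref{lem:3:approx:dec} yields $r_k<(c_2+2)\Delta^*$, not $c_2\Delta^*$; for $c_2=3$ that is only a $5$-approximation. Anchoring the grid at $R/c_1$ does secure the base case $i^*=0$, but once you have stepped up by a factor $c_2$ you have lost the inequality $\delta_{i^*}\le\Delta^*$ that either the black-box bound $c_2\delta_{i^*}$ or the structural bound $\delta_{i^*}+2\Delta^*$ would need in order to produce a genuine $c_2$-approximation. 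So neither of the two routes you sketch actually delivers the claimed factor. The paper's own proof is equally terse and does not address this point; taken at face value, the geometric search with ratio $c_2$ gives a $c_2^2$-approximation in the black-box model. Your diagnosis that something beyond the bare statement of Lemma~\ref{lem:3:approx:dec} is required is correct---it is the specific ``charge against $\Delta^*$'' remedy you propose that falls short.
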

\begin{proof}
Let $\alpha$ be the real cost of the approximate solution returned by algorithm $\mathcal{A}_1$.  Since $\mathcal{A}_1$ computes a $c_1$-approximate solution, this implies that $\Delta^* \in [\alpha/c_1, \alpha]$. We now call
algorithm $\mathcal{A}_2$ with $\delta=3^i \alpha/c_1$ for integer
$i=0,1,\dots$ until it returns a $c_2$-approximate solution.  This must happen for some $i$ with $c_2^i \alpha/c_1 \leq \alpha$. The theorem now follows.
\end{proof}

% We can now plug in the algorithm of Theorem~\ref{thm:c2:approx} as $\mathcal{A}_1$, with the $4$-approximate $\ell$-simplification from Lemma~\ref{lem:4apxLS}, together with Lemma~\ref{lem:3:approx:dec} as $\mathcal{A}_2$, with the $O(m^2\log^2 m)$-time minimum-complexity simplification algorithm by Guibas et al.~\cite{ghms-91}, into Theorem~\ref{thm:approx:search}: 

Let $\mathcal{A}_1$ be an instance of the algorithm described in Theorem~\ref{thm:c2:approx} that uses the algorithm from Lemma~\ref{lem:4apxLS} as a subroutine to compute a $4$-approximate \ls.
Similarly, let $\mathcal{A}_2$ be an instance of the algorithm from Lemma~\ref{lem:3:approx:dec} that calls the $O(m^2\log^2 m)$-time minimum-complexity simplification algorithm by Guibas et al.~\cite{ghms-91} for planar curves.
Theorem~\ref{thm:approx:search} thus implies the following:

%We can now use Theorem~\ref{thm:approx:search} with the algorithm of Lemma~\ref{lem:3:approx:dec} as $\mathcal{A}_2$ with the $O(m^2\log^2 m)$-time minimum-complexity simplification algorithm by Guibas et al.~\cite{ghms-91}, and with the algorithm of Theorem~\ref{thm:c2:approx} as $\mathcal{A}_1$ to obtain the following corollary. 

%\begin{corollary} \label{cor:alg:impr}
%    Given $n$ polygonal input curves in the plane, each of complexity $m$, and positive integers $k,\ell$, the above algorithm computes a $3$-approximation to the \klc problem in time $O(km(n \ell \log(\ell+m)+ m^2\log{m}))$.
%\end{corollary}
\improvedalgresult*

%\begin{corollary}\label{cor:3:approx}
%Given an algorithm that computes a $c$-approximate $\ell$-simplification in $\R^2$, we can compute a $3$-approximate solution to the \klc problem in $\R^2$ in time in $O(kn \cdot \ell m \log(\ell+m) \log c + km^2\log^2 m  \log c + kT_{\ell}(m))$, where $T_{\ell}(m)$ denotes the time to compute a $c$-approximate minimum-error $\ell$-simplification.
%\end{corollary}

\section*{Acknowledgements}
Natasja van de l'Isle is working on practical algorithms for curve clustering. We would like to thank her for sharing the example from Figure~\ref{fig:pigeon-example} with us.
%\input{simp}
% \input{conclusion}

% \michael{Paper is over-length. Need to shorten and/or move items to appendix.}

\bibliographystyle{abbrv}
\bibliography{refs}

\clearpage
\appendix
\section{Hardness Results in 1D}\label{sec:1d-app}

In this section we show that the decision version of the \klc[1] problem for discrete Fr\'echet distance as well as continuous Fr\'echet distance is $\mathsf{NP}$-hard in one dimension.  Later, in Section~\ref{sec:1Dappx-app} we will show that it is also hard to approximate the optimal distance within a constant factor. In Section~\ref{sec:meb-app} we show that the construction for the $\mathsf{NP}$-hardness of the minimum enclosing ball problem extends to the continuous Fr\'echet distance.

We reduce from the {\sc Shortest Common Supersequence} ({\sc SCS}) problem~\cite{RU81}, which asks to compute a shortest sequence $s^*$ such that each of $n$ finite input strings $s_i$ over an alphabet $\Sigma=\{A,B\}$ is a subsequence of $s^*$.

Given an instance of the decision version of the {\sc SCS} problem, i.e., a set of strings $S=\{s_1,s_2,\dots,s_n\}$ over an alphabet $\{A,B\}$, and a maximum allowed length $t$ of the sought supersequence, we will construct a corresponding instance of \klc[1] problem for Fr\'echet distance $\delta=1$. Let $p_{i}=(i)$ for $-3 \le i \le 3$ be seven points with the respective integer coordinates.

\subsection{Discrete Fr\'echet Distance}\label{sec:1Dhard-discrete-app}
For each input string $s_i\in S$ we construct a one-dimensional curve $\gamma(s_i)$ in the following way. For each character of $s_i$ we build one of the two \emph{letter} gadgets, which we connect into a curve with \emph{buffer} gadgets. Each curve will also begin and end with a buffer gadget. Specifically, consider the following point sequences (each of size $1$):
\[
g_A:= p_{-3} \,, \quad g_B:= p_{3} \,, \quad g_a:= p_{-1}\,, \quad g_b:= p_{1}\,.
\]
We map each letter of $s_{i}$ to a one-dimensional piece of curve in the following way:
\[
\begin{aligned}
A &\rightarrow~ (g_{a}~ g_{b})^{t}~ g_A~ (g_{b}~ g_{a})^{t} \,,\\
B &\rightarrow~ (g_{b}~ g_{a})^{t}~ g_B~ (g_{a}~ g_{b})^{t} \,.
\end{aligned}
\]
We obtain curve $\gamma(s_{i})$ by concatenating the resulting curves. We call $g_{A}$ and $g_{B}$ the letter gadgets, and the subcurves in between---the buffer gadgets. That is, a buffer gadget at the end of the curve consists of $(g_{a}~ g_{b})^{t}$ or $(g_{a}~ g_{b})^{t}$, and an intermediate buffer gadget consists of a combination of two of these sequences. Figure~\ref{fig:1dgadgetsD-example} shows an example of three curves constructed for strings $ABB$, $BBA$, and $ABA$.

%\begin{figure}[t]
%\centering
%\includegraphics{1Dhardness-example}
%\caption{Three curves corresponding to strings $ABB$ (green), $BBA$ (blue), and $ABA$ (orange), and a center curve %(purple) within discrete Fr\'echet distance $1$ from them.}
%\label{fig:1dgadgetsD-example-app}
%\end{figure}

Consider the question of finding a curve $c_i$ of minimum complexity within discrete Fr\'echet distance $\delta=1$ of a curve $\gamma(s_i)$ constructed for some string $s_i\in S$. One vertex of $c_i$ cannot completely cover a letter gadget and a neighboring buffer gadget, that is, it cannot be within distance $\delta=1$ from all the points of the two gadgets. Furthermore, a buffer gadget cannot be covered by two consecutive vertices which are also covering the neighboring letter gadgets. Therefore, the curve $c_i$ must have at least one vertex per gadget. Thus, $|c_i|\ge 2|s_i|+1$. 

Now, consider a curve $c'_i$ that has a vertex at $p_{0}$ for every buffer gadget, a vertex at $p_{-2}$ for every $A$-gadget, and a vertex at $p_{2}$ for every $B$-gadget (refer to Figure~\ref{fig:1dgadgetsD-example}). Then, the discrete Fr\'echet distance between $\gamma(s_i)$ and $c'_i$ is $1$. Note, that $|c'_{i}|=2|s_i|+1$. On the other hand, one buffer gadget can completely cover $c'_i$, i.e., $d_{DF}(\text{buffer gadget},c'_i)=1$. Indeed, every vertex of $c'_i$ is within distance $1$ from one of the `spikes' of the buffer gadget (points $p_{-1}$ or $p_{1}$), and the number of such spikes in a buffer gadget is larger than the number of points of $c'_i$ corresponding to a letter.

Having constructed a curve $\gamma(s_i)$ for each string $s_i$, we will now show that there exists a string $s^*$ of length $t$ that is a supersequence of all strings in $S$ if and only if there exists a center curve $c^*$ with $\ell=2t+1$ vertices that lies within discrete Fr\'echet distance $\delta=1$ from all curves $\gamma(s_i)$.

\begin{lemma}
    For any true instance of the {\sc SCS} problem, there exists a center curve of length at most $2t+1$.
\end{lemma}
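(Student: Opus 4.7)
The plan is to exhibit an explicit center curve and verify that it lies within discrete Fr\'echet distance $1$ of every $\gamma(s_i)$. Since the instance is true, fix a common supersequence $s^{*}$ of the $s_i$'s with $|s^{*}| \le t$. Without loss of generality, pad $s^{*}$ with arbitrary letters until $|s^{*}| = t$ (being a supersequence is preserved). I would then define the center curve $c^{*}$ by interleaving a separator vertex with a letter vertex:
\[
  c^{*}_{j} \;=\;
  \begin{cases}
    p_{0}    & \text{if } j \text{ odd,}\\
    p_{-2}   & \text{if } j \text{ even and } s^{*}_{j/2} = A,\\
    p_{2}    & \text{if } j \text{ even and } s^{*}_{j/2} = B,
  \end{cases}
  \qquad j = 1,\ldots,2t+1.
\]
This yields $|c^{*}| = 2t+1$ directly, which handles the length bound.

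Next, I would exhibit, for each $s_i \in S$, an explicit monotone alignment between the vertex sequence of $\gamma(s_i)$ and that of $c^{*}$ witnessing $d_{DF}(\gamma(s_i), c^{*}) \le 1$. Since $s_i$ is a subsequence of $s^{*}$, fix an embedding that maps the $k$-th letter of $s_i$ to some position in $s^{*}$; this in turn maps the $k$-th letter gadget of $\gamma(s_i)$ to one of the even-indexed letter vertices of $c^{*}$. The alignment then does the obvious thing:
\begin{compactenum}[(i)]
\item a letter gadget $g_A$ (the point $p_{-3}$) of $\gamma(s_i)$ is matched to the corresponding $p_{-2}$ vertex of $c^{*}$, at distance $1$; symmetrically for $g_B$ at $p_{3}$ matched to $p_{2}$;
\item each $p_0$ vertex of $c^{*}$ is matched to a point of the appropriate adjacent buffer gadget of $\gamma(s_i)$ (either $p_{-1}$ or $p_{1}$), at distance $1$;
\item a letter vertex of $c^{*}$ at $p_{-2}$ (resp.\ $p_{2}$) that corresponds to a letter of $s^{*}$ \emph{not} used by the embedding of $s_i$ is absorbed by a vertex $p_{-1}$ (resp.\ $p_{1}$) of the buffer gadget of $\gamma(s_i)$ aligned with that position, again at distance $1$;
\item the remaining points of each buffer gadget of $\gamma(s_i)$ are matched to a single $p_0$ of $c^{*}$, all at distance $1$.
\end{compactenum}

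The main technical point, and the one I would write out carefully, is that the above can be realized as a valid monotone alignment in the sense of the discrete Fr\'echet definition: the assignments above respect the left-to-right order and each buffer gadget, which contains $\Theta(t)$ alternating spikes at $p_{-1}$ and $p_{1}$, is long enough to supply all the ``absorb'' matches demanded by items (iii) and (iv) for the at most $t$ letters of $s^{*}$ that could be skipped in any one gap. I would verify this ``capacity'' claim using the fact that the buffer gadget has length $2t$ spikes, which dominates the at most $t$ letters plus a few $p_0$'s that need to be matched to it. Once the alignment is constructed, every matched pair has absolute difference exactly $1$, hence $d_{DF}(\gamma(s_i), c^{*}) \le 1$ for every $i$, and $c^{*}$ is a center curve of length $2t+1$ as required.
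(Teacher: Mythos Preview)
Your proposal is correct and follows essentially the same approach as the paper's proof: construct $c^{*}$ by encoding the supersequence $s^{*}$ with letter vertices at $p_{\pm 2}$ interleaved by separator vertices at $p_{0}$, then argue that letter gadgets of each $\gamma(s_i)$ match the corresponding letter vertices while skipped letters and separators are absorbed by buffer spikes. The paper's version is terser and leans on the figure; your explicit enumeration of the four alignment cases and the buffer-capacity check are exactly the details the paper leaves implicit.
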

\begin{proof}
Let $s^*$ be a supersequence of length $t$ of all strings in $S$. Then a curve $c^*$ that starts and ends at $p_{0}$, has a vertex at $p_{-2}$ for every letter $A$ in $s^*$, a vertex at $p_{2}$ for every letter $B$ in $s^*$, and with vertices at $p_{0}$ between each pair of letters, is within Fr\'echet distance $\delta=1$ from all the curves $\gamma(s_i)$ (refer to Figure~\ref{fig:1dgadgetsD-example}). Note, that $c^*$ has $\ell=2t+1$ vertices. Consider some curve $\gamma(s_i)$, and consider letters of $s^*$ that correspond to letters of $s_i$. We can match the letter gadgets of $\gamma(s_i)$ with the corresponding vertices of $c^*$. Now, consider a letter of $s^*$ that does not have a corresponding letter in $s_i$. Then the corresponding vertex of $c^*$ can be matched to one of the spikes of a buffer gadget that is adjacent to the closest letter gadget of $\gamma(s_i)$. Thus, $d_{DF}(\gamma(s_i),c^*)=1$ for all $i$.
\end{proof}

\begin{lemma}
For any center curve of length at most $2t+1$ within discrete Fr\'echet distance $1$ from all the curves $\gamma(s_i)$, there exists a supersequence of all strings in $S$ of size at most $t$.
\end{lemma}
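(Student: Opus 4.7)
The plan is to construct a supersequence $s^*$ of all $s_i\in S$ of length at most $t$ directly from $c$. I label every vertex of $c$ with coordinate at most $-2$ by $A$ and every vertex with coordinate at least $2$ by $B$, calling these the \emph{letter-type} vertices. Let $V$ denote the set of letter-type $c$-vertices that are matched to a letter gadget in some fixed choice of valid matchings $M_i$ between $c$ and $\gamma(s_i)$; I then define $s^*$ as the sequence of labels of $V$ in $c$-order.

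First I will show that in any valid matching $M_i$ between $c$ and $\gamma(s_i)$, each letter gadget $p_{\pm 3}$ of $\gamma(s_i)$ is matched to at least one letter-type $c$-vertex, and distinct letter gadgets of $\gamma(s_i)$ are matched to distinct such vertices. The distinctness follows because two opposite-sign letter gadgets are at distance $6$, so no vertex lies within $1$ of both, while any two same-sign letter gadgets of $\gamma(s_i)$ enclose at least one opposite-sign buffer point $p_{\mp 1}$, which sits at distance at least $3$ from any $x\in [-4,-2]\cup[2,4]$ and therefore cannot be covered by any $c$-vertex that is required to be within distance $1$ of both letter gadgets. Hence $M_i$ uses at least $|s_i|$ distinct letter-type $c$-vertices.

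Next I will show that each of the $|s_i|+1$ buffer gadgets forces at least one additional $c$-vertex dedicated only to buffer coverage. A buffer gadget alternates between $p_{-1}$ and $p_1$ and contains at least $2t\ge 2$ points; a letter-type $c$-vertex can reach a buffer point only when $|x|=2$, and then only one point of one type before the alternation demands the opposite type. Therefore two adjacent letter-matching $c$-vertices cannot between them cover any intermediate buffer, so between consecutive letter-matching $c$-vertices in $M_i$, and also before the first and after the last, there must be at least one $c$-vertex matched only to buffer points. Counting yields $\ell\ge 2|s_i|+1$ and hence $|s_i|\le t$. The same spacing argument applied globally to the combined set of letter-matched $c$-vertices across all $M_i$ gives $|V|+(|V|+1)\le \ell\le 2t+1$, so $|s^*|=|V|\le t$.

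To conclude, each $s_i$ embeds as a subsequence of $s^*$: under $M_i$, the letters of $s_i$ correspond to distinct letter-type $c$-vertices lying in $V$, with matching labels and in $c$-order by monotonicity of the matching; together with $|s^*|\le t$ this proves the lemma. The main technical obstacle is the globalization of the spacing argument in the last step: one must verify that whenever two letter-type $c$-vertices are adjacent in $c$ and both letter-matched (possibly in distinct matchings $M_i$ and $M_{i'}$), at least one of $\gamma(s_i)$ or $\gamma(s_{i'})$ has a buffer gadget that the two vertices cannot jointly cover, which follows from the alternating buffer structure and the fact that both adjacent letter-type vertices would need to lie at $|x|=2$ to be buffer-usable at all, covering at most one point of one type each.
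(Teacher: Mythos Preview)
Your approach is essentially the paper's: extract a supersequence from the $c$-vertices that touch letter gadgets and bound its length via a spacing argument. The paper's own proof is brief on exactly the point you flag as the ``main technical obstacle,'' so you are right to isolate it---but your resolution of it does not go through.

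The gap is in the inequality $|V|+(|V|+1)\le \ell$. Your justification assumes that any two adjacent $c$-vertices $c_j,c_{j+1}\in V$ force a buffer gadget that they cannot jointly cover. This handles the case where $c_j$ is matched to one letter gadget and $c_{j+1}$ to a \emph{different} one, but it misses the case where both are matched to the \emph{same} letter gadget in the same $M_i$. Concretely, nothing prevents $c$ from having two consecutive vertices both at $-2$ (or anywhere in $[-4,-2]$); in $M_i$ both are then matched to one and the same $p_{-3}$, so both enter $V$, yet no buffer separates them and no contradiction arises (in every other $M_{i'}$ they can sit harmlessly on a run of $p_{-1}$'s). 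With such duplications your $s^*$ is still a valid supersequence, but its length is no longer bounded by~$t$.

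The easy repair is to change what you count: instead of all letter-matched vertices, take one representative per maximal run of consecutive same-label letter-type $c$-vertices (equivalently, per ``letter block''). Your own argument already shows that two \emph{different} letter gadgets of any $\gamma(s_i)$ cannot be served by adjacent $c$-vertices---the buffer after an $A$-gadget starts with $p_1$ and before an $A$-gadget ends with $p_1$, which no vertex with coordinate $\le -2$ can reach, and symmetrically for $B$---so distinct letter gadgets of each $s_i$ land in distinct blocks, and each $s_i$ still embeds. The spacing now holds between blocks (consecutive blocks are separated by a vertex that must reach the opposite-sign buffer point), giving at most $t$ blocks from $\ell\le 2t+1$. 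With this modification your proof coincides with the paper's.
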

\begin{proof}
Similarly to the argument above, for every curve $\gamma(s_i)$, the center curve $c$ must have a vertex per each letter gadget, separated by vertices that cover buffer gadgets. Indeed, for a vertex of $c$ to cover an $A$-gadget, it must have a coordinate at most $-2$, and to cover a $B$-gadget, it must have a coordinate at least $2$. Furthermore, two consecutive vertices of $c$ cannot cover two letter gadgets, there must be a vertex in between covering a buffer gadget. Thus, if $c$ has $2t+1$ vertices, then at most $t$ of these vertices can match to letter gadgets of curves $\gamma(s_i)$. A string $s^*$ consisting of letters corresponding to the vertices of $c$ that match to letter gadgets is a supersequence of all strings in $S$. Indeed, for all strings $s_i$, the letter gadgets of $\gamma(s_i)$ corresponding to the letters of $s_i$ are matched to vertices of $c$ in order, and thus they form a subsequence of $s^*$.
\end{proof}
Thus, we can combine the two lemmas into the following theorem.
%\hardoneddiscrete*
\begin{theorem}
    The \klc[1] problem in 1D is $\mathsf{NP}$-hard for the discrete Fr\'echet distance.
\end{theorem}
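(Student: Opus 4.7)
The plan is to combine the two lemmas just proven with the known $\mathsf{NP}$-hardness of \textsc{SCS} over a binary alphabet~\cite{RU81}. Specifically, for any instance $(S,t)$ of the decision version of \textsc{SCS}, I would produce the corresponding instance of the decision version of the \klc[1] problem for curves in $\R^1$ under the discrete Fr\'echet distance with $k=1$, target radius $\delta=1$, and complexity bound $\ell = 2t+1$, by constructing the curve $\gamma(s_i)$ from each $s_i \in S$ as defined in Section~\ref{sec:1Dhard-discrete-app}. By the two lemmas, $(S,t)$ is a YES-instance of \textsc{SCS} if and only if the constructed instance admits a center curve of complexity at most $2t+1$ within discrete Fr\'echet distance $1$ from every $\gamma(s_i)$.

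Next I would verify that the reduction is polynomial in the size of the \textsc{SCS} instance. Each letter of $s_i$ is replaced by a gadget of complexity $O(t)$, so the curve $\gamma(s_i)$ has complexity $O(t \cdot |s_i|)$, and the total size of the constructed \klc[1] instance is $O(t \cdot \sum_i |s_i|)$. As noted in the discussion preceding Lemma~\ref{lem:SCS-implies-j-k-center-curve}, binary \textsc{SCS} is nontrivial only when $t \le 2\max_i |s_i|$, so $t$ is polynomially bounded in the input size and the whole construction is polynomial. The coordinates of the vertices of $\gamma(s_i)$ are integers in $\{-3,\dots,3\}$, so writing them down takes constant space per vertex.

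Combining these pieces, a polynomial-time algorithm for the decision version of \klc[1] under the discrete Fr\'echet distance in $\R^1$ would yield a polynomial-time algorithm for the decision version of binary \textsc{SCS}, which contradicts $\mathsf{P} \neq \mathsf{NP}$ unless \textsc{SCS} is already polynomial. Since the reduction already works with $k=1$, even this restricted version is $\mathsf{NP}$-hard. The only real subtlety is bookkeeping: making sure that the construction depends only on parameters computable in polynomial time from the input and that the threshold $\ell = 2t+1$ and $\delta = 1$ are recorded as part of the produced instance. No step requires an additional geometric argument beyond what the two preceding lemmas already provide, so the proof is essentially a one-line consequence of those lemmas together with the polynomial-time bound on the reduction.
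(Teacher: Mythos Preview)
Your proposal is correct and matches the paper's approach: the paper states this theorem immediately after the two lemmas with the sentence ``Thus, we can combine the two lemmas into the following theorem,'' giving no further argument. Your write-up fills in the routine details (polynomial size of the constructed curves, the bound $t \le 2\max_i |s_i|$ ensuring $t$ is polynomial in the input) that the paper leaves implicit, but the logical structure is identical.
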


\subsection{Continuous Fr\'echet Distance}\label{sec:1Dhard-cont-app}
Next, we will show how to modify the above $\mathsf{NP}$-hardness construction for the continuous Fr\'echet distance. The subtlety of the continuous case  is that a center curve does not need an extra vertex to cover a buffer gadget when switching between two different letters, i.e., in between $AB$ or $BA$. Thus, for a string of length $t$, depending on its letters, it can be matched to a center curve of size in the range $t+2 \le \ell \le 2t+1$. To make the number of these (possibly) missing vertices comparatively small to the total number of vertices used in covering the curves $\gamma(s_{i})$, and thus to be able to uniquely identify the size of the superstring corresponding to a center curve, we modify the gadgets of our $\mathsf{NP}$-hardness construction in the following way. Let
\[
\begin{aligned}
g_A&:= p_{-3}~ p_{0} \,, & \quad \widehat{g}_A &:= (g_A)^{t-1}~ p_{-3} \\
g_B&:= p_{3}~ p_{0}  \,, & \quad \widehat{g}_B &:= (g_B)^{t-1}~p_{3}\\
g_a&:= p_{-1}~ p_{0} \,, & \quad \widehat{g}_a &:= (g_a)^{t-1}~ p_{-1} \\
g_b&:= p_{1}~ p_{0}  \,, & \quad \widehat{g}_b &:= (g_b)^{t-1}~ p_{1} \,.
\end{aligned}
\]
We map each letter of $s_{i}$ to a one-dimensional curve in the following way:
\[
\begin{aligned}
A &\rightarrow~ (\widehat g_{a}~ \widehat g_{b})^{t}~ \widehat{g}_A~ (\widehat g_{b}~ \widehat g_{a})^{t} \,,\\
B &\rightarrow~ (\widehat g_{b}~ \widehat g_{a})^{t}~ \widehat{g}_B~ (\widehat g_{a}~ \widehat g_{b})^{t} \,.
\end{aligned}
\]
We obtain curve $\gamma(s_{i})$ by concatenating the resulting curves. We call $\widehat g_{A}$ and $\widehat g_{B}$ the letter gadgets, and the subcurves in between---the buffer gadgets. That is, a buffer gadget can consist of $(\widehat g_{a}~ \widehat g_{b})^{t}$, $(\widehat g_{b}~ \widehat g_{a})^{t}$, or a combination of two of these chains.

Similarly to the discrete Fr\'echet distance case, given a curve $c_i$ within continuous Fr\'echet distance $1$ of the curve $\gamma(s_i)$ for some $s_{i}$, the length of the curve $c_i$ must be at least $(2t-1)|s_{i}|+2$ vertices. Indeed, every letter gadget requires $c_i$ to have at least $2t-1$ vertices, as the distance between every two consecutive vertices of a letter gadget is exceeding $\delta=1$. Buffer gadgets that separate letter gadgets may not require additional vertices, but the two buffer gadgets at the ends of $\gamma(s_{i})$ require at least two more vertices.

We can also construct a curve $c'_i$ with $(2t-1)|s_i|+2 \le |c'_i| \le 2t|s_i|+1$ that is within Fr\'echet distance $1$ from $\gamma(s_i)$. Let $c'_i$ start and end at $p_0$, have a vertex at $p_0$ when $\gamma(s_i)$ has a vertex at $p_0$, have a vertex at $p_2$ when $\gamma(s_i)$ has a vertex at $p_3$, and have a vertex at $p_{-2}$ when $\gamma(s_i)$ has a vertex at $p_{-3}$. Furthermore, let $c'_i$ have a vertex at $p_0$ between every pair of the same letter gadgets of $\gamma(s_i)$, i.e., between $AA$ or $BB$. Then, $d_F(c'_i,\gamma(s_i))=1$. Furthermore, one buffer gadget can completely cover $c'_i$, i.e., $d_F(\text{buffer gadget},c'_i)=1$.

Given an instance of the {\sc SCS} problem, we construct curves $\gamma(s_i)$ for all strings $s_i\in S$. Following a similar argument as above, we can prove that there exists a center curve $c$ of size $(2t-1)t+2\le \ell \le 2t^2+1$ within distance one of all curves $\gamma(s_i)$ if and only if there exists a supersequence $s^*$ of $S$ of size $t = \lfloor \frac{\ell + t}{2t} \rfloor$. Thus,
\begin{theorem}
    The \klc[1] problem in 1D is $\mathsf{NP}$-hard for the continuous Fr\'echet distance.
\end{theorem}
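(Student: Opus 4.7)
The plan is to mirror the discrete-case argument from Section~\ref{sec:1Dhard-discrete-app}, establishing the equivalence in both directions while carefully accounting for the extra flexibility of the continuous Fr\'echet distance.

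For the forward direction, given a supersequence $s^*$ of length $t$ of the strings in $S$, I would construct a candidate center curve $c^*$ by placing a vertex at $p_{-2}$ (resp.\ $p_2$) for every $A$ (resp.\ $B$) letter of $s^*$, inserting a $p_0$ vertex between every pair of \emph{equal} consecutive letters and at the two ends, and omitting the intermediate $p_0$ when two consecutive letters differ. This keeps $|c^*|$ in the claimed range $[(2t-1)t+2,\,2t^2+1]$. I would then exhibit a valid reparameterization witnessing $d_F(c^*,\gamma(s_i))\le 1$ for each $s_i\in S$: for each letter of $s^*$ that appears at the corresponding position of the subsequence embedding of $s_i$ into $s^*$, match the $p_{\pm 2}$ vertex of $c^*$ to the letter gadget $\widehat g_A$ or $\widehat g_B$ of $\gamma(s_i)$ (the amplitude-$3$ oscillation is matched by the amplitude-$2$ oscillation within distance $1$); for letters of $s^*$ absent from $s_i$ at that position, route the corresponding portion of $c^*$ into a buffer gadget of $\gamma(s_i)$, which, having $t$ oscillations of amplitude $1$, can absorb any sub-sequence coming from $c^*$ within distance $1$.

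For the backward direction, suppose $c$ is a center curve of size at most $\ell$ with $d_F(c,\gamma(s_i))\le 1$ for every $i$. I would first argue, following the same reasoning used to justify the lower bound $|c_i|\ge (2t-1)|s_i|+2$ in the paragraph preceding the theorem, that each letter gadget $\widehat g_A$ or $\widehat g_B$ of $\gamma(s_i)$ forces $c$ to possess at least $2t-1$ distinct vertices matched into the corresponding portion of the parameterization: consecutive peaks of $\widehat g_A$ alternate between $p_{-3}$ and $p_0$, whose distance exceeds $2\delta=2$, so no single vertex of $c$ can cover two consecutive peaks. I would then partition the vertices of $c$ by the parameter-interval of $\gamma(s_i)$ they match to (one buffer gadget, one letter gadget, etc.), extracting from $c$ a string $s^*$ by reading off $A$ whenever a block of at least $2t-1$ vertices sits below $y=-2$ (matched to $\widehat g_A$) and $B$ whenever such a block sits above $y=2$. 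Monotonicity of the Fr\'echet matching ensures $s_i$ is a subsequence of $s^*$ for every $i$, so $s^*$ is a common supersequence.

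The main technical obstacle is the length bound: I must show $|s^*|\le t$, which in the continuous case is subtle because one $p_0$ vertex can be saved between two consecutive letter gadgets of \emph{different} types in $c$. I would handle this by counting: each letter of $s^*$ contributes at least $2t-1$ vertices to $c$ plus at most one additional transition vertex between consecutive letter blocks, plus two endpoint vertices, giving $\ell \ge (2t-1)|s^*| + (|s^*|-1) + 2 - \text{(savings)}$. Since each saved transition vertex saves at most $1$ and occurs at most $|s^*|-1$ times, we get $\ell \ge (2t-1)|s^*|+1$, hence $|s^*| \le \lfloor (\ell+t)/(2t)\rfloor$. Picking $\ell$ in the prescribed range $(2t-1)t+2\le \ell\le 2t^2+1$ forces $|s^*|\le t$, closing the reduction. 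Since the construction has polynomial size in the {\sc SCS} instance and $\mathsf{NP}$-hardness of binary {\sc SCS} is inherited, the theorem follows.
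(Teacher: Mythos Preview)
Your overall strategy mirrors the paper's (which itself only sketches the continuous case by saying ``following a similar argument''), but the concrete description of both directions contains errors stemming from the modified continuous-case gadgets.

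\textbf{Forward direction.} Your $c^*$ places \emph{one} vertex at $p_{\pm 2}$ per letter of $s^*$, yet you then claim $|c^*|\in[(2t-1)t+2,\,2t^2+1]$ and speak of an ``amplitude-$2$ oscillation''. These are inconsistent. Recall that in the continuous construction each letter gadget $\widehat g_A=(p_{-3}\,p_0)^{t-1}p_{-3}$ has $2t-1$ vertices with consecutive vertices at distance $3>2\delta$. A single vertex of $c^*$ at $p_{-2}$ is at distance $2$ from $p_0$, so a one-vertex ``letter'' cannot match $\widehat g_A$ within Fr\'echet distance $1$. The correct center curve (as the paper describes for a single $c_i'$) replaces each letter by a $(2t-1)$-vertex oscillation $p_{-2}\,p_0\,p_{-2}\,\dots\,p_{-2}$; only then do the stated complexity bounds hold and the matching go through.

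\textbf{Backward direction.} Your extraction rule ``read $A$ whenever a block of at least $2t-1$ vertices sits below $y=-2$'' is not what happens: the subcurve of $c$ matched to $\widehat g_A$ must visit $[-1,1]$ at every $p_0$ vertex of the gadget, so the block \emph{oscillates} rather than sitting below $-2$. You need a different criterion (e.g., count vertices of $c$ with coordinate $\le -2$, which must number at least $t$ per $A$-gadget and cannot simultaneously be close to any point of a $B$-gadget). Relatedly, your partition of the vertices of $c$ is defined relative to a single $\gamma(s_i)$; you must argue that the extracted string $s^*$ does not depend on $i$, i.e., that the ``letter'' portions of $c$ are intrinsic to $c$ and every $s_i$ embeds into the same $s^*$.

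With these two fixes the counting you sketch (at least $2t-1$ vertices per letter, plus endpoints, minus at most one saved transition per letter change) does give $|s^*|\le t$ for $\ell\le 2t^2+1$, and the reduction closes as in the paper.
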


\section{Hardness of Approximation in 1D}\label{sec:1Dappx-app}

In this section we will show that the \klc[1] problem is hard to approximate with approximation factor $2-\varepsilon$ for discrete Fr\'echet distance, and $1.5-\varepsilon$ for continuous Fr\'echet distance, for any constant $\varepsilon > 0$.

\subsection{Discrete Fr\'echet Distance}
First, consider the discrete Fr\'echet distance case. The construction we use to prove this is the same as in Section~\ref{sec:1Dhard-discrete-app}. The next lemma is trivial in the discrete Fr\'echet case. Later we will have an analogous lemma for the continuous case as well.
\begin{lemma}\label{lem:no:overlap:1D:discrete-app}
Let $\psi$ and $\phi$ be two one-dimensional point sequences, such that  $d_{DF}(g_A, \psi) < r$ and $d_{DF}(g_B, \phi) < r$ for $r=2$. Each of the following statements is false:
\begin{compactenum}[(i)]
\item $\psi$ contains a prefix that is a suffix of $\phi$,
\item $\phi$ contains a prefix that is a suffix of $\psi$,
\item $\psi$ is a subsequence of $\phi$,
\item $\phi$ is a subsequence of $\psi$.
\end{compactenum}
\end{lemma}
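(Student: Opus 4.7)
The plan is to observe that the lemma essentially reduces to the fact that $\psi$ and $\phi$ live in disjoint regions of the real line, so they cannot share a common non-empty subsequence of any kind. The main work is just making this observation precise; there is no real obstacle.

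First, I would unfold the definitions of $g_A$ and $g_B$: each is a one-point sequence, namely $g_A = \langle p_{-3} \rangle$ and $g_B = \langle p_3 \rangle$. Because the discrete Fr\'echet distance between a one-point sequence $\langle q \rangle$ and a non-empty sequence $\sigma$ is forced to align every point of $\sigma$ with $q$, it equals $\max_j \lVert q - \sigma_j \rVert$. Hence the hypothesis $d_{DF}(g_A, \psi) < r = 2$ implies that every point of $\psi$ lies in the open interval $(-5, -1)$, and the hypothesis $d_{DF}(g_B, \phi) < r = 2$ implies that every point of $\phi$ lies in $(1, 5)$. These two intervals are disjoint, so $\psi$ and $\phi$ share no common point value.

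Having established disjointness of value sets, I would handle the four statements in one stroke. Each of (i)--(iv), under any reasonable convention, requires a common non-empty subsequence between $\psi$ and $\phi$: a non-empty prefix of $\psi$ that is also a suffix of $\phi$ contains at least one point that appears in both sequences, which is impossible; symmetrically for (ii); a subsequence relation $\psi \subseteq \phi$ (or $\phi \subseteq \psi$) would force every point of $\psi$ (resp.\ $\phi$) to occur in $\phi$ (resp.\ $\psi$), again contradicting the disjointness of the value sets. This completes the proof.

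The hardest step, if any, is the tiny bookkeeping point about empty prefixes/suffixes and whether $\psi, \phi$ themselves could be empty. I would simply note that $\psi$ and $\phi$ must be non-empty for the Fr\'echet distance hypotheses to make sense, and that throughout this paper ``prefix'' and ``suffix'' are understood to be non-empty in statements like (i)--(iv), consistent with how the analogous 2D Lemma~\ref{lem:no:overlap:discrete} is used.
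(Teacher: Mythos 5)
Your proof is correct and supplies precisely the details the paper omits: the paper simply declares this lemma ``trivial in the discrete Fr\'echet case'' and gives no argument, and your observation that $d_{DF}(g_A,\psi)<2$ forces all points of $\psi$ into $(-5,-1)$ while $d_{DF}(g_B,\phi)<2$ forces all points of $\phi$ into $(1,5)$, so the value sets are disjoint and no common point (hence no common non-empty subsequence, prefix/suffix overlap, or containment) is possible, is exactly the intended reasoning.
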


\begin{lemma}\label{lem:1D-discrete-approx-app}
    For any instance of the {\sc SCS} problem, if there exists a center curve of size $\ell$ within discrete Fr\'echet distance $\delta$ from all the curves $\gamma(s_{i})$, where $\delta < 2$, then there exists a center curve of size at most $\ell$ within discrete Fr\'echet distance $1$ from all the curves $\gamma(s_{i})$.
\end{lemma}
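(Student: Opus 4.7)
The plan is to take any one-dimensional curve $c$ of complexity at most $\ell$ with $d_{DF}(c, \gamma(s_i)) \le \delta < 2$ for all $i$, and transform it into a curve $c'$ of the same complexity satisfying $d_{DF}(c', \gamma(s_i)) \le 1$ for all $i$. The transformation snaps each vertex of $c$ to one of three canonical coordinates in $\{-2, 0, 2\}$, determined by which of the intervals $(-\infty, -1)$, $[-1, 1]$, or $(1, \infty)$ the original coordinate lies in. Concretely, I would define
\[
v'_j = \begin{cases} -2 & \text{if } v_j < -1, \\ 0 & \text{if } -1 \le v_j \le 1, \\ 2 & \text{if } v_j > 1, \end{cases}
\]
and let $c' = \langle v'_1, \dots, v'_\ell \rangle$. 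Then $|c'| = |c| \le \ell$ automatically.

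The key observation is that every vertex of every $\gamma(s_i)$ has coordinate in $\{-3, -1, 1, 3\}$, and the strict inequality $\delta < 2$ forces a clean separation between vertices of $c$ that serve letter gadgets and those that serve buffer gadgets. More precisely, if a vertex $v$ of $c$ is matched in the alignment witnessing $d_{DF}(c, \gamma(s_i)) < 2$ to a letter-gadget point $\pm 3$, then $v \in (\pm 3 - 2,\, \pm 3 + 2)$, which lies strictly outside $[-1, 1]$ on the correct side. Hence the snapping sends $v$ to $\pm 2$, at distance exactly $1$ from its partner. Vertices matched to buffer-gadget points $\pm 1$ may occupy a wider range, but in each of those cases the snapped vertex still ends up at distance exactly $1$ from its partner.

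I would then complete the proof by fixing, for each $i$, the alignment $T_i$ that witnesses $d_{DF}(c, \gamma(s_i)) < 2$ and reusing it as an alignment between $c'$ and $\gamma(s_i)$. A brief case analysis on the matched point $w \in \{-3, -1, 1, 3\}$ handles everything: for $w = \pm 3$ the argument above applies; for $w = -1$ the partner $v$ lies in $(-3, 1)$, so $v'$ is either $-2$ or $0$, each at distance $1$ from $-1$; the case $w = 1$ is symmetric. Hence every matched pair has distance at most $1$, so $d_{DF}(c', \gamma(s_i)) \le 1$ for every $i$.

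The main subtlety to watch for is the correctness of the snapping at the boundaries $v = \pm 1$: a vertex at exactly $\pm 1$ may legitimately serve a buffer point but, under $\delta < 2$, cannot serve a letter gadget, so snapping it to $0$ is safe. This is the only place where the strict inequality $\delta < 2$ (rather than $\delta \le 2$) is essential, and it is what makes the transformation work uniformly across all input curves with a single alignment reuse, rather than requiring any global rearrangement of $c$.
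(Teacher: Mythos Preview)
Your proof is correct and follows essentially the same snapping idea as the paper: replace each vertex of $c$ by $-2$, $0$, or $2$ according to whether its coordinate lies in $(-\infty,-1)$, $[-1,1]$, or $(1,\infty)$, and reuse the original alignment. Your presentation is in fact tighter than the paper's, which first invokes a non-overlap lemma and argues structurally about buffer gadgets before performing the snap; your direct pair-by-pair verification on $w\in\{-3,-1,1,3\}$ makes those preliminaries unnecessary for this lemma.
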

\begin{proof}
Denote the center curve within discrete Fr\'echet distance $\delta$ from all the curves $\gamma(s_{i})$ as $c$. By Lemma~\ref{lem:no:overlap:1D:discrete-app}, the subsequences of vertices of $c$ that match to $A$-gadgets and to $B$-gadgets do not overlap. The absolute value of the coordinates of the vertices that match to points $p_{-3}$ and $p_{3}$ of the letter gadgets must be strictly greater than $1$. Thus, these vertices cannot be completely matched to a whole buffer gadget. Two consecutive such points cannot cover a whole buffer gadget either. Thus, between any two points that are covering letter gadgets, there must be at least one vertex covering a buffer gadget. 

Then, consider a curve $c^{*}$ of size at most $\ell$ that starts and ends in $p_{0}$, has a vertex at $p_{-2}$ when $c$ has a vertex with coordinate $<-1$, has a vertex at $p_{2}$ when $c$ has a vertex with coordinate $>1$, and has vertices at $p_{0}$ in between. For any point of a curve $\gamma_{s_{i}}$, if it was within Fr\'echet distance $\delta$ from some vertex of the center curve $c$, then it will be within Fr\'echet distance $1$ from the corresponding vertex $c^{*}$.
\end{proof}
As a corollary we obtain the following theorem.
%\begin{theorem}\label{thm:1dhardness:discrete-app}
%    The \klc[1] problem in 1D for the discrete Fr\'echet distance is $\mathsf{NP}$-hard to approximate within approximation factor $2-\varepsilon$.
%\end{theorem}
\hardoneddiscreteappx*

\subsection{Continuous Fr\'echet Distance}
The construction we use to prove this is the same as in Section~\ref{sec:1Dhard-cont-app}.
\begin{lemma}\label{lem:no:overlap:1D:cont-app}
Let $\psi$ and $\phi$ be two one-dimensional curves, such that  $d_F(\widehat{g}_A, \psi) < r$ and $d_F(\widehat{g}_B, \phi) < r$ for $r=1.5$. Each of the following statements is false:
\begin{compactenum}[(i)]
\item $\psi$ contains a prefix that is a suffix of $\phi$,
\item $\phi$ contains a prefix that is a suffix of $\psi$,
\item $\psi$ is a subsequence of $\phi$,
\item $\phi$ is a subsequence of $\psi$.
\end{compactenum}
\end{lemma}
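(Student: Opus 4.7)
The plan is to mirror the 2D continuous argument (Lemma~\ref{lem:no:overlap:count}): assume one of the four conditions holds, restrict the optimal Fréchet reparameterizations to the overlapping portion, and derive via the triangle inequality that a suitable portion of $\widehat{g}_A$ is within Fréchet distance $<2r=3$ of a suitable portion of $\widehat{g}_B$. The construction of $\widehat{g}_A$ and $\widehat{g}_B$ then makes this impossible. Concretely, $\widehat{g}_A$ is supported in $[-3,0]$ with both endpoints equal to $p_{-3}=-3$, while $\widehat{g}_B$ is supported in $[0,3]$ with both endpoints equal to $p_{3}=3$; in particular, $p_{-3}$ is at distance at least $3$ from every point in the support of $\widehat{g}_B$, and symmetrically for $p_{3}$.

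For case (i), suppose a prefix $\chi$ of $\psi$ equals a suffix of $\phi$. Restricting a reparameterization witnessing $d_F(\widehat{g}_A,\psi)<r$ to $\chi$ yields a prefix $\alpha$ of $\widehat{g}_A$ with $d_F(\alpha,\chi)<r$; restricting a reparameterization witnessing $d_F(\widehat{g}_B,\phi)<r$ to $\chi$ yields a suffix $\omega$ of $\widehat{g}_B$ with $d_F(\omega,\chi)<r$. The triangle inequality then gives $d_F(\alpha,\omega)<3$. But any continuous reparameterization must align the starting point $\alpha(0)=p_{-3}=-3$ with the starting point of $\omega$, which lies in $[0,3]$; hence $d_F(\alpha,\omega)\ge 3$, a contradiction. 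Case (ii) is symmetric. For cases (iii) and (iv), assume without loss of generality that $\psi$ is a subcurve of $\phi$. Restricting the matching between $\widehat{g}_B$ and $\phi$ to this subcurve yields a subcurve $\omega$ of $\widehat{g}_B$ within Fréchet distance $<r$ of $\psi$. Combined with $d_F(\widehat{g}_A,\psi)<r$, the triangle inequality gives $d_F(\widehat{g}_A,\omega)<3$; again the starting point $\widehat{g}_A(0)=p_{-3}$ must be paired with a point of $\omega\subseteq[0,3]$, forcing $d_F(\widehat{g}_A,\omega)\ge 3$.

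The main subtlety is to verify that the restriction of an optimal (or near-optimal) Fréchet reparameterization to a prefix/suffix/subinterval on one curve induces a valid reparameterization between the corresponding restricted subcurves of $\widehat{g}_A$ and of $\psi$ (respectively of $\widehat{g}_B$ and of $\phi$), so that the triangle inequality can be invoked on each restricted pair with the strict bound $<r$. A minor secondary point is the degenerate case in which the common portion $\chi$ collapses to a single point, for instance $\chi=\{\psi(0)\}=\{\phi(1)\}=:\{x\}$ in case~(i); here one can verify the contradiction directly, since the hypotheses force both $x\in(-4.5,-1.5)$ (from $|x-\widehat{g}_A(0)|<1.5$) and $x\in(1.5,4.5)$ (from $|x-\widehat{g}_B(1)|<1.5$), and these intervals are disjoint. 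Analogous direct checks cover the degenerate versions of (ii)--(iv).
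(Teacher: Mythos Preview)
Your argument is correct, but it is considerably more elaborate than the paper's. The paper dispatches cases (i) and (iii) in one line: since $\widehat{g}_A$ starts at $p_{-3}$, the starting point $q$ of $\psi$ must satisfy $q<-1.5$; on the other hand, every point of $\widehat{g}_B$ lies in $[0,3]$, so every point of $\phi$ (being within $<1.5$ of some point of $\widehat{g}_B$) lies in $(-1.5,4.5)$ and hence $q$ cannot appear anywhere on $\phi$. This already rules out any nonempty shared prefix of $\psi$ and any occurrence of $\psi$ as a subcurve of $\phi$; cases (ii) and (iv) are symmetric.

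Your triangle-inequality route, patterned on Lemma~\ref{lem:no:overlap:count}, reaches the same conclusion via $d_F(\alpha,\omega)<2r=3$ versus $|\alpha(0)-\omega(0)|\ge 3$. The gain is uniformity with the 2D proof; the cost is that you have to verify the restriction of reparameterizations and separately handle the degenerate single-point overlap, whereas the paper's argument needs none of that machinery because in 1D the supports $[-3,0]$ and $[0,3]$ of $\widehat{g}_A$ and $\widehat{g}_B$ already separate the $r$-neighborhoods of the relevant endpoints.
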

\begin{proof}
We show the proof for $(i)$ and $(iii)$ in detail. The rest can be shown analogously. The coordinate of the first point $q$ in of $\psi$ must be strictly less than $-1.5$, since $\widehat{g}_A$ starts at $p_{-3}$. On the other hand, $\phi$ cannot contain $q$. Indeed, there is no point in $\widehat{g}_B$ that would be within distance $r$ to $q$. Therefore $(i)$ and $(iii)$ are false.
\end{proof}
The following lemma can be proven analogously to Lemma~\ref{lem:1D-discrete-approx-app}.
\begin{lemma}
    For any instance of the {\sc SCS} problem, if there exists a center curve of size at most $\ell$ within discrete Fr\'echet distance $\delta$ from all the curves $\gamma(s_{i})$, where $\delta < 1.5$, then there exists a center curve of size at most $\ell$ within discrete Fr\'echet distance $1$ from all the curves $\gamma(s_{i})$.
\end{lemma}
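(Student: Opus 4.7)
The plan is to mimic the proof of Lemma~\ref{lem:1D-discrete-approx-app} in the continuous setting (the lemma statement's ``discrete'' appears to be a typo, as the preceding construction and the threshold $\delta<1.5$ are those of Section~\ref{sec:1Dhard-cont-app}). Given a center curve $c$ with $d_F(c,\gamma(s_i))\le\delta<1.5$ for every $i$, I construct $c^*$ of the same complexity by vertex-wise snapping: each vertex of $c$ with coordinate $x$ is replaced by a vertex at $p_{-2}$ if $x<-1$, at $p_{2}$ if $x>1$, and at $p_0$ if $-1\le x\le 1$. Trivially $|c^*|=|c|\le\ell$; the substantive task is to show $d_F(c^*,\gamma(s_i))\le 1$ for every $i$.

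The structural analysis runs parallel to the discrete case. Fix $i$ and consider a reparameterization realizing $d_F(c,\gamma(s_i))\le\delta$. By Lemma~\ref{lem:no:overlap:1D:cont-app}, the sub-intervals of the parameter domain of $c$ matching distinct letter gadgets $\widehat{g}_A$ and $\widehat{g}_B$ of $\gamma(s_i)$ are pairwise disjoint. Any point of $c$ matched to a letter-gadget vertex $p_{\pm 3}$ has coordinate of absolute value strictly greater than $3-\delta>1.5$, and in particular greater than $1$, so the corresponding vertex is snapped to $p_{\pm 2}$. Moreover, since a single straight edge of $c$ joining two letter-type vertices has monotone coordinate in 1D, it cannot alone cover an entire intervening buffer gadget of $\gamma(s_i)$ (which contains $\Omega(t)$ oscillations between coordinate $\pm 1$ and $0$). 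Hence between any two letter-matching vertices of $c$ there lies at least one vertex snapped to $p_0$, and $c^*$ inherits a ``letter/buffer alternation'' structure compatible with that of $\gamma(s_i)$.

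Using this structure, I exhibit a fresh reparameterization $\phi_i^*$ of $c^*$ versus $\gamma(s_i)$ that achieves pointwise distance at most $1$. It aligns each letter-type block of $c^*$ (vertices snapped to $p_{\pm 2}$) with the corresponding letter gadget of $\gamma(s_i)$, and each buffer-type block (vertices at $p_0$) with the corresponding buffer gadget. Within each aligned block, all relevant vertex-to-vertex pointwise distances, namely $|p_{\pm 2}-p_{\pm 3}|=1$, $|p_{\pm 2}-p_{\pm 1}|=1$, $|p_0-p_{\pm 1}|=1$, and $|p_0-p_0|=0$, are at most $1$, and intermediate points on the linearly interpolated edges of $c^*$ are paired to the corresponding linearly interpolated points of $\gamma(s_i)$ via shared linear proportion.

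The main obstacle is handling the transition edges of $c^*$ between adjacent letter and buffer blocks, which must be matched to sub-curves of $\gamma(s_i)$ that themselves transition between a buffer vertex and a letter vertex (for example, an edge $p_1\to p_{-3}$ of $\gamma(s_i)$ that sweeps coordinate range $[-3,1]$). A naive reuse of the reparameterization of $c$ can fail on such a transition, so $\phi_i^*$ must be built afresh there. In 1D the Fréchet distance between two line segments under the natural linear-proportion pairing equals the maximum of the coordinate displacements at their endpoints; verifying that in each structurally admissible case (namely $p_{-2}\leftrightarrow p_{-3}$ paired with $p_0\leftrightarrow p_1$, $p_{-2}\leftrightarrow p_{-1}$ paired with $p_0\leftrightarrow p_0$, and the four symmetric variants) this maximum is at most $1$ constitutes the essential case analysis.
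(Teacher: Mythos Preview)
Your overall strategy---vertex-wise snapping followed by a fresh reparameterization---is indeed the ``analogous'' argument the paper intends, and you correctly flag the typo in the lemma statement. But the snapping threshold $\pm 1$ you import verbatim from the discrete case does not work for the continuous construction, and the error stems from overlooking the internal structure of the continuous letter gadgets.

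In the construction of Section~\ref{sec:1Dhard-cont-app}, the letter gadget $\widehat g_A=(p_{-3}\,p_0)^{t-1}p_{-3}$ is not a single point at $p_{-3}$ (as $g_A$ is in the discrete case) but oscillates $t$ times between $p_{-3}$ and $p_0$. A center curve $c$ with $d_F(c,\gamma(s_i))<1.5$ may match one oscillation $p_{-3},p_0,p_{-3}$ using three consecutive vertices at, say, $-4,\,-1.4,\,-4$ (pointwise distances $1,\,1.4,\,1$, all below $1.5$). Under your rule all three satisfy $x<-1$ and snap to $-2$, so $c^*$ is locally constant at $-2$. But then $c^*$ cannot be matched to the $p_0$ vertex of $\widehat g_A$ within distance $1$, since that requires $c^*$ to reach $[-1,1]$, which it no longer does in this block. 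No fresh reparameterization can recover the lost oscillation. Your claim that ``between any two letter-matching vertices of $c$ there lies at least one vertex snapped to $p_0$'' is exactly what fails: the valley vertex at $-1.4$ is the one matched to $p_0$, yet your threshold sends it to $p_{-2}$. Relatedly, your proposed block alignment (``letter-type block of $c^*$ $\leftrightarrow$ letter gadget of $\gamma(s_i)$'') cannot work as stated, because a constant block at $-2$ has Fr\'echet distance $2$ to $\widehat g_A$.

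The fix is to place the threshold at $\pm 1.5$ rather than $\pm 1$: a point of $c$ matched to $p_{\pm 3}$ has $|x|>3-\delta>1.5$, while one matched to a $p_0$ vertex inside a letter gadget has $|x|<\delta<1.5$, so this threshold separates the two cases and preserves the oscillation count of $c^*$ across each $\widehat g_A$ and $\widehat g_B$. With that correction the block-by-block reparameterization you sketch can be carried out, though you should then also revisit the transition-edge case analysis, since the set of ``structurally admissible'' transitions changes once the letter gadgets themselves contain $p_0$ vertices.
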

As a corollary we obtain the following theorem.
%\begin{theorem}\label{thm:1dhardness:cont-app}
%    The \klc[1] problem in 1D for the continuous Fr\'echet distance is $\mathsf{NP}$-hard to approximate within approximation factor $1.5-\varepsilon$.
%\end{theorem}

\hardonedcontappx*

\section{MEB for continuous Fr\'echet Distance}\label{sec:meb-app}
Finally, we show that the problem of finding a minimum enclosing ball (MEB) is also hard for the continuous Fr\'echet distance. To do this, we use the same construction from Section~\ref{sec:meb}. All that is left is to show is that $(S,t)$ is a true instance of SCS if and only if there exist $(j,j')\in I_t$ such that $(G\cup R_{j,j'}, 1 )$ is a true instance of MEB for the continuous Fr\'echet distance.

\begin{lemma}
If $(S,t)$ is a true instance of SCS, then there exists $(j,j')\in I_t$ such that $(G\cup R_{j,j'}, 1)$ is a true instance of MEB for the continuous Fr\'echet distance.
\end{lemma}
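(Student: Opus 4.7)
The plan is to reuse the construction from the proof of Lemma~\ref{lem:SCS-implies-j-k-center-curve} essentially verbatim, and derive the continuous bound from the discrete one via the standard inequality $d_F(\gamma,\gamma') \leq d_{DF}(\gamma,\gamma')$ for polygonal curves in Euclidean space. First, I would pad the supersequence $s^*$ to have length exactly $t$, set $j$ to be the number of $A$'s and $j'$ the number of $B$'s in $s^*$ so that $(j,j') \in I_t$, and take $c^*$ to be the same length-$(2t+1)$ center curve defined in the proof of Lemma~\ref{lem:SCS-implies-j-k-center-curve}: it begins and ends at $p_0$, has a vertex at $p_{-2}$ for every $A$ in $s^*$, a vertex at $p_2$ for every $B$, and a $p_0$ between every pair of consecutive letter-vertices. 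By Lemma~\ref{lem:SCS-implies-j-k-center-curve} we already have $d_{DF}(g, c^*) \leq 1$ for every $g \in G \cup R_{j,j'}$.

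The key step is then to verify that each discrete alignment produced in the proof of Lemma~\ref{lem:SCS-implies-j-k-center-curve} lifts to a continuous reparameterization with the same distance bound. Given such a discrete alignment between some $g \in G \cup R_{j,j'}$ and $c^*$ whose matched-vertex distances are all at most $1$, I would build a reparameterization that traverses successive matched edges simultaneously and linearly, while approximating each ``stay'' step (where one side stays at a vertex while the other advances an edge) by a reparameterization that advances only very slowly on the stationary side. Because the Euclidean norm of an affine map is convex, the distance between two points each moving linearly on a segment is maximized at an endpoint, so the maximum distance over each such interval is at most the corresponding matched-vertex bound of $1$. Taking the infimum as the slow-advance parameter tends to $0$ yields $d_F(g, c^*) \leq 1$.

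Applying this to every $g \in G \cup R_{j,j'}$ yields that $(G \cup R_{j,j'}, 1)$ is a true instance of MEB under the continuous Fr\'echet distance. The only subtlety is the strict injectivity requirement on the reparameterization $f$ in the paper's definition of $d_F$, which forbids literal pauses; this is handled cleanly by the infimum argument above, so no new geometric gadgetry beyond that of Lemma~\ref{lem:SCS-implies-j-k-center-curve} is needed, and I do not foresee any significant obstacle.
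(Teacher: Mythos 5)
Your proof is correct and takes essentially the same approach as the paper: reuse the center curve $c^*$ from Lemma~\ref{lem:SCS-implies-j-k-center-curve} together with the discrete bound already established there, then conclude via the inequality $d_F(\gamma,\gamma') \leq d_{DF}(\gamma,\gamma')$. The only difference is that the paper simply cites this inequality as a standard fact, whereas you spell out the (correct) lifting argument that justifies it.
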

\begin{proof}
Since $(S,t)$ is a true instance of SCS, there exists $(j,j')\in I_t$ such that for the sequence $c^*$ constructed in Lemma~\ref{lem:SCS-implies-j-k-center-curve}, $d_{DF}(c^*,g)\leq 1$ for all $g\in (G\cup R_{j,j'})$. Since the discrete Fr\'echet distance is an upper bound for the continuous Fr\'echet distance, $d_{F}(c^*,g)\leq d_{DF}(c^*,g)\leq 1$ for all $g\in (G\cup R_{j,j'})$.
\end{proof}

\begin{lemma}
If there exists $(j,j')\in I_t$ such that $(G\cup R_{j,j'},1)$ is a true instance of MEB for the continuous Fr\'echet distance, then $(S,t)$ is a true instance of SCS.
\end{lemma}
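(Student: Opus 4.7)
The plan is to adapt the discrete argument in Lemma~\ref{lem:j-k-center-curve-implies-SCS} to the continuous Fr\'echet setting by working with parameter intervals on $c^*$ in place of matched vertices. Let $c^*$ be a center curve realising $d_F(c^*, g) \leq 1$ for every $g \in G \cup R_{j,j'}$, and fix Fr\'echet-achieving reparameterizations for the matchings of $c^*$ with $A^j$, $B^{j'}$, and each $\gamma(s_i)$.

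First, I would use the matching with $A^j = p_1(p_{-3} p_1)^j$ to define $j$ A-parts in the parameter space of $c^*$. For each $i \in \{1,\dots,j\}$, let $P^A_i$ be the image on $[0,1]_{c^*}$ of the preimage of the $i$-th $p_{-3}$ vertex of $A^j$; this is a closed (possibly degenerate) parameter interval on which $c^*$ has coordinate in $[-4,-2]$. Analogously, the $p_1$ vertex between the $i$-th and $(i{+}1)$-th $p_{-3}$ yields an image $Q^A_i$ on which $c^*$ has coordinate in $[0,2]$. By monotonicity of the reparameterizations the intervals appear in the order $P^A_1, Q^A_1, P^A_2, \dots, P^A_j$ along $[0,1]_{c^*}$, so in particular the $P^A_i$ are pairwise disjoint. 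The symmetric construction with $B^{j'}$ yields $j'$ disjoint B-parts $P^B_k$ on which $c^*$ has coordinate in $[2,4]$. Since the coordinate windows $[-4,-2]$ and $[2,4]$ are disjoint, every A-part is disjoint from every B-part, and the $j+j' = t$ letter parts admit a linear order along $[0,1]_{c^*}$. Reading off the letter of each part in this order yields a string $s^*$ of length exactly $t$.

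Next, I would show that each $s_i \in S$ is a subsequence of $s^*$. Fix the Fr\'echet matching between $\gamma(s_i)$ and $c^*$. Each letter gadget of $\gamma(s_i)$ is a single vertex at $p_{-3}$ or $p_3$, and its image in $[0,1]_{c^*}$ is a closed parameter interval on which $c^*$ has coordinate in $[-4,-2]$ or $[2,4]$ respectively; being connected, such an image lies within a single A-region or B-region of $c^*$. Two consecutive letter gadgets in $\gamma(s_i)$ are separated by a buffer gadget whose interior contains a point of coordinate $0$ (on an edge between $p_{-1}$ and $p_1$); under the matching the image of this point has $c^*$-coordinate in $[-1,1]$, hence lies strictly outside every letter part. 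Together with the monotonicity of the matching, this forces consecutive letter gadgets of $\gamma(s_i)$ to map to \emph{distinct} letter parts of $c^*$ of the correct type, exactly the condition that $s_i$ is a subsequence of $s^*$.

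The main obstacle will be to carefully verify that each letter-gadget image lies inside one of the $j+j'$ primary parts $P^A_i, P^B_k$, rather than in some additional coordinate-extreme excursion of $c^*$ not captured by the $A^j$ or $B^{j'}$ matching. I plan to argue that any such ``extra'' excursion must be matched, under $A^j$ or $B^{j'}$, to the immediate neighbourhood of a single $p_{-3}$ or $p_3$ vertex already assigned to a primary part, so that any letter gadget of $\gamma(s_i)$ landing in it can be consistently identified with the corresponding primary part without violating ordering or multiplicity. Equivalently, one can assume without loss of generality that $c^*$ has exactly $j$ maximal parameter intervals with coordinate $\leq -2$ and exactly $j'$ with coordinate $\geq 2$, by flattening any such spurious excursion into the adjacent primary part while keeping all three matchings valid. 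Once this identification is in place, $|s^*| = t$ together with the fact that $s_i$ is a subsequence of $s^*$ for every $i$ exhibits a supersequence of $S$ of length at most $t$, proving that $(S,t)$ is a true SCS instance.
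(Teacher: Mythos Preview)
Your approach is workable but misses the paper's key shortcut, which makes the ``extra excursion'' obstacle disappear. The paper does not define the A-parts using the $A^j$ matching alone; instead it uses the $B^{j'}$ matching to constrain the A-parts and vice versa. Concretely: a point $p$ on $c^*$ matched to a $p_{-3}$ vertex of $A^j$ has coordinate in $[-4,-2]$, but the \emph{same} point is also matched (under the $B^{j'}$ parameterization) to some point on $B^{j'}$, whose image lies in $[-1,3]$; hence $p$ is at exactly $-2$ and its $B^{j'}$-partner is a $p_{-1}$ vertex. Since both edges of $B^{j'}$ incident to that $p_{-1}$ vertex head toward $p_3$, the neighbourhood of $p$ on $c^*$ cannot go below $-2$, forcing $p$ to be a vertex of $c^*$ (or to sit on a flat segment at $-2$). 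This pins $c^*$ into the strip $[-2,2]$ and gives exactly $j$ A-parts and $j'$ B-parts as genuine vertices, with no spurious excursions to identify. Your route---A-parts from $A^j$ alone, B-parts from $B^{j'}$ alone---does leave open the possibility of multiple maximal $\{c^*\le -2\}$ intervals associated with a single $p_{-3}$ of $A^j$; your merging plan can be made rigorous (between two such intervals the $A^j$ matching forces $c^*\in[-2,0]$, so no B-part can intervene and the linear order is preserved), but this is precisely the extra bookkeeping the cross-constraint eliminates.
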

\begin{proof}
Let $c^*$ be a center curve such that $d_F(g,c^*)\leq 1$ for all $g\in (G\cup R_{j,j'},1)$. For each $p_{-3}$ vertex on $A^j$, there is some point $p$ on the center curve $c^*$ matched to it. Since $p$ has distance at most $1$ to $p_{-3}$ and some point on $B^{j'}$, the point $p$ must be at $p_{-2}$ and the matched point on $B^{j'}$ is a vertex at $p_0$. Since the line segments incident to $p_0$ in $B^{j'}$ both have a distance to $p$ strictly greater than $1$, the neighbourhood around $p$ on $c^*$ cannot be further from $p_0$ than $p$. This  means that either $p$ is a vertex at $p_{-2}$ or lies on a segment connecting two vertices at $p_{-2}$ that are also matched to $p_{-3}$. So, there is an A-part on $c^*$ for every $p_{-3}$ vertex on $A^j$, containing a vertex at $p_{-2}$. Analogously, there is a B-part on $c^*$ for every $p_3$ vertex on $B^{j'}$, containing a vertex at $p_2$. 

Now, analogously to Lemma~\ref{lem:j-k-center-curve-implies-SCS}, we can create a sequence $s^*$ of length $t$ from $c^*$ by replacing the A-parts with the $A$ character and B-parts with $B$ characters. As the letter gadgets in the curves $\gamma(s_i)$ must be matched to some vertex at $p_2$ or $p_{-2}$ representing the A-or B-part, we get that the sequence $s_i\in S$ is a subsequence of $s^*$.
\end{proof}
So, we can combine these two lemmas into the following theorem.

\hardonedMEBcont*

% \input{appendix}

%\begin{thebibliography}{00}
%\bibitem{RU81} K.-J. R\"aih\"a and E. Ukkonen. The shortest common supersequence problem over binary alphabet is NP-complete. In \emph{Theoretical Computer Science}, Vol. 16, No. 2, pp. 187--198, 1981.
%\end{thebibliography}
%
\end{document}